\documentclass[10pt, oneside]{article}

\usepackage[margin=0.8in]{geometry}
\usepackage[parfill]{parskip}
\usepackage{amsmath, amssymb, amsthm}
\usepackage{mathtools}
\usepackage{enumitem}
\usepackage{array}
\usepackage{pgfplots}
\usepackage{amssymb}
\pgfplotsset{compat=1.18}
\usepgfplotslibrary{groupplots}
\usepackage{caption}
\usepackage{hyperref}
\usepackage{setspace}
\usepackage{titlesec}
\hypersetup{
  pdftitle={Multitable: A Perfect and Efficient Hash Table from Multicollisions},
  pdfkeywords={hash table, hash map, hash set, multicollision, dictionary}
}

\titlespacing*{\paragraph}{0pt}{0ex plus 1ex minus 0.2ex}{1em}

\newcommand{\appproof}[1]{proof in \hyperref[#1]{Appendix~\ref*{app:proofs}}}

\newtheorem{theorem}{Theorem}
\newtheorem{lemma}{Lemma}
\newtheorem{corollary}{Corollary}
\newtheorem{proposition}{Proposition}
\theoremstyle{definition}
\newtheorem*{model}{Model}
\newtheorem{definition}{Definition}
\theoremstyle{remark}
\newtheorem{remark}{Remark}

\DeclareMathOperator{\Bin}{Bin}
\DeclareMathOperator{\Pois}{Poisson}
\newcommand{\E}{\mathbb{E}}
\newcommand{\PR}{\mathbb{P}}

\newcommand{\1}{\mathbf{1}}
\newcommand{\ind}[1]{\1_{\{#1\}}}
\newcommand{\pos}[1]{\left(#1\right)^{+}}

\title{MultiTable:\\ \large{A Faster Hash Table at any Physical Load Factor up to and Including One}}
\author{Maksym Petkus\footnote{me@petkus.info, $\mathbb{X}$ @maksympetkus}}
\date{}

\begin{document}

\maketitle

\begin{abstract}
We present \emph{multitable} and its Rust reference implementation: a
stable hash table both materially faster at equal physical memory and more flexible
than the SwissTable in its Rust's hashbrown implementation.
As an
arithmetic mean over 84 configurations it delivers $\mathbf{2.1\times}$
hashbrown's throughput when both hash the same raw bytes and
$\mathbf{1.9\times}$ when hashbrown is keyed on native integers, its best
case; on negative lookups alone, $3.2\times$ and $2.9\times$.

Multitable reaches \textbf{any physical
load factor} up to and \textbf{including one} ($0.9999$ demonstrated), exactly for the requested capacity, compared to hashbrown which doubles at $0.777$ for 4-byte keys and values. At $75\%$ saturation of hashbrown (assumed average case of its rigid ladder) and multitable sized to $0.97$ physical load factor, hashbrown takes $66\%$ more space. 
The lookup probe count has no cliff as the load factor approaches one. 
Bucket size, physical load factor, and failure budget are parameters, and the multitable can be grown without rehashing.

We implement two variants of multitable: plain and filtered. 
At equal physical memory on an Apple M2 Pro the filtered multitable leads hashbrown in all $84$ insert, hit, and miss configurations. 
Multitable is more \textbf{memory-efficient}, at equal mixed-lookup throughput on the map of $4$-byte keys and values the filtered multitable needs up to $12\%$ fewer bytes than hashbrown, and the plain multitable is $18\%$ smaller, holding $\mathbf{22\%}$ more keys in the same memory.
\end{abstract}

\paragraph{Keywords:} hash table, hash map, hash set, multicollision, dictionary

\section{Introduction}\label{sec:intro}

A hash table is often the most used data structure in a program
(hashbrown~\cite{hashbrown2026crate}, behind Rust's
standard \texttt{HashMap} and \texttt{HashSet}, is the most downloaded
crate on crates.io at over two billion downloads), and its
footprint is often the largest. Whether it fits in cache or
spills to RAM, fits in RAM or spills to SSD, matters more than any
constant-factor tuning of the probe sequence, because each storage-tier
boundary is a performance cliff, and the table's true size in bytes
decides on which side we land. That true size is not what the
advertised load factor measures. Hashbrown, the Rust port of Google's
SwissTable~\cite{abseil2018swisstables,kulukundis2017cppcon}, has a
maximum load
factor of $7/8=0.875$, but that counts occupied \emph{slots}. Each
slot carries one control byte, so with 4-byte keys and
4-byte values only $8$ of every $9$ allocated bytes can hold payload.
At peak slot load the \emph{physical} load factor, payload bytes over
allocated bytes, is
\begin{equation}\label{eq:physical-load}
\frac{8}{9}\cdot\frac{7}{8}=\frac{7}{9}\approx0.777
\quad\text{(maps)},\qquad
\frac{4}{5}\cdot\frac{7}{8}=0.70
\quad\text{(4-byte sets)}.
\end{equation}
Hence hashbrown's footprint is about $29\%$ (maps) and $43\%$ (sets)
larger than the payload it stores, and that is at \emph{peak}
utilization, before the control array's trailing padding group and
allocator rounding widen the gap. Worse, the bucket count is
a power of two and doubles the moment slot load reaches $7/8$, so
right after a resize the physical load factor is about $0.39$ (maps)
or $0.35$ (sets), for example a 20GB table would need to step up to 40GB. The resize itself rehashes every key, a latency and memory
spike proportional to the table size. And a table serving as a cache
needs timestamps or auxiliary bookkeeping to evict the oldest entries
in bulk.

What this overhead costs depends on the setting:
\begin{itemize}[leftmargin=3em,nosep]
\item \textbf{Storage tiers.} A cache hit costs nanoseconds, a DRAM
access on the order of $100$ nanoseconds, an SSD read tens of
microseconds, so a few percent of footprint that moves the working
set across a boundary is worth orders of magnitude in latency. On the
fast side the trade tilts further: the multitable can run
plain, without per-slot metadata, at physical load factor up to
$1$,
(Figure~\ref{fig:bench-lfsweep}, Remark~\ref{rem:perfect-example}).

\item \textbf{Fixed-memory devices.} Microcontrollers and network
silicon have no next tier: flow, session, and routing tables live in
SRAM sized at design time. The load factor sets how many entries the
die supports, and the sizing must hold with a stated failure
probability, not on average, because rehashing on chip is not an
option.
\item \textbf{Nested tables.} Slack multiplies when tables nest inline,
as in a graph stored as a table of vertices each holding a
table of neighbors, or in pointer-free serialized indexes. With
physical load factor $\varphi$ at every level and the nested tables
dominating the payload, a depth-$k$ nesting stores about $\varphi^k$
payload bytes per allocated byte. Raising $\varphi$ from hashbrown's
$0.777$ to a plain multitable's $0.97$ shrinks the structure by
$1.25\times$ at depth $1$, $1.6\times$ at depth $2$, $1.9\times$ at
depth $3$, and $2.4\times$ at depth $4$, assuming an unlikely 100\% saturation of hashbrown's capacity; assuming $75\%$ saturation multitable is $7.7\times$ smaller at depth 4.
\item \textbf{Garbage-collected heaps.} A chained map puts one small
heap object per entry, and marking work scales with the live-object
count. In runtimes that store entries unboxed, a flat table at
high load holds the same payload in a few large allocations, and the
saved bytes become headroom that lowers collection frequency.
\item \textbf{RAM-resident fleets.} In-memory key-value stores buy DRAM
by the server. When memory is the binding resource, bytes per key set
the machine count, and a $20\%$ smaller table cuts the fleet by up to
$20\%$, in dollars and in watts. Moreover, multitable's agility allows to utilize all of the available memory with no performance penalty.
\end{itemize}

Multitable closes the gap between advertised and physical load. 
Keys hash into fixed-size buckets sized
to the hardware's optimal access size, and the few keys that overflow
a bucket are rehashed into a second, smaller level of identical
structure, whose overflow feeds a third, and so on until the
remainder fits in a single bucket. Because every level is plain
multicollision distribution, binomial tail bounds size each
level at a chosen failure probability. Multitable is \textbf{stable}, i.e., inserted keys don't move and safe to be referenced.

Our contributions are:
\begin{enumerate}[leftmargin=3em,nosep]
\item \textbf{A perfect table} (Section~\ref{sec:perfect}): capacity
exactly $q$ (a multiple of $s$); with probability at least
$1-(L-1)\delta$ all $q$ keys
stored and \emph{every slot occupied} (Theorem~\ref{thm:perfect});
exact expected lookup time
formula (Proposition~\ref{prop:lookup}).
\item \textbf{An efficient table} (Section~\ref{sec:efficient}): load
factor and bucket size are configuration parameters, any target
$\alpha<1$ at any $s$ (Theorem~\ref{thm:efficient}), while
SwissTable/hashbrown is pinned at $7/8$ slot load, $0.777$ physical
for maps \eqref{eq:physical-load}; depth $O(\log q)$
(Proposition~\ref{prop:depth}) with $O(1)$ expected positive-lookup
probes (Corollary~\ref{cor:probes}, Appendix~\ref{app:asymptotics});
depth reduction by exactly
quantified truncation, Chernoff tail bounds, and a near-exact overflow
approximation.
\item \textbf{No cliff near full load} (Section~\ref{sec:quantiles},
Figure~\ref{fig:lookup}): a lookup probes at most one bucket per
level, and the level count grows slowly with the load, whereas under
linear probing the expected cost of a miss grows without bound, as
$1/(1-\alpha)^2$~\cite{knuth1998taocp3}, and SwissTable and
hashbrown stay clear of that region by resizing at slot load $7/8$. The plain
multitable's measured throughput declines gradually up to physical
load $0.977$ (Section~\ref{sec:performance},
Figure~\ref{fig:bench-lfsweep}).
\item \textbf{An improved multicollision bound}
(Appendix~\ref{app:multicollision}), never weaker than the classical
$s$-subset bound and stronger by orders of magnitude once samples
outnumber bins (a factor of $2000$ at $q=10n$, $s=24$).
\item \textbf{Resizing \& cache tables} (Section~\ref{sec:resizing}): incremental
resizing and wrap-around cache eviction.
\item \textbf{Engineering}
(Sections~\ref{sec:implementation}): fixed-size,
SIMD-scanned buckets, byte filters using all 8 bits, and a Rust reference
implementation. At equal physical memory on an Apple M2 Pro the
filtered multitable beats hashbrown in all $84$ insert, hit, and miss
configurations (maps and sets, $4$- and $16$-byte keys, seven
sizes per shape from $3.5\cdot10^3$ to $2.3\cdot10^8$ keys):
$2.1\times$ hashbrown's throughput
with both tables keyed on the same raw bytes and $1.9\times$ with
hashbrown keyed on native integers, its best case, arithmetic means
over the $84$ cells, with negative lookups at $3.2\times$ and
$2.9\times$. At equal mixed-lookup throughput on the $4{+}4$ map the
filtered multitable needs up to $12\%$ fewer bytes and, in cache, the
plain multitable $18\%$ fewer ($22\%$ more keys); against a hashbrown
that has just doubled, a multitable at physical load $0.77$ to
$0.977$ holds $2$ to $2.5\times$ the keys in the same bytes
(Section~\ref{sec:performance}).
\item \textbf{A framed variant} (Section~\ref{sec:variations}): keeping each
lookup within one region of locality.
\item \textbf{Deletion churn strategies} (Section~\ref{sec:churn}): deletes are
supported, and for churn at capacity we give three strategies, suited
to light, heavier and continuous churn: 1) hoisting the deep levels
back into level~$1$ in place, which gives up stability only during
the pass, 2) growing by a level as the tail fills, and 3) sizing for a
raised load factor; the last two are \emph{stable}.
\end{enumerate}

\paragraph{Related work.}
Knuth~\cite{knuth1998taocp3} is the classical reference. Cuckoo hashing
\cite{pagh2004cuckoo} reaches high slot loads with constant worst-case
lookups at the cost of moving keys on insertion. Iceberg hashing
\cite{bender2023iceberg} achieves load $1-o(1)$ while keeping entries
\emph{stable}, a property multitable shares. 
Our
engineering baseline throughout is SwissTable with its Rust port
hashbrown.

The closest relative is funnel hashing, the greedy scheme of
Farach-Colton, Krapivin and Kuszmaul~\cite{farachcolton2025funnel}:
geometrically shrinking arrays of fixed-size buckets, each key
probing one bucket per array and moving on when it is full. The two
designs were developed independently, ours from the earlier
companion work \cite{manifold} that prompted search for efficient storage solution, and are
surprisingly alike, differing in purpose and in
what is left free. Funnel hashing is primarily a theoretical construction whose bucket
size, level count and shrink ratio are fixed by its proof in terms
of the vacancy $1-\alpha$ (their $\delta$), and whose cascade ends
in a special array; the multitable takes bucket size, load factor
and failure budget as parameters matched to hardware and
application, and ends in a single bucket; at load $\alpha=0.9$ the
funnel formulas, with the proof's constants, give $24$ levels plus
the special array where the multitable's $10^5$-key example runs
$5$ levels. Funnel hashing is analyzed for asymptotic probe complexity,
on which it is optimal among greedy schemes, and is presented
without implementation, measurement or treatment of deletions; the multitable is sized for
a load factor at a stated failure probability, judged by physical
load, implemented, and measured ahead of state of the art (hashbrown) at equal memory.
Appendix~\ref{app:funnel} compares the two in detail.

\section{Intuition}\label{sec:intuition}

This work grew out of a companion paper \cite{manifold} on
membership structures built on multicollision hardness, where storage
efficiency was left open (the present paper is self-contained). A
\emph{multicollision} generalizes the birthday collision: an
$s$-collision is $s$ samples landing on one outcome. If we hash $q$
keys uniformly into $n$ buckets, the count in any fixed bucket is a
sum of $q$ independent indicators of probability $1/n$ each, hence
$\Bin(q,1/n)$, and the expected number of buckets holding exactly $k$
keys is
$n\,\PR[\Bin(q,1/n)=k]$ (Proposition~\ref{prop:occupancy} and
linearity of expectation).
Figure~\ref{fig:occupancy} (top left) plots this for a running example,
$n=10$ and $q=100$: occupancy clusters around the mean $\lambda=q/n=10$.

\begin{figure}[t]
\centering
\begin{minipage}[c]{0.48\textwidth}
\centering
\begin{tikzpicture}
\begin{axis}[
    width=\linewidth,
    height=3.4cm,
    ybar,
    bar width=4pt,
    xlabel={$k$ (keys in bucket)},
    ylabel={$10\cdot P(\mathrm{Bin}(100,1/10)=k)$},
    xmin=0, xmax=21,
    xtick={0,5,...,20},
    ymin=0,
    axis lines=left,
    enlarge x limits=0.02,
    tick label style={font=\footnotesize},
    label style={font=\footnotesize},
]
\addplot+[draw=blue!70!black, fill=blue!40, mark=none] coordinates {
(0,0.000266) (1,0.002951) (2,0.016232) (3,0.058916) (4,0.158746) (5,0.338658) (6,0.595787) (7,0.888952) (8,1.148230) (9,1.304163) (10,1.318653) (11,1.198776) (12,0.987880) (13,0.743021) (14,0.513038) (15,0.326824) (16,0.192917) (17,0.105915) (18,0.054265) (19,0.026022) (20,0.011710) (21,0.004957)
};
\end{axis}
\end{tikzpicture}\\[4pt]
\begin{tikzpicture}[y=1cm, x=1cm]
\providecommand{\cascadebucket}{}
\renewcommand{\cascadebucket}[3]{%
  \ifnum#3>0
    \fill[black!25] (#1,#2) rectangle ({#1+0.2*#3},{#2+0.42});
  \fi
  \foreach \k in {1,2,3}
    \draw[black!45, very thin] ({#1+0.2*\k},#2) -- ({#1+0.2*\k},{#2+0.42});
  \draw[black!75, semithick] (#1,#2) rectangle ({#1+0.8},{#2+0.42});
}
\footnotesize
% level 1: 6 buckets
\foreach \x/\occ in {0/4, 0.92/2, 1.84/3, 2.76/4, 3.68/1, 4.6/4}
  \cascadebucket{\x}{0}{\occ}
\node[anchor=east, black!75] at (-0.15,0.21) {level $1$};
\draw[<->, black!60, thin] (0,0.62) -- (0.8,0.62)
  node[midway, above=-1pt, black!75] {$s$ slots};
% level 2: 3 buckets
\foreach \x/\occ in {1.38/3, 2.3/4, 3.22/2}
  \cascadebucket{\x}{-1.05}{\occ}
\node[anchor=east, black!75] at (-0.15,-0.84) {level $2$};
% level 3: 2 buckets
\foreach \x/\occ in {1.84/4, 2.76/1}
  \cascadebucket{\x}{-2.1}{\occ}
\node[anchor=east, black!75] at (-0.15,-1.89) {level $3$};
% one key's journey: in from outside, then full bucket to full bucket
\node[black!75, anchor=south east, inner sep=1pt] at (4.44,0.90) {key};
\draw[->, black!55, semithick] (4.42,0.88) -- (4.95,0.30);
\draw[->, black!55, semithick] (4.70,0.0) -- (2.62,-0.79);
\draw[->, black!55, semithick] (2.60,-1.05) -- (2.26,-1.84);
\draw[->, black!55, semithick] (2.30,-2.12) -- (2.46,-2.62);
\node[black!75] at (2.7,-2.78) {$\vdots$};
% level L: a single bucket
\cascadebucket{2.3}{-3.5}{2}
\node[anchor=east, black!75] at (-0.15,-3.29) {level $L$};
% past the elided levels the key settles in the last bucket
\draw[->, black!55, semithick] (2.50,-2.80) -- (2.82,-3.24);
\end{tikzpicture}
\end{minipage}\hfill
\begin{minipage}[c]{0.48\textwidth}
\centering
\usepgfplotslibrary{groupplots}
\begin{tikzpicture}
\begin{groupplot}[
    group style={group size=1 by 2, vertical sep=0.6cm},
    width=\linewidth,
    height=3.3cm,
    axis lines=left,
    tick label style={font=\footnotesize},
    label style={font=\footnotesize},
]
\nextgroupplot[
    ylabel={min $s$},
    xlabel={$n$},
    ymin=0,
]
\addplot+[mark=*, mark size=1pt, blue] coordinates {
(1,100) (2,69) (3,52) (4,43) (5,37) (6,33) (7,30) (8,28) (9,26) (10,24) (11,23) (12,22) (13,21) (14,20) (15,20) (16,19) (17,18) (18,18) (19,17) (20,17) (21,16) (22,16) (23,16) (24,15) (25,15) (26,15) (27,14) (28,14) (29,14) (30,14)
};
\nextgroupplot[
    ylabel={load factor $q/(ns)$},
    xlabel={$n$},
    ymin=0,
]
\addplot+[mark=*, mark size=1pt, red!70!black] coordinates {
(1,1.000000) (2,0.724638) (3,0.641026) (4,0.581395) (5,0.540541) (6,0.505051) (7,0.476190) (8,0.446429) (9,0.427350) (10,0.416667) (11,0.395257) (12,0.378788) (13,0.366300) (14,0.357143) (15,0.333333) (16,0.328947) (17,0.326797) (18,0.308642) (19,0.309598) (20,0.294118) (21,0.297619) (22,0.284091) (23,0.271739) (24,0.277778) (25,0.266667) (26,0.256410) (27,0.264550) (28,0.255102) (29,0.246305) (30,0.238095)
};
\end{groupplot}
\end{tikzpicture}
\end{minipage}
\caption{Top left: expected number of buckets holding exactly $k$ keys,
$n=10$, $q=100$. Bottom left: the cascade of
Section~\ref{sec:construction}, levels of fixed bucket size $s$ and
shrinking width down to a single bucket at level $L$, filled cells
marking occupied slots; the arrows follow one key from its full
level-$1$ bucket down to a free slot at level $L$. Right: smallest
$s$ with overflow probability below $\delta_0=1.31\cdot10^{-4}$, the
budget of the $s=24$ example, versus $n$, and the resulting load
factor $q/(ns)$.}
\label{fig:occupancy}
\end{figure}

That's the first clue; the simplest table built on it has one
level and no overflow handling. We pick a
bucket size the occupancy is
unlikely to exceed, e.g., $s=24$: the probability that any bucket
overflows is at most
$n\,\PR[\Bin(100,\tfrac1{10})\ge25]\approx1.3\cdot10^{-4}$, about
$0.013\%$. The table is then a flat array of $n\cdot s$ slots: insert
hashes the key to a bucket and writes it into a free slot, lookup
and delete check that bucket, so every operation touches one
bucket: $O(s)$ work, roughly one memory access for hardware-sized $s$.
What does hardware-sized mean?

Every storage medium has a minimum unit
it serves at once: cache lines, DRAM columns, SSD pages, disk
sectors, tape blocks. In DRAM, once a row is activated, reading a few
adjacent columns of it costs about the same as reading one
(a bucket straddling a row boundary, or hitting a closed row, still
pays the activation). So we pick $s$ such that a bucket fills about one
such unit, and a SIMD register compares the whole bucket in about the
time of one comparison. The same logic climbs the hierarchy, from
prefetched cache lines to SSD pages to disk sectors whose positioning latency
dwarfs their reading.

Our trivial construction is fast but wasteful: the load factor is
$q/(ns)=100/240\approx0.417$, because most buckets hold about
$\lambda$ keys while we pay for the outlier everywhere. Relative
variance shrinks as buckets grow: one bucket of size $s = q = 100$ never
overflows and has load factor one, and at two buckets we need
$s=69$ for the same overflow probability as for $s=24$, load factor $0.72$.
Figure~\ref{fig:occupancy} (right) traces this trade across $n$. Why
not big buckets, then? They don't suit the hardware: even where
an open row makes wide reads nearly free, the channel to the CPU and
the SIMD width bound how much we compare per cycle, so smaller reads
stay preferable.

We can do better: shrink the buckets and catch the excess. If we keep
$n=10$ but cut $s$ to $12$, on average only $4.73$ keys fail to fit,
so one spare bucket of the same size absorbs them, and the load
factor becomes $100/(11\cdot12)\approx0.758$, an $82\%$ improvement,
at a cost of a second probe when the home bucket doesn't settle a lookup.
But pushing load or capacity further swells the overflow, and
we're back to one oversized bucket, just relocated. Variable-size
overflow buckets don't help either: we can't know in advance which
home buckets overflow by how much, so we'd need pointers out of
home buckets or entry moves between overflow slabs as
they fill. The first level maps cleanly onto the hardware's atomic
units; the overflow area doesn't.

But why treat overflowing keys differently from the
original ones? They're a smaller instance of the problem we
solved (the spare bucket's few keys in the running example): we
hash them again,
with different seed, into a second
level with the same bucket size and fewer buckets. Rehashing strips
the first-level structure, so the second level is again plain
balls-into-bins; its overflow
cascades to a third level, and so on
until the remainder fits in one final bucket, all levels in one flat
array (Figure~\ref{fig:occupancy}, bottom left). 
Figure~\ref{fig:intro-cascade} traces the cascade on a $10^5$-key
example at load $0.9$, sized as in Section~\ref{sec:quantiles}.

\begin{figure}[t]
\centering
\usetikzlibrary{decorations.pathreplacing}
\begin{tikzpicture}
% ---------- level 1: main panel, plot box 5.6cm x 3.3cm ----------
\begin{axis}[
    at={(0cm,0cm)}, anchor=origin,
    scale only axis, width=5.6cm, height=3.3cm,
    xmin=0, xmax=28, ymin=0, ymax=1,
    axis lines=left,
    xtick={0,8,16,24}, ytick={0,0.5,1},
    yticklabels={$0$,$0.5$,$1$},
    ylabel={fraction of buckets with $>k$ keys},
    tick label style={font=\footnotesize},
    label style={font=\footnotesize},
    clip=false,
]
% empty capacity: background rectangle up to y=1 left of the cutoff;
% the part not painted over by the occupied region below remains visible
\fill[gray!12] (axis cs:0,0) rectangle (axis cs:8,1);
\draw[black!40, very thin] (axis cs:0,1) -- (axis cs:8,1);
% occupied region: area under the staircase left of the cutoff
\addplot[const plot, draw=none, fill=blue!20] coordinates {
(0,0.999999) (1,0.999982) (2,0.999871) (3,0.999365) (4,0.997641) (5,0.992939) (6,0.982252) (7,0.961434) (8,0.961434)
} \closedcycle;
% overflow region: area under the staircase right of the cutoff
\addplot[const plot, draw=none, fill=red!20] coordinates {
(8,0.925949) (9,0.872186) (10,0.798877) (11,0.708002) (12,0.604742) (13,0.496435) (14,0.390950) (15,0.295064) (16,0.213351) (17,0.147814) (18,0.098170) (19,0.062546) (20,0.038260) (21,0.022492) (22,0.012721) (23,0.006928) (24,0.003637) (25,0.001843) (26,0.000902) (27,0.000427) (28,0.000427)
} \closedcycle;
% staircase outline (occupied ++ overflow lists; doubled x=8 = cutoff drop)
\addplot[const plot, draw=black!60, thin] coordinates {
(0,0.999999) (1,0.999982) (2,0.999871) (3,0.999365) (4,0.997641) (5,0.992939) (6,0.982252) (7,0.961434) (8,0.961434)
(8,0.925949) (9,0.872186) (10,0.798877) (11,0.708002) (12,0.604742) (13,0.496435) (14,0.390950) (15,0.295064) (16,0.213351) (17,0.147814) (18,0.098170) (19,0.062546) (20,0.038260) (21,0.022492) (22,0.012721) (23,0.006928) (24,0.003637) (25,0.001843) (26,0.000902) (27,0.000427) (28,0.000427)
};
\end{axis}
% cutoff at x = s = 8 (unit 0.2cm/key: x = 1.6cm), meeting the brace row
\draw[black!75, thin, densely dashed] (1.6,0) -- (1.6,3.42);
% stays / overflows braces over the two x-segments at the cutoff, each
% labeled with its share of the level's input keys
\draw[decorate, decoration={brace, amplitude=3pt}, black!75, thin]
  (0.02,3.42) -- (1.55,3.42)
  node[midway, above=4pt, font=\footnotesize, align=center, black] {stays\\$58.2\%$};
\draw[decorate, decoration={brace, amplitude=3pt}, black!75, thin]
  (1.65,3.42) -- (5.58,3.42)
  node[midway, above=4pt, font=\footnotesize, align=center, black] {overflows\\$41.8\%$};
\node[font=\footnotesize, anchor=south, inner sep=1pt] at (1.6,3.56) {$s$};
% direct region labels: the stays side splits into occupied vs empty
% (the occupancy), both as fractions of capacity
\node[font=\footnotesize, align=center] at (0.8,1.35) {occupied\\$99.2\%$};
\node[font=\footnotesize, anchor=west, inner sep=1pt] at (2.05,3.08) {empty $0.8\%$};
\draw[black!50, very thin] (2.02,3.10) -- (1.45,3.26);
% bottom stack under the panel, mirroring the two-line mini labels:
% level label on the aligned level row, axis label beneath it
\node[font=\footnotesize, anchor=north] at (2.8,-0.42) {level $1$};
\node[font=\footnotesize, anchor=north] at (2.8,-0.78) {keys per bucket $k$};
% ---------- level 2 mini: 41.82% linear scale, box 2.34cm x 1.38cm ----------
\begin{axis}[
    at={(6.75cm,0cm)}, anchor=origin,
    scale only axis, width=2.34cm, height=1.38cm,
    xmin=0, xmax=28, ymin=0, ymax=1,
    axis lines=left, ticks=none,
    axis line style={very thin, -},
]
\fill[gray!12] (axis cs:0,0) rectangle (axis cs:8,1);
\draw[black!40, ultra thin] (axis cs:0,1) -- (axis cs:8,1);
\addplot[const plot, draw=none, fill=blue!20] coordinates {
(0,0.999781) (1,0.997938) (2,0.990172) (3,0.968352) (4,0.922374) (5,0.844874) (6,0.736012) (7,0.604946) (8,0.604946)
} \closedcycle;
\addplot[const plot, draw=none, fill=red!20] coordinates {
(8,0.466876) (9,0.337592) (10,0.228642) (11,0.145177) (12,0.086566) (13,0.048575) (14,0.025708) (15,0.012863) (16,0.006099) (17,0.002746) (18,0.001177) (19,0.000481) (20,0.000188) (21,0.000070) (22,0.000025) (23,0.000009) (24,0.000003) (25,0.000001) (26,0.000000) (27,0.000000) (28,0.000000)
} \closedcycle;
\addplot[const plot, draw=black!60, very thin] coordinates {
(0,0.999781) (1,0.997938) (2,0.990172) (3,0.968352) (4,0.922374) (5,0.844874) (6,0.736012) (7,0.604946) (8,0.604946)
(8,0.466876) (9,0.337592) (10,0.228642) (11,0.145177) (12,0.086566) (13,0.048575) (14,0.025708) (15,0.012863) (16,0.006099) (17,0.002746) (18,0.001177) (19,0.000481) (20,0.000188) (21,0.000070) (22,0.000025) (23,0.000009) (24,0.000003) (25,0.000001) (26,0.000000) (27,0.000000) (28,0.000000)
};
\end{axis}
\draw[black!75, very thin, densely dashed] (7.419,0) -- (7.419,1.38);
\node[font=\footnotesize, align=center, anchor=north] at (7.92,-0.42) {level $2$\\$88.3\%$ full};
% ---------- level 3 mini: 6.76% linear scale ----------
\begin{axis}[
    at={(10.4cm,0cm)}, anchor=origin,
    scale only axis, width=0.758cm, height=0.446cm,  %
    xmin=0, xmax=28, ymin=0, ymax=1,
    axis lines=left, ticks=none,
    axis line style={ultra thin, -},
]
\fill[gray!12] (axis cs:0,0) rectangle (axis cs:8,1);
\addplot[const plot, draw=none, fill=blue!20] coordinates {
(0,0.992000) (1,0.953358) (2,0.860053) (3,0.709876) (4,0.528617) (5,0.353625) (6,0.212861) (7,0.115819) (8,0.115819)
} \closedcycle;
\addplot[const plot, draw=none, fill=red!20] coordinates {
(8,0.057291) (9,0.025918) (10,0.010785) (11,0.004151) (12,0.001484) (13,0.000496) (14,0.000155) (15,0.000046) (16,0.000013) (17,0.000003) (18,0.000001) (19,0.000000) (20,0.000000) (21,0.000000) (22,0.000000) (23,0.000000) (24,0.000000) (25,0.000000) (26,0.000000) (27,0.000000) (28,0.000000)
} \closedcycle;
\addplot[const plot, draw=black!60, ultra thin] coordinates {
(0,0.992000) (1,0.953358) (2,0.860053) (3,0.709876) (4,0.528617) (5,0.353625) (6,0.212861) (7,0.115819) (8,0.115819)
(8,0.057291) (9,0.025918) (10,0.010785) (11,0.004151) (12,0.001484) (13,0.000496) (14,0.000155) (15,0.000046) (16,0.000013) (17,0.000003) (18,0.000001) (19,0.000000) (20,0.000000) (21,0.000000) (22,0.000000) (23,0.000000) (24,0.000000) (25,0.000000) (26,0.000000) (27,0.000000) (28,0.000000)
};
\end{axis}
% solid, not dashed: at this panel size a dash pattern reads as noise
\draw[black!75, ultra thin] (10.618,0) -- (10.618,0.446);
\node[font=\footnotesize, align=center, anchor=north] at (10.59,-0.42) {level $3$\\$59.1\%$ full};
% ---------- level 4 glyph (enlarged; true scale 0.14%): 27.1% full ----------
\fill[gray!12] (11.9,0) rectangle (12.7,0.3);
\fill[blue!20] (11.9,0) rectangle (12.117,0.3);
\draw[black!60, thin] (11.9,0) rectangle (12.7,0.3);
\node[font=\footnotesize, align=center, anchor=north] at (12.3,-0.42) {level $4$\\$27.1\%$ full};
% ---------- flow arrows: expected keys between levels ----------
\draw[->, black!55, semithick] (5.72,0.69) -- (6.65,0.69)
  node[midway, above=1pt, font=\footnotesize, black] {$41816$};
\draw[->, black!55, semithick] (9.2,0.11) -- (10.3,0.11)
  node[midway, above=1pt, font=\footnotesize, black] {$6762$};
\draw[->, black!55, semithick] (10.9,0.15) -- (11.8,0.15)
  node[midway, above=1pt, font=\footnotesize, black] {$141$};
\end{tikzpicture}
\captionsetup{font=footnotesize}
\caption{The cascade of the Section~\ref{sec:quantiles} example,
$q=10^5$ keys, $s=8$, target load $0.9$, overfill $r=1.1$,
$\delta=2^{-10}$, $5$ levels with $(n_i)=(7334,4962,1401,65,1)$
buckets, drawn through level $4$. Each panel plots the fraction of
buckets holding more than $k$ keys in unit-width steps, so each
shaded area is an expected number of keys per bucket: left of the
cutoff at $s$ the keys stay in the level, blue being occupied
capacity and gray empty; beyond it, in red, they overflow into the
next level.
Arrow labels are expected key counts; stays and overflows are
fractions of the level's input, and the occupied, empty and fill
figures are fractions of capacity, all in expectation. Panels $2$
and $3$ are drawn at $41.8\%$ and $6.8\%$ of the first panel's
linear size, their shares of the keys; level $4$, with $0.14\%$, is
an enlarged plain-fill glyph, and level $5$ is a single bucket sized
to receive the remaining keys, at most $8$ with probability
$1-\delta$.}
\label{fig:intro-cascade}
\end{figure}

A last knob reaches load factor one: a level wastes space only through \emph{underfull}
buckets, and concentration controls that slack as well. At $n=10$, $q=100$
over $99\%$ of buckets are expected to hold at least $4$ keys, and
choosing $n$ small relative to the incoming keys makes even one
underfull bucket unlikely, so every slot fills. The cost is more
overflow, hence a deeper cascade. This gives us, in expectation:
1) structurally identical levels, 2) a load factor that is a
parameter, adjustable up to one, 3) positive lookups concentrated at
the top (levels shrink geometrically), 4) negative lookups paying one
bucket per level of a logarithmic cascade. Expectations aren't
guarantees, though: next we replace them with explicit tail bounds,
so the table survives the worst case at a probability we choose.

\section{Construction and analysis}\label{sec:construction}

Section~\ref{sec:model} fixes the probabilistic model,
Section~\ref{sec:perfect} builds the table with load factor exactly
one, Section~\ref{sec:efficient} the table for any target load below
one, and Sections~\ref{sec:depth} and~\ref{sec:quantiles} shorten the
resulting cascade.

\subsection{Model}\label{sec:model}

Throughout the
analysis we work in the random-oracle model: the hash values of distinct keys
are independent, and the values assigned to the same key at
different levels are independent as well. For time analysis we assume constant lookup time in a bucket of size $s$ (due to SIMD). Proofs not given in the body are
collected in Appendix~\ref{app:proofs}. Throughout, $\ind{A}$ is the
indicator of an event or predicate $A$, equal to $1$ when it holds
and $0$ otherwise, and $\pos{x}:=\max(x,0)$. Formally:

\begin{model}[Multilevel table]
A table for $q$ keys consists of levels $1,\dots,L$; level $i$ has
$n_i$ buckets of $s$ slots each. Each key is assigned a bucket at each
level, and all these assignments (across keys and across levels) are
independent and uniform. Keys are inserted level by level: if $X_b$
keys map to bucket $b$ of level $i$, the bucket stores
$\min(X_b,s)$ of them and the remaining $\pos{X_b-s}$ cascade to
level $i+1$. We write
\begin{equation}\label{eq:overflow-def}
O \;=\; \sum_{b=1}^{n}\pos{X_b-s}
\end{equation}
for the overflow of a level with $n$ buckets, and we call a level
\emph{saturated} if every one of its buckets receives at least $s$
keys. Keys that cascade past the last level are lost; we call this
event \emph{terminal overflow}.
\end{model}

Beyond that measurability requirement, which keys are cascaded is
irrelevant: the next level hashes them with independent randomness.
Everything below reduces to two elementary facts,
Proposition~\ref{prop:occupancy} and Lemma~\ref{lem:counts}.

The companion manuscript
\cite{manifold} counts the buckets receiving $s$ or more
keys: when $q$ keys are hashed independently and uniformly into $n$
buckets and $X_b$ denotes the number of keys in bucket $b$,
\begin{equation}\label{eq:companion-count}
\E\Bigl[\sum_b \ind{X_b\ge s}\Bigr]
\;=\; n-\sum_{j=0}^{s-1} n\Bigl(\tfrac1n\Bigr)^{j}\binom{q}{j}
\Bigl(1 - \tfrac{1}{n}\Bigr)^{q-j}.
\end{equation}
Dividing by $n$ gives the probability of any given bucket holding
$s$ or more keys: by symmetry every bucket receives $s$ or more keys
with the same probability,
$1-\sum_{j=0}^{s-1}\binom{q}{j}(\tfrac1n)^{j}(\tfrac{n-1}{n})^{q-j}$,
and that sum is the distribution function of a binomial law. That
form is efficiently computable, CDF and survival function alike, and is what
the tail machinery of this section controls.

\begin{proposition}[Bucket occupancy]\label{prop:occupancy}
Hash $q$ keys independently and uniformly into $n$ buckets, as when
the keys reaching a level with $n$ buckets in the model of
Section~\ref{sec:model} are placed,
and let
$X_b$ be the number of keys in bucket $b$. Then
$(X_1,\dots,X_n)$ is multinomial with $q$ trials and uniform cell
probabilities, and marginally
\begin{equation}\label{eq:binomial-occupancy}
X_b \sim \Bin\!\left(q,\tfrac1n\right)
\qquad\text{for every } b .
\end{equation}
\end{proposition}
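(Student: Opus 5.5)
The plan is to encode the placement of the $q$ keys as $q$ independent categorical random variables and then read off both the joint and the marginal law by direct counting. Let $H_1,\dots,H_q$ be the bucket indices of the keys. In the random-oracle model of Section~\ref{sec:model} they are independent and uniform on $\{1,\dots,n\}$. For a level $i>1$ we first condition on the set of keys that cascaded into it. By the measurability remark after the Model, that set is determined by the randomness of levels $1,\dots,i-1$. The level-$i$ hashes are independent of that randomness, so conditionally on the set, and on its size $q$, the $H_k$ are still i.i.d.\ uniform. It therefore suffices to treat a fixed set of $q$ keys.

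For the joint law, we write $X_b=\sum_{k=1}^q \ind{H_k=b}$. Fix a vector $(x_1,\dots,x_n)$ of nonnegative integers summing to $q$. The event $\{X_b=x_b \ \forall b\}$ is the disjoint union, over all assignments of keys to buckets with these cell counts, of the events $\{H_k=h_k\ \forall k\}$. There are $\binom{q}{x_1,\dots,x_n}$ such assignments. Each has probability $n^{-q}$ by independence and uniformity. Hence
\[
\PR[X_1=x_1,\dots,X_n=x_n]=\binom{q}{x_1,\dots,x_n}\Bigl(\tfrac1n\Bigr)^{q},
\]
which is the multinomial law with $q$ trials and uniform cell probabilities.

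For the marginal \eqref{eq:binomial-occupancy}, we fix $b$ and note that the indicators $\ind{H_k=b}$, $k=1,\dots,q$, are independent, since each is a function of a different $H_k$. Each is Bernoulli with parameter $1/n$. Their sum $X_b$ is therefore $\Bin(q,1/n)$. Alternatively, one can collapse the multinomial by merging all other cells into one, which gives the same result.

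No step is a real obstacle. The only point needing care is the conditioning at deeper levels: the number of keys reaching level $i$ is itself random, so the statement must be read conditionally on the cascaded set. This is justified exactly by the independence of hashes across levels assumed in the Model.
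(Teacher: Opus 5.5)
Your proposal is correct and follows the paper's argument. The marginal is the sum of $q$ independent Bernoulli$(1/n)$ indicators, and the joint law is multinomial by independent uniform placement; you only make the latter explicit by counting $\binom{q}{x_1,\dots,x_n}$ assignments of probability $n^{-q}$ each. Your conditioning on the set of cascaded keys is sound, and it is how the paper applies the proposition at deeper levels, though the paper carries out that step inside the proofs of Theorems~\ref{thm:perfect} and~\ref{thm:efficient} rather than here.
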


\begin{proof}
$X_b=\sum_{j=1}^{q}\ind{\text{key } j \text{ maps to } b}$ is a sum of
$q$ independent Bernoulli$(1/n)$ indicators; the joint law is
multinomial by definition of independent uniform placement.
\end{proof}

For the binomial occupancy law we write the distribution function
\begin{equation}\label{eq:cdf-def}
F\!\left(k;q,\tfrac1n\right) \;:=\;
\PR\!\left[\Bin\!\left(q,\tfrac1n\right)\le k\right]
\;=\; \sum_{j=0}^{k}\binom{q}{j}n^{-j}\bigl(1-\tfrac1n\bigr)^{q-j},
\end{equation}
so that $\PR[\Bin(q,\tfrac1n)\ge s]=1-F(s-1;q,\tfrac1n)$.

In this notation the subtracted terms in \eqref{eq:companion-count}
are $n\,\PR[\Bin(q,\tfrac1n)=j]$
(Proposition~\ref{prop:occupancy}), and the per-bucket probability
above is exactly the binomial tail $1-F(s-1;q,\tfrac1n)$, the form
we use from here on (later levels substitute their own key count
for $q$).

\begin{lemma}[Collision counts]\label{lem:counts}
In the setting of Proposition~\ref{prop:occupancy}, for every
$s\ge 1$,
\begin{align}
\E\Bigl[\sum_b \ind{X_b\ge s}\Bigr]
  &= n\,\PR\!\left[\Bin\!\left(q,\tfrac1n\right)\ge s\right]
   = n\bigl(1-F(s-1;q,\tfrac1n)\bigr), \label{eq:count-ge}\\
\E\Bigl[\sum_b \ind{X_b\le s-1}\Bigr]
  &= n\,F\!\left(s-1;q,\tfrac1n\right). \label{eq:count-le}
\end{align}
\end{lemma}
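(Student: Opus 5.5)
The plan is a direct application of linearity of expectation, with Proposition~\ref{prop:occupancy} supplying the marginal law of each bucket count. No independence between buckets is needed: the counts $(X_1,\dots,X_n)$ are dependent (multinomial), but expectation is additive regardless. So the proof reduces to computing one indicator's expectation and multiplying by $n$.

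For \eqref{eq:count-ge}, write
\[
\E\Bigl[\sum_b \ind{X_b\ge s}\Bigr]=\sum_{b=1}^{n}\E\bigl[\ind{X_b\ge s}\bigr]=\sum_{b=1}^{n}\PR[X_b\ge s].
\]
By \eqref{eq:binomial-occupancy} each $X_b\sim\Bin(q,\tfrac1n)$, so every summand equals $\PR[\Bin(q,\tfrac1n)\ge s]$, and the sum is $n$ times this. Since the binomial is integer-valued, the event $\{\Bin\ge s\}$ is the complement of $\{\Bin\le s-1\}$. Hence the probability is $1-F(s-1;q,\tfrac1n)$ by the definition \eqref{eq:cdf-def}, which gives the second equality.

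For \eqref{eq:count-le}, the same computation applies with the indicator $\ind{X_b\le s-1}$, whose expectation is $F(s-1;q,\tfrac1n)$ directly by \eqref{eq:cdf-def}. Alternatively, $\ind{X_b\le s-1}=1-\ind{X_b\ge s}$ pointwise, so summing over $b$ and subtracting \eqref{eq:count-ge} from $n$ yields the same result.

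There is no real obstacle. The only points to state explicitly are the following.
\begin{itemize}
\item Linearity holds despite the dependence among the $X_b$, since only the marginals enter.
\item The integer-valuedness of $X_b$ justifies replacing $\{X_b\ge s\}$ by the complement of $\{X_b\le s-1\}$.
\item For $s\ge1$ the index $s-1\ge0$, so $F(s-1;\cdot)$ is a well-defined, nonempty sum.
\end{itemize}
A final remark will note that \eqref{eq:count-ge} recovers \eqref{eq:companion-count} once $F$ is expanded.
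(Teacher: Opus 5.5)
Your proposal is correct and matches the paper's proof: linearity of expectation over the identical $\Bin(q,\tfrac1n)$ marginals, which needs no independence between the dependent $X_b$, with the second identity obtained as the complementary count. Your explicit remarks on integer-valuedness and the range of $s-1$ are harmless additions to the same argument.
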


\begin{proof}
Apply linearity of expectation together with
\eqref{eq:binomial-occupancy}: the $n$ marginals are identical (the
$X_b$ are dependent, but linearity requires no independence). The
second identity is the complementary count.
\end{proof}

Beyond sizing levels, \eqref{eq:count-ge} improves on the classical
$s$-subset multicollision bound, a side result developed in
Appendix~\ref{app:multicollision}.

\begin{remark}\label{rem:survival-fix}
In survival-function notation, $\mathrm{SF}(k):=\PR[X>k]$ for
$X\sim\Bin(q,\tfrac1n)$, the count \eqref{eq:count-ge} reads
$n\cdot\mathrm{SF}(s-1)$, since $X\ge s$ is the event $X>s-1$.
\end{remark}

\subsection{A perfect table}\label{sec:perfect}

The last knob of the previous section is the simplest case, and we
make it precise first: load factor exactly $1$, every slot occupied
and no terminal overflow, with probability as close to $1$ as
desired. The
construction makes each level as large as possible subject to the
constraint that the expected number of underfull buckets is at most
$\delta$, while always leaving at least a bucket's worth of keys for
the next level.

\begin{definition}[Perfect construction]\label{def:perfect}
Fix a bucket size $s\ge1$, a per-level budget $\delta\in(0,1)$, and a
number of keys $q$ that is a multiple of $s$, $q\ge s$. The quantity
$o_i$ will count the keys reaching level $i$, which levels
$i,\dots,L$ are allocated to hold. Set $o_1:=q$ and repeat for
$i=1,2,\dots$: if $o_i=s$, allocate
a final level with
$n_i:=1$ bucket and stop, setting $L:=i$; otherwise ($o_i\ge 2s$)
allocate
\begin{equation}\label{eq:perfect-ni}
n_i \;:=\; \max\Bigl\{\,1\le n\le \tfrac{o_i}{s}-1 \;:\;
   n\,F\!\left(s-1;\,o_i,\tfrac1n\right)\le\delta \,\Bigr\},
\qquad
o_{i+1} \;:=\; o_i - s\,n_i .
\end{equation}
\end{definition}

The maximum in \eqref{eq:perfect-ni} is over a nonempty finite set:
$o_i\ge2s$ in this branch, so $n=1$ is in range, and it satisfies
the constraint because a single bucket receives all $o_i$ keys
deterministically, giving $F(s-1;o_i,1)=0$. The cap
$n\le o_i/s-1$ keeps $o_{i+1}=o_i-sn_i$ between $s$ and $o_i-s$, so
every $o_i$ is a positive multiple of $s$, the construction
terminates after at most $q/s$ levels, and it always ends in the
single-bucket final level, which receives at least $s$ keys and
cannot be underfull. Candidates $n>o_i/s$ fail the constraint on
their own, some bucket being underfull by pigeonhole, so the
expected count \eqref{eq:count-le} is at least $1>\delta$; the cap
hence removes only the exact fit $n=o_i/s$, which would end the
cascade with $o_{i+1}=0$ and no final level. By telescoping, the
total capacity is
\begin{equation}\label{eq:perfect-capacity}
s\sum_{i=1}^{L} n_i \;=\; \sum_{i=1}^{L-1}(o_i-o_{i+1}) + o_L \;=\; q :
\end{equation}
not a single slot more than there are keys.

\begin{theorem}[Perfect table]\label{thm:perfect}
Run Definition~\ref{def:perfect} and insert the $q$ keys under
the model of Section~\ref{sec:model}. With probability at least
$(1-\delta)^{L-1}\ge1-(L-1)\delta$ every level is saturated; on this
event every key is stored, every slot of the table is occupied, and
the slot load factor is exactly~$1$.
\end{theorem}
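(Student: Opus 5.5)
The plan is to argue level by level, conditioning on the history, and to use Markov's inequality on the count of underfull buckets. The key observation is that the construction in Definition~\ref{def:perfect} fixes the numbers $o_i$ and $n_i$ deterministically. So the whole argument reduces to showing the following: if levels $1,\dots,i-1$ are saturated, then exactly $o_i$ keys reach level $i$.

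\textbf{Step 1: key counts on saturated levels.} Level $j$ holds $n_j$ buckets. If it is saturated, each bucket stores exactly $s$ keys. Then by the overflow definition \eqref{eq:overflow-def}, exactly $o_j-sn_j=o_{j+1}$ keys cascade onward. Induction on $j$ shows that on the event $S_{i-1}$ (levels $1,\dots,i-1$ all saturated), exactly $o_i$ keys arrive at level $i$.

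\textbf{Step 2: one level at a time.} The level-$i$ bucket assignments are independent of everything that decided which keys reached level $i$; this is the measurability remark after the model. So conditionally on $S_{i-1}$ and on the identity of the $o_i$ arriving keys, the level-$i$ occupancies follow Proposition~\ref{prop:occupancy} with $q=o_i$ and $n=n_i$. Let $U_i$ be the number of underfull buckets at level $i$. By \eqref{eq:count-le} and the constraint in \eqref{eq:perfect-ni},
\[
\PR[U_i\ge1\mid S_{i-1}] \;\le\; \E[U_i\mid S_{i-1}] \;=\; n_iF\!\left(s-1;o_i,\tfrac1{n_i}\right)\;\le\;\delta .
\]
Hence $\PR[S_i\mid S_{i-1}]\ge1-\delta$ for $i\le L-1$.

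The final level is different. It is a single bucket receiving exactly $o_L=s$ keys, so it is saturated deterministically and $\PR[S_L\mid S_{L-1}]=1$. Chaining these conditional probabilities gives $\PR[S_L]\ge(1-\delta)^{L-1}$. Bernoulli's inequality then gives $(1-\delta)^{L-1}\ge1-(L-1)\delta$.

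\textbf{Step 3: consequences on $S_L$.} Every bucket at every level receives at least $s$ keys and stores exactly $s$. The stored total is therefore $s\sum_i n_i=q$ by \eqref{eq:perfect-capacity}. Equivalently, the last level receives exactly $s$ keys and overflows none, so there is no terminal overflow. Every key is stored and every slot is filled, so the slot load factor is $q/(s\sum_in_i)=1$.

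\textbf{Main obstacle.} The only delicate point is the conditioning in Step 2. The set of keys reaching level $i$ is random, and its size is deterministic only on $S_{i-1}$. To handle this, I would condition on the full sigma-algebra generated by the level-$1,\dots,i-1$ hashes. Restricted to $S_{i-1}$, that sigma-algebra fixes a set of exactly $o_i$ keys, and under the random-oracle model their level-$i$ hashes are fresh uniform draws. The Markov bound then holds pointwise on $S_{i-1}$, and integrating over $S_{i-1}$ gives the stated conditional probability.
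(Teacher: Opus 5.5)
Your proposal is correct and follows the paper's proof essentially step for step: the induction showing that exactly $o_i$ keys reach level $i$ on the saturation event, the Markov bound on the underfull count $U_i$ via \eqref{eq:count-le} and the constraint \eqref{eq:perfect-ni}, the deterministic single-bucket final level, the chained conditional probabilities closed by Bernoulli's inequality, and the capacity identity \eqref{eq:perfect-capacity}. Your handling of the conditioning is also how the paper justifies applying Proposition~\ref{prop:occupancy} with $q=o_i$, $n=n_i$: you fix the earlier levels' randomness on $S_{i-1}$, which pins down a set of exactly $o_i$ keys, and then use their fresh level-$i$ hashes.
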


\begin{proof}
Let $A_i$ denote the event that level $i$ is saturated, i.e., every
one of its $n_i$ buckets receives at least $s$ keys. We claim that on
$A_1\cap\dots\cap A_{i-1}$ the number of keys reaching level $i$ is
exactly $o_i$. This holds by induction: level $1$ receives $o_1=q$
keys, and if level $j$ receives $o_j$ keys and is saturated, it stores
exactly $s$ keys per bucket, hence exactly $s\,n_j$ in total, and
passes on $o_j-s\,n_j=o_{j+1}$.

Fix $i\le L-1$ and condition on any realization of the randomness of
levels $1,\dots,i-1$ lying in $A_1\cap\dots\cap A_{i-1}$. The $o_i$
keys reaching level $i$ are placed by fresh uniform randomness,
independent of the conditioning, so the loads of level $i$ follow
Proposition~\ref{prop:occupancy} with $q=o_i$, $n=n_i$. Let
$U_i:=\sum_{b\le n_i}\ind{X_b\le s-1}$ be the number of underfull buckets.
By Markov's inequality, Lemma~\ref{lem:counts}, and the defining
constraint in \eqref{eq:perfect-ni},
\begin{equation}\label{eq:perfect-markov}
\PR\bigl[\lnot A_i \,\big|\, A_1\cap\dots\cap A_{i-1}\bigr]
= \PR\bigl[U_i\ge 1 \,\big|\, \cdots\bigr]
\;\le\; \E\bigl[U_i \,\big|\, \cdots\bigr]
= n_i\,F\!\left(s-1;o_i,\tfrac1{n_i}\right)
\;\le\;\delta .
\end{equation}
For the final level, on $A_1\cap\dots\cap A_{L-1}$ exactly $o_L=s$
keys reach the single bucket of level $L$, which is hence
saturated with conditional probability $1$. Multiplying the
conditional bounds along the chain,
\begin{equation}\label{eq:perfect-union}
\PR\Bigl[\bigcap_{i=1}^{L}A_i\Bigr]
= \prod_{i=1}^{L}\PR\bigl[A_i\,\big|\,A_1\cap\dots\cap A_{i-1}\bigr]
\;\ge\;(1-\delta)^{L-1}\;\ge\;1-(L-1)\,\delta ,
\end{equation}
the last step by Bernoulli's inequality.
On $\bigcap_i A_i$ every bucket of every level holds exactly $s$ keys,
so all $q=s\sum_i n_i$ keys are stored by \eqref{eq:perfect-capacity},
no key cascades past level $L$, and all slots are full.
\end{proof}

Randomness decides only whether the build succeeds, so when one
fails the obvious repair is to redraw everything and rebuild, at an
expected cost of $1/\PR[\text{success}]$ full builds, at most
$(1-\delta)^{-(L-1)}$ by Theorem~\ref{thm:perfect}.
Alternatively, $\delta$ can be pushed cryptographically low
($2^{-40}$, $2^{-64}$): capacity stays exactly $q$ by
\eqref{eq:perfect-capacity}, only the cascade deepens, and terminal
overflow becomes an event we plan never to observe, as is usual in
probabilistic cryptography; since slot load exactly $1$ holds on the
no-terminal-overflow event, load factor $1$ is then for every
practical purpose unconditional. 

\begin{proposition}[Lookup cost]\label{prop:lookup}
Condition on all levels being saturated. A lookup probes buckets at
levels $1,2,\dots$ (one bucket per level) until the key is found. For
a key chosen uniformly among the $q$ stored keys, the expected number
of probed buckets is
\begin{equation}\label{eq:lookup}
\sum_{i=1}^{L} i\,\frac{s\,n_i}{q},
\end{equation}
while a lookup of a key not in the table probes all $L$ levels.
\end{proposition}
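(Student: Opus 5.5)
The plan is to work entirely on the event $\bigcap_i A_i$ from the proof of Theorem~\ref{thm:perfect}, where the table's layout is deterministic, and to reduce the expectation to a count of keys per level. On this event, the induction in that proof shows that level $i$ receives exactly $o_i$ keys and stores exactly $s$ keys in each of its $n_i$ buckets. So level $i$ holds exactly $s\,n_i$ keys, and by \eqref{eq:perfect-capacity} these counts sum to $q$. Every stored key therefore sits at a unique level $\ell(x)\in\{1,\dots,L\}$, and the sets $S_i=\{x:\ell(x)=i\}$ partition the $q$ keys with $|S_i|=s\,n_i$.

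Next I would argue that a lookup of a stored key $x$ probes exactly $\ell(x)$ buckets. The lookup probes the bucket $h_1(x)$ at level $1$, then $h_2(x)$ at level $2$, and so on. For each $j<\ell(x)$, the key $x$ was cascaded out of its level-$j$ bucket under the insertion rule of the model, so it is not stored there, and the lookup continues. At level $\ell(x)$ the key is stored in bucket $h_{\ell(x)}(x)$, because insertion places a key only in its assigned bucket, so the lookup stops. The key fact is that a key resides only in its hashed bucket at its own level. That holds because the model defines storage per bucket, with the overflow $\pos{X_b-s}$ leaving bucket $b$, and cascaded keys never reappear at an earlier level.

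The expectation then follows directly. With $x$ uniform over the $q$ stored keys,
$\E[\ell(x)]=\sum_i i\,\PR[\ell(x)=i]=\sum_i i\,|S_i|/q=\sum_i i\,s n_i/q$, which is \eqref{eq:lookup}. For a key absent from the table, no bucket at any level contains it. The lookup has no earlier stopping criterion, since a full bucket at level $i$ does not rule out the key being at a deeper level, so it probes one bucket at each of the $L$ levels.

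The main subtlety I expect is the conditioning. The uniform choice is over keys given the realized layout, and the conditioning on saturation matters only because it makes $|S_i|$ deterministic. I would state the result as holding pointwise for every outcome in $\bigcap_i A_i$, which is stronger than a conditional expectation, so no independence argument is needed.
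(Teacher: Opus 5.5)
Your proposal is correct and follows essentially the same argument as the paper. On the saturation event, level $i$ holds exactly $s\,n_i$ keys, so a uniformly chosen stored key lies at level $i$ with probability $s\,n_i/q$ and costs $i$ probes, and because every bucket is full a miss never gets an early-stop signal and probes all $L$ levels. Stating the result pointwise on $\bigcap_i A_i$ is just a more explicit version of the paper's one-line ``take expectations''.
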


\begin{proof}
On the saturation event level $i$ stores exactly $s\,n_i$ keys, so a
uniformly random stored key resides at level $i$ with probability
$s\,n_i/q$, and finding it costs exactly $i$ probes; take
expectations. Since every bucket is full, a miss at a level carries no
information, and a negative lookup can stop only after all $L$ levels.
\end{proof}

\begin{remark}[Worked example]\label{rem:perfect-example}
For $q=7680$, $s=16$, $\delta=2^{-10}$ the construction
\eqref{eq:perfect-ni} yields $L=12$ levels,
$(n_i)=(191,117,71,43,25,14,8,5,3,1,1,1)$, terminal-overflow
probability at most $11\cdot2^{-10}\approx0.011$, and expected
positive-lookup cost \eqref{eq:lookup} $\approx2.45$ buckets. In
simulation ($2\times10^5$ placements per level) the overflow
frequency is
$0.0045$, about $2.4\times$ inside the bound. Larger buckets compress
the cascade:
$s=192$ at the same $q$ and $\delta$ yields $L=4$,
$(n_i)=(30,7,2,1)$, lookup cost $1.35$, and overflow frequency
$0.00033$ against the bound $0.0029$. Where a shallow table of
exactly $q$ slots is preferred and a failed build costs little to
redraw (e.g., static table), we can push $\delta$ up: ${\delta=0.9}$ at $s=16$
gives only $L=7$ and expected lookup time $1.74$. The bound of
Theorem~\ref{thm:perfect} is then $(1-\delta)^{L-1}=10^{-6}$, four
orders of magnitude below what simulation shows: roughly $1$
one-shot build in $66$ succeeds.
The construction is realizable as it stands; the crate of
Section~\ref{sec:implementation} ships it as
\texttt{examples/perfect.rs}: the plain table with $32$-byte keys and
$128$-byte values at $q=512000$, $s=64$, $\delta=2^{-5}$, with the
bucket counts $(n_i)=(4872, 1947, 752, 280, 100, 34, 11, 1, 1, 1, 1)$, $L=11$, supplied
literally. All $512000$ insertions succeed, so the slot load is
exactly the $1$ of Theorem~\ref{thm:perfect}, and \eqref{eq:lookup}
puts the expected positive-lookup cost at $1.62$ bucket probes
against $11$ for a miss. The measured physical load factor is $0.9999001$; the shortfall is one length byte per bucket plus the
per-level counts, offsets and metadata. Down to the sixth level the counts are
those of Definition~\ref{def:perfect}; the tail $(1, 1,1,1)$ means
only
the seven upper levels carry failure probability.
\end{remark}

\subsection{An efficient table}\label{sec:efficient}

Load factor one costs depth, a dozen levels in the worked example;
relaxing the target buys depth back: we now make the load factor a
parameter and build a table meeting any target $\alpha<1$.
Such a table no longer needs saturated levels, only levels holding
an $\alpha$ fraction of their capacity with high probability, which
is the same as bounding overflow from above.

We proceed in two steps: first the exact mean overflow, the average
case; then a bound on how large the overflow can be at a failure
probability $\delta$ of our choosing. The second step yields $4$
bounds in place of one, because no single argument stays tight
across the whole range of loads: they rest on different mechanisms,
from Markov's inequality through bounded differences to exponential
tilting, and Figure~\ref{fig:bounds} maps which is smallest at each
load; since each holds on its own, the construction uses their
pointwise minimum, again a valid bound.

\paragraph{Overflow of one level.}
The exact mean of the overflow \eqref{eq:overflow-def} follows from a
truncated-mean identity for the occupancy law
\eqref{eq:binomial-occupancy}.

\begin{lemma}[Truncated mean; \appproof{pf:truncated}]\label{lem:truncated}
Let $X\sim\Bin(q,\tfrac1n)$ be the occupancy
\eqref{eq:binomial-occupancy} of a bucket with $s$ slots. The
expected number of keys in the bucket, counted on the event that the
bucket overflows, is
\begin{equation}\label{eq:truncated}
\E\bigl[X\,\ind{X\ge s+1}\bigr]
= \frac{q}{n}\,\PR\!\left[\Bin\!\left(q-1,\tfrac1n\right)\ge s\right]:
\end{equation}
each key lands in the bucket with probability $1/n$, and the bucket
then overflows precisely when at least $s$ of the other $q-1$ keys
join it.
\end{lemma}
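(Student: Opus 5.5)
The plan is to prove \eqref{eq:truncated} by decomposing $X$ into per-key indicators, exactly as in the proof of Proposition~\ref{prop:occupancy}. Write $X=\sum_{j=1}^{q}I_j$ with $I_j:=\ind{\text{key } j \text{ maps to the bucket}}$. Then
\[
X\,\ind{X\ge s+1}=\sum_{j=1}^{q} I_j\,\ind{X\ge s+1},
\]
and linearity of expectation reduces the claim to computing $\E[I_j\ind{X\ge s+1}]=\PR[I_j=1,\ X\ge s+1]$ for a single key $j$. By exchangeability of the keys, all $q$ terms are equal, so it suffices to compute one of them.

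For fixed $j$, let $X^{(j)}:=X-I_j=\sum_{k\ne j}I_k$ be the count of the other keys. On the event $I_j=1$ we have $X=1+X^{(j)}$, so $\{I_j=1,\ X\ge s+1\}=\{I_j=1,\ X^{(j)}\ge s\}$. In the random-oracle model of Section~\ref{sec:model} the placements of distinct keys are independent, so $I_j$ is independent of $X^{(j)}$. Moreover, by the same argument as Proposition~\ref{prop:occupancy}, $X^{(j)}\sim\Bin(q-1,\tfrac1n)$. This gives
\[
\PR\bigl[I_j=1,\ X\ge s+1\bigr]=\tfrac1n\,\PR\!\left[\Bin\!\left(q-1,\tfrac1n\right)\ge s\right].
\]
Summing over the $q$ keys yields \eqref{eq:truncated}.

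As an optional cross-check, the identity also follows algebraically from $k\binom{q}{k}=q\binom{q-1}{k-1}$ applied termwise to $\sum_{k\ge s+1}k\binom{q}{k}n^{-k}(1-\tfrac1n)^{q-k}$; reindexing $k-1\mapsto k$ recovers the $\Bin(q-1,\tfrac1n)$ tail from $s$ upward. There is no real obstacle. The only point needing care is the off-by-one: stating that the event ``at least $s+1$ in total, including key $j$'' is the same as ``at least $s$ among the others,'' and noting that the case $q=0$ (or $q-1<s$) is consistent because both sides vanish.
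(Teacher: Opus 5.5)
Your proof is correct and matches the paper's argument: the paper likewise writes $X=\sum_j I_j$, uses symmetry to reduce to $q\,\PR[I_1=1]\,\PR[X\ge s+1\mid I_1=1]$, and notes that given $I_1=1$ the occupancy is $1+\Bin(q-1,\tfrac1n)$. Your algebraic cross-check via $k\binom{q}{k}=q\binom{q-1}{k-1}$ is a valid alternative, but the main argument already suffices.
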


\begin{theorem}[Exact mean overflow; \appproof{pf:meanoverflow}]\label{thm:meanoverflow}
Hash $q$ keys uniformly into $n$ buckets of $s\ge1$ slots and let $O$
be the overflow \eqref{eq:overflow-def}. Then
\begin{equation}\label{eq:mean-overflow}
\E[O]
= q\,\PR\!\left[\Bin\!\left(q-1,\tfrac1n\right)\ge s\right]
 \;-\; s\,n\,\PR\!\left[\Bin\!\left(q,\tfrac1n\right)\ge s+1\right].
\end{equation}
\end{theorem}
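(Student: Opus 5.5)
By linearity of expectation and Proposition~\ref{prop:occupancy}, $\E[O]=n\,\E[\pos{X-s}]$ with $X\sim\Bin(q,\tfrac1n)$, the occupancy of a single bucket. The dependence among the $X_b$ plays no role, exactly as in Lemma~\ref{lem:counts}. So the whole task is to compute the one-bucket truncated excess $\E[\pos{X-s}]$ in closed form.

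For the pointwise decomposition, note that $\pos{X-s}=(X-s)\ind{X\ge s+1}$. The event $X=s$ contributes zero, so the indicator may be taken on $X\ge s+1$ rather than $X\ge s$. Hence
\[
\E[\pos{X-s}]=\E\bigl[X\,\ind{X\ge s+1}\bigr]-s\,\PR[X\ge s+1].
\]
The first term is given by Lemma~\ref{lem:truncated} as $\tfrac qn\PR[\Bin(q-1,\tfrac1n)\ge s]$. The second term is $s\,\PR[\Bin(q,\tfrac1n)\ge s+1]$ by \eqref{eq:binomial-occupancy}. Multiplying by $n$ yields \eqref{eq:mean-overflow}.

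The only substantive ingredient is Lemma~\ref{lem:truncated}, which I am allowed to assume. Were it needed, I would prove it by writing $X=\sum_j I_j$ over keys. Then $\E[I_j\ind{X\ge s+1}]=\tfrac1n\PR[\text{at least } s \text{ of the other } q-1 \text{ keys land in the bucket}]$, using independence of key $j$ from the others, and summing over the $q$ keys gives the lemma. The only point needing care, and the spot I would double-check, is the boundary bookkeeping: whether the threshold is $s$ or $s+1$ in each term. I would confirm it against the small cases $q\le s$, where both sides vanish, and $n=1$, where $O=(q-s)^+$.
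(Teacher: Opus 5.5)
Your proposal is correct and matches the paper's proof: you reduce to $\E[O]=n\,\E\pos{X-s}$ by linearity, split $\pos{X-s}=X\ind{X\ge s+1}-s\,\ind{X\ge s+1}$, apply Lemma~\ref{lem:truncated} to the first term, and multiply by $n$. Your sketch of the truncated-mean lemma is also the paper's argument: write $X$ as a sum of per-key indicators, use symmetry over the keys, and note that conditionally on $I_1=1$ the occupancy is $1+\Bin(q-1,\tfrac1n)$.
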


Both terms count only overflowing buckets, those holding $s+1$ or
more keys. The first is the expected number of keys landing in them,
since a key's bucket overflows precisely when at least $s$ of the
other $q-1$ keys share it, the event of the first factor; the second
is the expected number of slots there, $s$ for each of the expected
$n\,\PR[\Bin(q,\tfrac1n)\ge s+1]$ overflowing buckets by
Lemma~\ref{lem:counts}. The overflow is the difference; buckets
holding exactly $s$ keys enter neither term.

Two deterministic facts recur.

\begin{lemma}[Cap and monotonicity; \appproof{pf:capmono}]\label{lem:capmono}
In the setting of Theorem~\ref{thm:meanoverflow}:
\begin{enumerate}[label=(\roman*),nosep]
\item $O\le\pos{q-s}$ always, with equality for $n=1$;
\item if $O^{(q,n)}$ denotes the overflow with $q$ keys in the $n$
buckets (same $s$), then $O^{(q,n)}$ is stochastically nondecreasing
in $q$: for $q'\le q$ and every $t$,
$\PR[O^{(q',n)}>t]\le\PR[O^{(q,n)}>t]$.
\end{enumerate}
\end{lemma}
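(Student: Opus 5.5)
Both parts are deterministic or coupling arguments on the occupancy vector $(X_1,\dots,X_n)$ of Proposition~\ref{prop:occupancy}; no tail estimates are needed.

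For (i), I will use that $x\mapsto\pos{x-s}$ is convex on the integers and vanishes at $x=0$. Hence it is superadditive on nonnegative integers:
\[
\pos{a-s}+\pos{b-s}\le\pos{a+b-s}\qquad (a,b\ge0).
\]
To check this, suppose both terms on the left are positive. Then the left side is $a+b-2s\le a+b-s$. If instead only one term is positive, say the first, it equals $a-s\le a+b-s$, and the right side is at least that.

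Applying superadditivity repeatedly over the buckets gives
\[
O=\sum_b\pos{X_b-s}\le\pos{\textstyle\sum_b X_b-s}=\pos{q-s}.
\]
For $n=1$ the single bucket receives all $q$ keys, so $X_1=q$ and $O=\pos{q-s}$ exactly.

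For (ii), I will build a monotone coupling. Place $q$ keys independently and uniformly, then let the first $q'$ of them define the $q'$-key experiment. The first $q'$ keys are themselves i.i.d.\ uniform, so this realizes the correct law of $O^{(q',n)}$. Under this coupling $X_b^{(q')}\le X_b^{(q)}$ for every bucket $b$ pointwise. Since $x\mapsto\pos{x-s}$ is nondecreasing, summing over $b$ gives $O^{(q',n)}\le O^{(q,n)}$ almost surely. Pointwise domination on a common probability space implies $\PR[O^{(q',n)}>t]\le\PR[O^{(q,n)}>t]$ for every $t$.

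The only point needing care is that the coupling preserves the marginal law, which is immediate because the first $q'$ of $q$ i.i.d.\ draws are i.i.d. The rest is routine.
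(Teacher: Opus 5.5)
Your proposal is correct and matches the paper's proof step for step. For (i) it checks superadditivity of $\pos{x-s}$ case by case and iterates over buckets (the omitted case where neither term is positive is trivial, since the right side is nonnegative); for (ii) it reveals only the first $q'$ of the same $q$ balls and uses coordinatewise monotonicity of $O$ to get pointwise, hence stochastic, domination.
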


We now bound the upper tail of $O$ two ways, with complementary
regimes of strength. These two arguments yield $3$ of the $4$
bounds, the tilting one giving a pair. Markov's inequality
applied to $O$ supplies the fourth.

\begin{lemma}[Bounded differences]\label{lem:mcdiarmid}
In the setting of Theorem~\ref{thm:meanoverflow}, for every $t\ge0$,
\begin{equation}\label{eq:mcdiarmid}
\PR\bigl[O\ge\E[O]+t\bigr]\le \exp\!\left(-\tfrac{2t^2}{q}\right).
\end{equation}
\end{lemma}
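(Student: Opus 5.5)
We will prove \eqref{eq:mcdiarmid} by McDiarmid's bounded-differences inequality. The plan is to view the overflow as a function of the independent bucket choices of the keys, show that each coordinate has influence at most $1$, and read off the stated exponent.

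First we set up the function. Write $Y_1,\dots,Y_q\in\{1,\dots,n\}$ for the buckets chosen by the $q$ keys; under the model of Section~\ref{sec:model} these are independent (and uniform, though uniformity is not needed here). The loads are $X_b=\sum_j\ind{Y_j=b}$, so
\[
O=f(Y_1,\dots,Y_q):=\sum_{b=1}^{n}\pos{X_b-s}
\]
is a deterministic function of the $q$ independent coordinates.

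Next we verify the bounded-differences condition with constants $c_j=1$. Changing a single $Y_j$ from bucket $b$ to bucket $b'$ (with $b\ne b'$; otherwise nothing changes) lowers $X_b$ by one, raises $X_{b'}$ by one, and leaves every other load fixed. The map $x\mapsto\pos{x-s}$ is nondecreasing and $1$-Lipschitz on the integers. Hence the term for $b$ changes by some $-\epsilon\in\{0,-1\}$ and the term for $b'$ by some $\epsilon'\in\{0,1\}$. The total change $\epsilon'-\epsilon$ therefore lies in $\{-1,0,1\}$, so $|f(y)-f(y')|\le1$ whenever $y,y'$ differ in one coordinate. Note that a naive bound would give $2$, since two terms move; it is the opposite signs of the two changes that yield $1$. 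This sign observation is the only step needing care, and it is what produces the exponent $2t^2/q$ rather than $t^2/(2q)$.

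Finally we apply McDiarmid's inequality, $\PR[f-\E f\ge t]\le\exp\bigl(-2t^2/\sum_j c_j^2\bigr)$. With $\sum_j c_j^2=q$ this gives $\exp(-2t^2/q)$ for every $t\ge0$, which is \eqref{eq:mcdiarmid}. The value of $\E[O]$ is not needed for the bound, although Theorem~\ref{thm:meanoverflow} supplies it when one wants to turn the bound into an explicit overflow quantile.
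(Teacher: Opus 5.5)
Your proposal is correct and is exactly the paper's argument: McDiarmid's inequality applied to $O$ as a function of the $q$ independent bucket choices, with bounded difference $1$ per coordinate because rerouting one key moves one ball between two buckets. Your sign observation, that one term changes by $0$ or $-1$ and the other by $0$ or $+1$, spells out the step the paper states in a single line.
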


\begin{proof}
This is McDiarmid's inequality~\cite{mcdiarmid1989bounded}: $O$ is a
function of the $q$ independent bucket choices, and rerouting one
key moves one ball between two buckets, changing $O$ by at most $1$.
\end{proof}

\begin{lemma}[Negative association and stored keys; \appproof{pf:na}]\label{lem:na}
In the setting of Theorem~\ref{thm:meanoverflow}, the occupancy
vector $(X_1,\dots,X_n)$ is negatively associated, and monotone
functions acting on disjoint sets of coordinates preserve that
property. Moreover, pathwise
\begin{equation}\label{eq:stored-keys}
O \;=\; q-\sum_{b=1}^{n}\min(X_b,s).
\end{equation}
\end{lemma}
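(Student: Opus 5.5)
The lemma has two independent parts, and I will treat them separately.

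\textbf{Negative association.} This is the standard balls-into-bins result of Joag-Dev and Proschan, as used by Dubhashi and Ranjan. I would build it from three facts.
\begin{enumerate}[label=(\roman*),nosep]
\item For a single key $j$, the indicator vector $(\ind{\text{key } j\to b})_{b\le n}$ is negatively associated. It is a zero--one vector with exactly one coordinate equal to $1$, and the zero--one principle says such a vector is NA. The direct check goes as follows. Take nondecreasing $f,g$ on disjoint index sets $I,J$. Each of $f$ and $g$ depends only on which coordinate is hot. We may shift $f$ and $g$ so that $f$ equals $0$ when the hot coordinate lies outside $I$, and likewise $g$ outside $J$. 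Then $fg\equiv0$, while $\E f\,\E g\ge0$. So $\E[fg]\le\E f\,\E g$.
\item Independent NA vectors, concatenated, form an NA vector. This holds because the keys' choices are independent under the model of Section~\ref{sec:model}.
\item Nondecreasing functions of NA variables, taken over disjoint index sets, are again NA.
\end{enumerate}
Each $X_b=\sum_j\ind{\text{key } j\to b}$ is a nondecreasing function of the coordinates indexed by $\{(j,b)\}_j$. These index sets are disjoint across $b$, so $(X_1,\dots,X_n)$ is NA by (iii). The same fact (iii), applied to $(X_b)$, gives the second clause of the statement. Examples are $\min(X_b,s)$, $\pos{X_b-s}$ and $\ind{X_b\ge s}$, which are nondecreasing functions of one coordinate each, so these families are NA as well.

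\textbf{Pathwise identity.} For each bucket, $\min(x,s)+\pos{x-s}=x$ holds for every integer $x\ge0$. Check the two cases $x\le s$ and $x>s$. Summing over $b$ and using $\sum_bX_b=q$ gives $q=\sum_b\min(X_b,s)+O$. This is \eqref{eq:stored-keys}, via the definition \eqref{eq:overflow-def}.

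The only delicate step is fact (ii), closure under independent concatenation. I would prove it by conditioning. Let $f,g$ act on disjoint coordinate sets, each split across the two independent blocks. Condition on the first block. Apply NA of the second block to the conditional functions, which stay monotone. This gives $\E[fg]\le\E\bigl[\E[f\mid\cdot]\,\E[g\mid\cdot]\bigr]$. The conditional expectations are monotone functions of disjoint coordinates of the first block, so NA of that block finishes the step. Since all three facts are standard, I would state them with a citation and spell out only the one-hot case and the identity.
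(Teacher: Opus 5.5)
Your proof is correct and matches the paper's. The paper cites negative association of the multinomial occupancy vector \cite{joagdev1983negative} and its closure under monotone maps of disjoint coordinate blocks \cite{dubhashi2009concentration}, then sums $\min(X_b,s)+\pos{X_b-s}=X_b$ exactly as you do; your unpacking of the first citation via the zero-one route (one-hot vectors, independent concatenation, monotone aggregation over disjoint index sets) checks out step by step. One precision: read the closure clause as \emph{all nondecreasing or all nonincreasing} maps. The nonincreasing case, which the paper needs for $W=\sum_b\pos{s-X_b}$ in Lemma~\ref{lem:tilt}, follows from your fact (iii) because negating both test functions leaves their covariance unchanged.
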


The remaining pair tilts this negative association exponentially.
Both are Chernoff bounds in quantile form, fixing the budget $\delta$
and solving for the threshold that meets it rather than reporting a
probability for a fixed deviation; negative association bounds the
exponential moment by the product that independence would make an
identity.

\begin{lemma}[Tilted quantiles; \appproof{pf:tilt}]\label{lem:tilt}
In the setting of Theorem~\ref{thm:meanoverflow}, let
$W:=\sum_b\pos{s-X_b}$ be the unused slots, so that
\eqref{eq:stored-keys} reads $O=q-sn+W$. For
$X\sim\Bin(q,\tfrac1n)$ and $\delta\in(0,1)$ define
\begin{equation}\label{eq:chernoff-cert}
o^*(n):=\inf_{\theta>0}\,
\frac{n\ln\E\bigl[e^{\theta\pos{X-s}}\bigr]+\ln(1/\delta)}{\theta},
\qquad
w^*(n):=\inf_{\theta>0}\,
\frac{n\ln\E\bigl[e^{\theta\pos{s-X}}\bigr]+\ln(1/\delta)}{\theta}.
\end{equation}
Then $\PR\bigl[O>o^*(n)\bigr]\le\delta$ and
$\PR\bigl[W>w^*(n)\bigr]\le\delta$.
\end{lemma}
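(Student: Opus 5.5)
The plan is a standard Chernoff argument: bound the exponential moment of $O$ (respectively $W$) by a product of identical one-bucket moment generating functions, apply Markov's inequality to the exponential, and then optimize over the tilt $\theta$. I treat $O$ first. Since $q$ is fixed, \eqref{eq:stored-keys} gives $O=q-\sum_b\min(X_b,s)$, but I work directly with $O=\sum_b\pos{X_b-s}$ from \eqref{eq:overflow-def}. For each bucket set $f_b(X_b):=e^{\theta\pos{X_b-s}}$. For $\theta>0$ this is a nonnegative, nondecreasing function of the single coordinate $X_b$, and the coordinates are disjoint across $b$. By Lemma~\ref{lem:na} the family $(f_b(X_b))_b$ is negatively associated. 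Negative association gives $\E[\prod_b f_b]\le\prod_b\E[f_b]$ for nonnegative nondecreasing functions on disjoint coordinates; I will cite the product inequality for NA variables, iterating the two-block covariance inequality. Proposition~\ref{prop:occupancy} says each marginal is $\Bin(q,\tfrac1n)$, so
\[
\E\bigl[e^{\theta O}\bigr]\le \E\bigl[e^{\theta\pos{X-s}}\bigr]^{n}.
\]

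Next I apply Markov's inequality to $e^{\theta O}$ at any threshold $t$:
\[
\PR[O>t]\le\PR[O\ge t]\le \exp\bigl(n\ln\E[e^{\theta\pos{X-s}}]-\theta t\bigr).
\]
The right-hand side is at most $\delta$ exactly when $t\ge \bigl(n\ln\E[e^{\theta\pos{X-s}}]+\ln(1/\delta)\bigr)/\theta=:g(\theta)$. So for each $\theta>0$ we get $\PR[O>g(\theta)]\le\delta$.

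Passing to the infimum $o^*(n)=\inf_\theta g(\theta)$ needs care, because the infimum may not be attained. Take $\theta_k$ with $g(\theta_k)\downarrow o^*(n)$. The events $\{O>g(\theta_k)\}$ increase to $\{O>o^*(n)\}$, so by continuity from below $\PR[O>o^*(n)]=\lim_k\PR[O>g(\theta_k)]\le\delta$. Finiteness of the moment generating function is automatic, since $X\le q$. If the infimum were $-\infty$ the claim would still hold, but that cannot happen: the numerator is at least $\ln(1/\delta)>0$, because $\E[e^{\theta\pos{X-s}}]\ge1$.

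For $W$ the only change is monotonicity. The map $x\mapsto e^{\theta\pos{s-x}}$ is nonincreasing. Negative association is preserved when every function is nonincreasing, because negating all coordinates preserves NA; equivalently, Lemma~\ref{lem:na}'s closure under monotone functions covers functions that are all nonincreasing. The same product bound, Markov step and infimum argument then give $\PR[W>w^*(n)]\le\delta$.

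The main obstacle I expect is justifying the product inequality $\E[\prod f_b]\le\prod\E[f_b]$ from negative association. The definition of NA gives only a covariance inequality for two disjoint blocks. I would prove the product bound by induction on $n$: split off one bucket, note that $\prod_{b<n}f_b$ is a nondecreasing nonnegative function of the first $n-1$ coordinates, apply $\mathrm{Cov}\le0$, and then use the inductive hypothesis. The remaining steps are routine.
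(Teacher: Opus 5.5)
Your proposal is correct and follows the paper's proof: the product bound $\E[e^{\theta O}]\le\E\bigl[e^{\theta\pos{X-s}}\bigr]^{n}$ from negative association, Markov's inequality applied to $e^{\theta O}$, and the same argument with nonincreasing summands for $W$. The differences are cosmetic. You pass to the infimum through a minimizing sequence and continuity from below, where the paper uses $o^*(n)+\varepsilon$ with $\varepsilon\downarrow0$. You also spell out the product inequality by induction, which the paper takes directly from negative association.
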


Combining Lemmas \ref{lem:mcdiarmid} and~\ref{lem:tilt} with
Markov's inequality, we define the per-level overflow bound used by
the construction. Fix $\delta\in(0,1)$ and set, with $\E[O]$ from
\eqref{eq:mean-overflow} and $o^*,w^*$ from
\eqref{eq:chernoff-cert},
\begin{align}
B_1(q,n,\delta) &:= \E[O]+\sqrt{\tfrac{q}{2}\ln\tfrac1\delta},
   \label{eq:B1}\\
B_2(q,n,\delta) &:= \E[O]/\delta \quad\text{(Markov)},
   \label{eq:B2}\\
B_3(q,n,\delta) &:= o^*(n),
\qquad
B_4(q,n,\delta) := q-sn+w^*(n),
   \label{eq:B34}\\
B(q,n,\delta) &:= \min\bigl\{B_1,\,B_2,\,B_3,\,B_4,\,\pos{q-s}\bigr\}.
   \label{eq:Bdef}
\end{align}

\begin{proposition}[Per-level bound; \appproof{pf:levelbound}]\label{prop:levelbound}
$\PR\bigl[O> B(q,n,\delta)\bigr]\le\delta$, and
$B(q,1,\delta)=\pos{q-s}$, which the overflow of a single bucket
attains deterministically.
\end{proposition}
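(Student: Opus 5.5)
The plan is to prove that each of the five candidates in \eqref{eq:Bdef} is individually a valid $\delta$-quantile bound, i.e.\ $\PR[O>B_j]\le\delta$ for $j=1,\dots,4$ and $\PR[O>\pos{q-s}]=0$. The key observation is that a minimum of valid upper thresholds is again valid. If $B=\min_j B_j$ is attained at some index $j^\ast$, then the event $\{O>B\}$ equals $\{O>B_{j^\ast}\}$. Because the index $j^\ast$ is a deterministic function of $(q,n,\delta,s)$ and not of the randomness, no union bound is needed, so the budget stays $\delta$ rather than $4\delta$.

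The individual checks go in order:
\begin{enumerate}[label=(\roman*),nosep]
\item \emph{Cap.} By Lemma~\ref{lem:capmono}(i), $O\le\pos{q-s}$ always, so $\PR[O>\pos{q-s}]=0$.
\item \emph{$B_1$.} Set $t=\sqrt{(q/2)\ln(1/\delta)}$ in Lemma~\ref{lem:mcdiarmid}. This gives $\PR[O\ge\E[O]+t]\le\exp(-2t^2/q)=\delta$, and $\{O>B_1\}\subseteq\{O\ge B_1\}$.
\item \emph{$B_2$.} Markov's inequality applies because $O\ge0$. If $\E[O]>0$, then $\PR[O>\E[O]/\delta]\le\PR[O\ge\E[O]/\delta]\le\delta$. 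If $\E[O]=0$, then $O=0$ almost surely and the bound holds trivially.
\item \emph{$B_3$.} This is exactly the first claim of Lemma~\ref{lem:tilt}.
\item \emph{$B_4$.} By \eqref{eq:stored-keys} we have the pathwise identity $O=q-sn+W$. Hence $\{O>q-sn+w^\ast(n)\}=\{W>w^\ast(n)\}$, which has probability at most $\delta$ by Lemma~\ref{lem:tilt}.
\end{enumerate}

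For the second assertion, take $n=1$. Then $X_1=q$ deterministically and $O=\pos{q-s}$, with equality in Lemma~\ref{lem:capmono}(i). It remains to show that the minimum equals $\pos{q-s}$, which means checking that no $B_j$ falls below it.
\begin{itemize}[nosep]
\item $B_1=\pos{q-s}+\sqrt{\cdot}\ge\pos{q-s}$, since $\E[O]=\pos{q-s}$.
\item $B_2=\pos{q-s}/\delta\ge\pos{q-s}$.
\item For $B_3$: the moment generating function is $\E[e^{\theta\pos{X-s}}]=e^{\theta\pos{q-s}}$, so the objective is $\pos{q-s}+\ln(1/\delta)/\theta$. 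Its infimum over $\theta>0$ is $\pos{q-s}$, not attained but never below it.
\item For $B_4$: similarly $w^\ast(1)=\pos{s-q}$, so $B_4=q-s+\pos{s-q}=\pos{q-s}$.
\end{itemize}
So the minimum is exactly $\pos{q-s}$. A softer route also works: each $B_j$ is a valid $\delta$-quantile bound for the deterministic value $\pos{q-s}$ with $\delta<1$, so it cannot lie strictly below that value.

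The main obstacle is the subtle points around infima and boundary cases rather than any real estimate. Two places need care. First, in $B_3,B_4$ the quantity $o^\ast$ is an infimum that may not be attained; here the claim is taken as given by Lemma~\ref{lem:tilt}, but the $n=1$ computation must handle non-attainment as described above. Second, the strict versus non-strict inequality in the McDiarmid and Markov steps must be handled correctly, which (ii) and (iii) already address.
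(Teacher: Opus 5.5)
Your proof is correct and matches the paper's argument step for step: $B$ is one of five deterministic candidates, so each is checked individually (the cap via Lemma~\ref{lem:capmono}(i), $B_1$ via McDiarmid, $B_2$ via Markov, $B_3$ and $B_4$ via Lemma~\ref{lem:tilt} and the identity $O=q-sn+W$), and for $n=1$ the infima evaluate to $o^*(1)=\pos{q-s}$ and $w^*(1)=\pos{s-q}$, while $B_1$ and $B_2$ are at least the cap. Two of your additions are sound refinements: the separate treatment of $\E[O]=0$, and the observation that a valid $\delta$-quantile bound (with $\delta<1$) on the deterministic value $\pos{q-s}$ cannot lie below it, which gives the $n=1$ claim without evaluating any infimum.
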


\begin{remark}[Which bound wins]\label{rem:b1-vs-b2}
With $\lambda=q/n$ the mean load, the tilts cover most of the range:
$B_4$ wins on overcommitted levels, where $\E[W]$ is small, and
$B_3$ on undercommitted residues. The closed form $B_1$ is minimal
only on a narrow band near $\lambda=s$, and below
$\lambda\approx1.3$, where overflow is already rare, the mean bound
$B_2$ is the minimum. 
The deviation terms of
$B_1$ and $B_4$ are of order $\sqrt{q}$: the term of $B_1$ exceeds
$s$ exactly when $q>2s^{2}/\ln(1/\delta)$, and the factored
exponential moment behind $w^*$ keeps a deviation of the same order
at every $n\ge2$. Past that threshold neither bound can guarantee
$O\le s$, apart from the single-bucket case, where the tilted bounds
$B_3$ and $B_4$ degenerate to the deterministic cap $\pos{q-s}$. At $n\ge2$, only
$B_2$ (by forcing a tiny mean) and $B_3$ (by bounding the tail
directly) reach values that small, which is what the truncation of
Section~\ref{sec:depth} needs. Figure~\ref{fig:bounds} maps the
regimes at the reduced budget $\delta=2^{-20}$ against the largest
overflow in $2^{20}$ simulated runs, an ensemble sized so that each
bound is exceeded at most once in expectation.
\end{remark}

\begin{figure}[t]
\centering
\input{figs/fig-bounds}
\caption{The per-level bounds $B_1$--$B_4$ of \eqref{eq:Bdef}, their
minimum $B$, the exact mean $\E[O]$, and the largest overflow in
$2^{20}$ simulated runs (circles), all normalized per key, versus
the mean load $\lambda=q/n$ at $s=8$, $n=256$, $\delta=2^{-20}$, and
no observed maximum crosses a bound. The left panel spans the full
load range; the right repeats the data over $6\le\lambda\le128$
with the vertical axis fitted to that window, where $B_2$ runs off
scale, the panel's dotted verticals marking the two handoffs inside
the window. The circles
start at $\lambda\approx0.52$, the exact onset of
$\PR[O\ge1]>\delta$, below which the quantile is $0$ and no run
overflows. $B_2$ is the minimum for $\lambda\lesssim0.6$, $B_3$ on
$0.6\lesssim\lambda\lesssim6.5$, $B_1$ on
$6.5\lesssim\lambda\lesssim8.7$, and $B_4$ beyond; the deterministic
cap is nowhere active.}
\label{fig:bounds}
\end{figure}

\paragraph{The construction and its guarantee.}
The constraint now runs the other way: a level's input less the
overflow bound must fill the target fraction of its slots.

\begin{definition}[Efficient construction]\label{def:efficient}
Fix $s\ge1$, a target load factor $\alpha\in(0,1)$, a per-level budget
$\delta\in(0,1)$, and $q>s$ keys. Set $o_1:=q$ and repeat for
$i=1,2,\dots$: if $o_i\le s$, allocate a final level with $n_i:=1$
bucket and stop, setting $L:=i$; otherwise allocate
\begin{equation}\label{eq:efficient-ni}
n_i := \max\Bigl\{n\ge1 :\;
  o_i - B(o_i,n,\delta)\;\ge\;\alpha\,s\,n\Bigr\},
\qquad
o_{i+1} := \bigl\lfloor B(o_i,n_i,\delta)\bigr\rfloor .
\end{equation}
\end{definition}

The maximum in \eqref{eq:efficient-ni} is well defined: $n=1$
qualifies, since $B(o_i,1,\delta)=o_i-s$ by
Proposition~\ref{prop:levelbound} and $o_i-(o_i-s)=s\ge\alpha s$, and
$B\ge0$ forces $n\le o_i/(\alpha s)$. Moreover
$o_{i+1}\le B(o_i,n_i,\delta)\le o_i-s$ by the cap in \eqref{eq:Bdef},
so the construction terminates after at most $\lceil q/s\rceil$ levels;
flooring is
harmless because the overflow itself is an integer.

\begin{theorem}[Efficient table; \appproof{pf:efficient}]\label{thm:efficient}
Run Definition~\ref{def:efficient} and insert the $q$ keys under
the model of Section~\ref{sec:model}. Let $C:=s\sum_{i=1}^{L}n_i$, with $n_L=1$,
be the total capacity. Then:
\begin{enumerate}[label=(\alph*),nosep]
\item deterministically,
\begin{equation}\label{eq:capacity-bound}
C \;\le\; \frac{q-o_L}{\alpha}+s ;
\end{equation}
\item with probability at least $(1-\delta)^{L-1}\ge1-(L-1)\delta$,
the overflow of every level $i<L$ is at most $o_{i+1}$ and no
terminal overflow occurs, so all $q$ keys are stored;
\item on the event of (b) the slot load factor $q/C$ satisfies
\begin{equation}\label{eq:load-bound}
\frac{q}{C}\;\ge\;\frac{q\,\alpha}{q-o_L+\alpha s}
\;\ge\;\frac{\alpha}{1+\alpha s/q}\;\ge\;\alpha-\frac{s}{q}.
\end{equation}
\end{enumerate}
\end{theorem}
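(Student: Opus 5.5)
For part (a), I will bound each non-final level's capacity using the constraint in \eqref{eq:efficient-ni}. The chosen $n_i$ satisfies $\alpha s n_i\le o_i-B(o_i,n_i,\delta)$. Since $o_{i+1}=\lfloor B(o_i,n_i,\delta)\rfloor\le B(o_i,n_i,\delta)$, this gives $s n_i\le (o_i-o_{i+1})/\alpha$ for every $i<L$. Summing over $i=1,\dots,L-1$ telescopes to $(o_1-o_L)/\alpha=(q-o_L)/\alpha$. Adding the final level's $s n_L=s$ yields \eqref{eq:capacity-bound}. Nothing probabilistic enters here.

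For part (b), I will mirror the chain argument of Theorem~\ref{thm:perfect}. Let $E_i$ be the event that the overflow of level $i$ is at most $o_{i+1}$. Because the overflow is an integer, this is the same as overflow at most $B(o_i,n_i,\delta)$.

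\emph{Induction.} On $E_1\cap\dots\cap E_{i-1}$, the number $m_i$ of keys reaching level $i$ is at most $o_i$. At level $1$ it equals $q=o_1$ exactly. At each later level the count is precisely the previous overflow.

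\emph{Conditioning.} Condition on the randomness of levels $1,\dots,i-1$, restricted to realizations in $E_1\cap\dots\cap E_{i-1}$. The $m_i$ keys are then placed into $n_i$ buckets by fresh independent uniform hashes. Proposition~\ref{prop:levelbound} controls $o_i$ keys, not $m_i\le o_i$, so I will invoke the stochastic monotonicity in Lemma~\ref{lem:capmono}(ii):
\[
\PR\bigl[O^{(m_i,n_i)}>B(o_i,n_i,\delta)\bigr]\le\PR\bigl[O^{(o_i,n_i)}>B(o_i,n_i,\delta)\bigr]\le\delta .
\]
This is the only delicate point, and Lemma~\ref{lem:capmono}(ii) is exactly what handles it.

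\emph{Chaining.} The conditional bound gives $\PR[E_i\mid E_1\cap\dots\cap E_{i-1}]\ge1-\delta$ for each $i<L$. Multiplying along the chain gives $(1-\delta)^{L-1}$, and Bernoulli's inequality gives $1-(L-1)\delta$.

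\emph{Last level.} On $E_1\cap\dots\cap E_{L-1}$, at most $o_L\le s$ keys reach the single bucket of level $L$, so it overflows with conditional probability zero. Hence there is no terminal overflow and all $q$ keys are stored.

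For part (c), on the event of (b) all $q$ keys are stored. Inverting \eqref{eq:capacity-bound} gives $q/C\ge q\alpha/(q-o_L+\alpha s)$. Since $o_L\ge0$, dropping it only increases the denominator, so the right side is at least $q\alpha/(q+\alpha s)=\alpha/(1+\alpha s/q)$. Finally, $1/(1+x)\ge1-x$ for $x\ge0$ gives $\alpha/(1+\alpha s/q)\ge\alpha(1-\alpha s/q)=\alpha-\alpha^2 s/q\ge\alpha-s/q$.

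I expect the main obstacle to be the justification in (b) that a bound computed for exactly $o_i$ incoming keys still applies when fewer keys arrive. I resolve it through Lemma~\ref{lem:capmono}(ii), together with the integrality of the overflow, which makes the floor in $o_{i+1}$ harmless.
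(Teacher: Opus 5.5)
Your proposal is correct and follows essentially the same route as the paper's proof. For (a) you telescope $s\,n_i\le(o_i-o_{i+1})/\alpha$, for (b) you chain conditional bounds through Lemma~\ref{lem:capmono}(ii), Proposition~\ref{prop:levelbound} and the integrality of the overflow, and for (c) you invert \eqref{eq:capacity-bound}, drop $o_L$, and use $1/(1+x)\ge1-x$ with $\alpha\le1$.
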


Every quantity in the sizing is worst case: level $i$ is sized
against the quantile bound $B$ at its planned input $o_i$, and on
the event of (b) the realized input never exceeds $o_i$. The
capacity $C$ is fixed at construction time, so the load factor bound
of (c) holds on every non-failing run, not merely on average. The
table is also \emph{stable} (an entry, once written, never moves
until an explicit resize, so pointers into it stay valid) and
iterable by a linear scan of the flat array.

That conservatism has a cost. Level $i$ is loaded to $\alpha$ at
its planned input $o_i$, with up to $o_{i+1}$ of those keys budgeted
to move on, but the mean overflow sits below the quantile bound
wherever the chosen bound is not the deterministic cap, thinly on
the loaded top levels and by an order of magnitude or more on
undercommitted residues. A typical run therefore moves fewer
keys down than budgeted, and the shortfall compounds down the
cascade: the top levels run at essentially a little over their planned load
$\alpha$, and the deep levels, provisioned for overflow that rarely
arrives, stay nearly empty. Keys shift toward the top levels,
positive lookups shorten, and neither $C$ nor the guarantee changes;
a negative lookup, however, still visits the nearly empty tail,
which the truncation of Section~\ref{sec:depth} closes early.

\subsection{Depth reduction}\label{sec:depth}

The efficient table meets its load target but ends in a tail of
nearly empty levels, and a negative lookup pays one probe per level,
so depth is what we cut next. The tilted bounds $B_3$ and
$B_4$ already cut it at every level through the minimum
\eqref{eq:Bdef}. This subsection adds two structural cuts: a
progressive overfill of the top levels, and an exactly quantified
truncation of the tail, which the overfill makes affordable. Both
keep the per-level failure budget $\delta$, and the logarithmic
depth guarantee of Appendix~\ref{app:asymptotics} survives the
truncation. Section~\ref{sec:quantiles} then cuts most of the bounds'
slack.

\paragraph{Progressive marginal overfill.}\label{par:overfill}
The first cut changes how the bucket budget is spent, not how tails
are bounded. We fix the budget $\lceil q/(\alpha s)\rceil$ up front
and overfill the top: for a configurable overfill factor $r>1$, level $1$
takes the largest $n_1$ whose planned load is at least
$\min(0.99,\,r\alpha)$, and each later level $i$ targets
$\alpha_i:=\min\bigl(\alpha,\ r\,o_i/(sN_i)\bigr)$, with $N_i$ the
unspent part of the bucket budget: the load at which the residue
$o_i$ exactly fills the remaining buckets, boosted by $r$ and capped
at $\alpha$. Level sizes decay geometrically, so a few points of
extra load on the large top levels progressively lighten every
level below, and the cascade typically ends sooner. The mechanism
does not depend on how a level's residue is estimated: driven by
the worst-case bound $B$ of Definition~\ref{def:efficient} it keeps
the per-level failure budget $\delta$ intact and composes with the
truncation that follows. The factor $r$ is itself a tuning knob:
positive lookups pay the expected depth of their key, negative
lookups pay one probe per level, so a workload mix (e.g., $30\%$
positive, $70\%$ negative) fixes a weighted expected lookup cost as
a function of $r$, and we pick the $r$ minimizing it.

\paragraph{Truncating the tail.}
The second cut closes the cascade early: once the residue is
small, two levels finish the table, a wrap-up level sized so that
its overflow is at most $s$ with probability $1-\delta$, and a
final single bucket that absorbs that overflow. At every level the
construction tests whether this wrap-up pair fits the unspent part
of the bucket budget $\lceil q/(\alpha s)\rceil$ and truncates at
the first level where it does, and the overfill above frees the
budget the pair spends.
Our implementation lets the test tolerate a small budget overrun
(we allow $0.5\%$), which can close the cascade a level earlier at
a capacity cost bounded by the same $0.5\%$.

\begin{corollary}[Depth truncation]\label{cor:shortcircuit}
In the construction of Definition~\ref{def:efficient}, with levels
$1,\dots,i-1$ sized by \eqref{eq:efficient-ni}, fix a level $i$ with
planned input $o_i>s$ and choose
\begin{equation}\label{eq:shortcircuit-n}
n_i:=\min\bigl\{n\ge1:\ B(o_i,n,\delta)<s+1\bigr\},
\end{equation}
which exists since $B\le B_2=\E[O^{(o_i,n)}]/\delta$, with
$O^{(o_i,n)}$ the overflow of $o_i$ keys in $n$ buckets, and the
mean \eqref{eq:mean-overflow} vanishes as $n\to\infty$. Allocate
level $i$ with $n_i$ buckets and one final level $i+1$ with a single
bucket, and stop the construction there. Then with probability at
least $(1-\delta)^{i}\ge1-i\delta$ the overflow $O_i$ of level $i$ is
at most $s$ and
all $q$ keys are stored, and the total capacity satisfies
\begin{equation}\label{eq:shortcircuit-capacity}
C\;\le\;\frac{q-o_i}{\alpha}+s\,(n_i+1).
\end{equation}
\end{corollary}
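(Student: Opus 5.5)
The argument combines three ingredients: the chain-of-conditionals argument from the proof of Theorem~\ref{thm:perfect}, adapted to overflow bounds as in Theorem~\ref{thm:efficient}(b); Proposition~\ref{prop:levelbound} for the truncated level; and the telescoping capacity count behind Theorem~\ref{thm:efficient}(a).

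\emph{Existence of $n_i$.} Since $B\le B_2$ pointwise, it suffices to find one $n$ with $\E[O^{(o_i,n)}]<\delta(s+1)$. By \eqref{eq:mean-overflow}, $\E[O^{(o_i,n)}]\le o_i\,\PR[\Bin(o_i-1,1/n)\ge s]$. The binomial probability tends to $0$ as $n\to\infty$ for fixed $o_i,s$; for instance it is at most $\binom{o_i-1}{s}n^{-s}$. So the set in \eqref{eq:shortcircuit-n} is nonempty, and its minimum exists.

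\emph{Probability.} Let $E_j$, for $j<i$, be the event that the overflow of level $j$ is at most $o_{j+1}$. By induction, as in the proof of Theorem~\ref{thm:perfect}, on $E_1\cap\dots\cap E_{j-1}$ level $j$ receives at most $o_j$ keys. Condition on such a realization of levels $1,\dots,j-1$. Level $j$ hashes its actual input $q'\le o_j$ with fresh randomness. By Lemma~\ref{lem:capmono}(ii), $\PR[O^{(q',n_j)}>o_{j+1}]\le\PR[O^{(o_j,n_j)}>o_{j+1}]$. The event $O>\lfloor B\rfloor$ coincides with $O>B$ because $O$ is an integer, so Proposition~\ref{prop:levelbound} bounds this probability by $\delta$. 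The same argument at level $i$, with threshold $s$, uses $O\le s\iff O<s+1$ and $O>B(o_i,n_i,\delta)\Leftarrow O\ge s+1$ (since $B<s+1$), giving conditional failure probability at most $\delta$. On all $i$ events, at most $s$ keys reach the final single bucket of $s$ slots, so nothing is lost. Multiplying the $i$ conditional bounds gives $(1-\delta)^i$, and Bernoulli's inequality gives $1-i\delta$.

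\emph{Capacity.} For each $j<i$ the defining constraint in \eqref{eq:efficient-ni} gives $\alpha s n_j\le o_j-B(o_j,n_j,\delta)\le o_j-o_{j+1}$, using $o_{j+1}=\lfloor B\rfloor\le B$. Summing, $s\sum_{j<i}n_j\le (o_1-o_i)/\alpha=(q-o_i)/\alpha$. Adding $s n_i$ for level $i$ and $s$ for the final bucket yields \eqref{eq:shortcircuit-capacity}.

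The main subtlety, mild here, lies in the conditioning step. It needs the monotonicity of Lemma~\ref{lem:capmono}(ii) so that a realized input smaller than the planned $o_j$ does not break the bound, and it needs the integrality of $O$ to pass from the floored threshold $o_{j+1}$, and from the strict inequality $B<s+1$, to the quantile statement of Proposition~\ref{prop:levelbound}. Otherwise the proof follows that of Theorem~\ref{thm:efficient} line by line.
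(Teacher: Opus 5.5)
Your proposal is correct and follows the paper's proof. Levels $1,\dots,i-1$ go through the conditional chain of Theorem~\ref{thm:efficient}(b). Level $i$ uses integrality, so that $O_i\ge s+1$ forces $O_i>B(o_i,n_i,\delta)$, together with Lemma~\ref{lem:capmono}(ii) and Proposition~\ref{prop:levelbound}. The capacity is the telescoped bound \eqref{eq:capacity-bound} cut off at level $i$, and your estimate $\PR[\Bin(o_i-1,1/n)\ge s]\le\binom{o_i-1}{s}n^{-s}$ just makes explicit the paper's remark that the mean overflow vanishes as $n\to\infty$.
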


\begin{proof}
Levels $1,\dots,i-1$ are handled exactly as in
Theorem~\ref{thm:efficient}. Level $i$ is fed at most $o_i$ keys,
and its overflow is an integer, so $O_i\ge s+1$ implies
$O_i>B(o_i,n_i,\delta)$; by stochastic monotonicity
(Lemma~\ref{lem:capmono}(ii)) and Proposition~\ref{prop:levelbound}
that event has conditional probability at most $\delta$, and the
chain multiplies as in \eqref{eq:perfect-union}. On the complement
the last bucket receives at most $s$ keys. The capacity bound is
\eqref{eq:capacity-bound} truncated at level $i$.
\end{proof}

\begin{remark}[The tail, quantified]\label{rem:free-tail}
In the proof of Theorem~\ref{thm:efficient}(b) the per-level failure
bound is $0$, not $\delta$, for every level whose chosen bound is the
deterministic cap $o_{i+1}=o_i-s$, so the failure probability is
really $L^*\delta$ with $L^*:=\sum_{i<L}\ind{o_{i+1}<o_i-s}$ counting
the levels whose chosen bound sits strictly below the deterministic
cap. 
For $q=10^5$, $s=8$, $\alpha=0.9$, $\delta=2^{-10}$, 
\eqref{eq:Bdef} runs the cascade
as $(n_i)=(11200,2125,426,95,24,8,2,1,\dots)$: $14$ levels 
at load $0.900$ the truncation test then closes the
table at $L=13$, spending exactly the bucket budget
$\lceil q/(\alpha s)\rceil$.

Truncating the baseline at $i=4$ isolates what the tilt does for
the test \eqref{eq:shortcircuit-n}. 
Demanding \emph{no} overflow at the wrap-up level instead
($n\,\PR[\Bin(o_4,1/n)\ge s{+}1]\le\delta$) needs $1128$ buckets:
tolerating an overflow of at most $s$ and adding one bucket is
cheaper, which is why we keep the final single-bucket level.

The progressive overfill above compounds this: at $r=1.025$ the
extra load on the top levels lightens every level below and frees
the budget for the wrap-up pair sooner, so the same bound-driven
cascade with the truncation test closes at $L=7$ in place of $13$
at load $0.900$, keeping the per-level failure budget $\delta$
throughout. The two
structural cuts together take the proven-bound sizing from $14$
levels to $7$. The remaining gap to the $5$ levels of the near-exact
sizing of Section~\ref{sec:quantiles} is the price of keeping a
theorem behind every level.
\end{remark}

\paragraph{Asymptotics.}
The asymptotic guarantees are stated and proved in
Appendix~\ref{app:asymptotics}:
\begin{center}
\fbox{\parbox{0.92\linewidth}{For every fixed configuration
$(s,\alpha,\delta)$ with $\alpha<1$, the cascade has $O(\log q)$
levels, and on a successful build a positive lookup costs $O(1)$
expected probes, the constants depending only on $(s,\alpha,\delta)$
(Proposition~\ref{prop:depth}, Corollary~\ref{cor:probes}).}}
\end{center}

\subsection{Near-exact overflow approximation}\label{sec:quantiles}

Every sizing so far consumes a proven tail bound, and a bound
carries slack. We can give up the closed
form and compute an approximate $(1-\delta)$-quantile
$\widehat{Q}(o,n)$ of the overflow $O$ itself: the true quantile
sits below any valid bound on it, and $\widehat{Q}$ tracks it
closely, so levels shrink further. Figure~\ref{fig:approx}
quantifies both statements at $n=256$: $\widehat{Q}$ stays within
$2$ keys of the simulated quantile, and the envelope $B$ reaches
more than twice $\widehat{Q}$ at low load and comes within $0.2\%$
of it at high load. The concession is that the configuration is backed by
a computation checked against simulation rather than by a theorem.
The computation is a fast four-way dispatch: 1) exact enumeration
for tiny inputs, 2) the deterministic $o-sn$ once a union bound
makes every bucket full with probability $1-\delta$, 3) a
Cornish--Fisher quantile built from de-Poissonized cumulants of $O$
when $\E[O]$ is large, which matches Monte Carlo within 3 keys at
quantiles near $4\times10^4$, and 4) a compound-Poisson tail
otherwise. Our sizing recipe applies this dispatch under the
overfill policy above: each level takes the largest $n_i$ whose
quantile-sized load meets its target and hands down its overflow
estimate, $o_{i+1}=\widehat{Q}(o_i,n_i)$, except the first, which
hands down the residue its target leaves,
$o_2=\lceil q-\alpha_1 s n_1\rceil$; the cascade truncates as in
Corollary~\ref{cor:shortcircuit}. Once the residue is under $2\%$ of
$q$, the recipe tests at every level whether the wrap-up pair, here
the smallest $n$ satisfying $\widehat{Q}\le s$ plus one final
bucket, fits the unspent bucket budget within the configured
overrun margin, closing the table where it first fits. Computed as for the figures, at
$(q,s,\alpha,\delta)=(10^5,8,0.9,2^{-10})$ and $r=1.1$ the recipe
yields $5$ levels, $(n_i)=(7334,4962,1401,65,1)$, and realized load
$0.908$, with no terminal overflow in $2\times10^4$ simulated
builds.
Figure~\ref{fig:framed} in Section~\ref{sec:variations} is computed
by the same dispatch.
Figure~\ref{fig:lookup} compares the two sizings across load
factors, the full bound \eqref{eq:Bdef} with its truncation test
against the approximation recipe, both under the overfill policy at
$r=1.025$. The approximation's advantage is concentrated in depth:
the constructions agree below $\alpha\approx0.75$, and bounds then
cost $1$ to $3$ extra levels ($7$ against $5$ at $\alpha=0.9$, the
overfill trimming the $13$ levels of Remark~\ref{rem:free-tail} to
$7$), pure negative-lookup overhead, while the expected
positive-lookup cost never differs by more than about $1\%$ of a
probe. The bound's conservatism adds extra top-level buckets that
lighten every level below, so the positive-lookup cost barely
moves, and at $\alpha=0.96$ the comparison even flips in the
bound's favor.

The curves of Figure~\ref{fig:lookup} also show that the cascade has
no cliff near full load, where linear probing has one. Under linear probing, Knuth's
analysis~\cite{knuth1998taocp3} puts the expected number of probes
at about $\frac12\bigl(1+1/(1-\alpha)^2\bigr)$ for an unsuccessful
search and $\frac12\bigl(1+1/(1-\alpha)\bigr)$ for a successful one,
unbounded as $\alpha\to1$: $50.5$ and $5.5$ at $\alpha=0.9$, $313$
and $13$ at $\alpha=0.96$. A multitable lookup probes at most one
bucket per level, counting one bucket of $s$ slots, scanned with
SIMD, as one probe where linear probing counts single slots. With
the near-exact approximation the level count rises from $3$ at
$\alpha=0.5$ to $6$ at $\alpha=0.96$, about linearly in
$\log(1/(1-\alpha))$, an observed rate rather than a
proven one, while a positive lookup costs $1.01$ to about $1.5$
bucket probes in expectation.

\begin{figure}[t]
\centering
\definecolor{faOIblue}{HTML}{0072B2}
\definecolor{faOIgreen}{HTML}{009E73}
\definecolor{faRefGray}{HTML}{808080}
\begin{tikzpicture}[baseline=(current bounding box.north)]
\begin{groupplot}[
    group style={group size=2 by 1, horizontal sep=1.35cm},
    width=0.5\textwidth,
    height=6cm,
    xmode=log, ymode=log,
    log basis x=2,
    xmin=1, xmax=145,
    xtick={1,2,4,8,16,32,64,128},
    xticklabels={1,2,4,8,16,32,64,128},
    xlabel={mean load $\lambda = q/n$},
    legend cell align=left,
    tick label style={font=\footnotesize},
    label style={font=\footnotesize},
    axis lines=left,
]
% ---------------------------------------------------------- left panel
\nextgroupplot[
    ymin=0.0001, ymax=2,
    ytick={1e-4,1e-3,1e-2,1e-1,1},
    ylabel={overflow per entry (value$/q$)},
    ylabel style={yshift=-0.15cm},
    legend style={font=\scriptsize, at={(0.985,0.015)}, anchor=south east,
        draw=none, fill=none},
]
% proven envelope B
\addplot[forget plot, black, double, thin] coordinates {
(1,0.0011378) (1.25,0.0057365) (1.5,0.0090637) (1.75,0.0092593) (2,0.0097756)
(2.25,0.010621) (2.5,0.011824) (3,0.015474) (3.5,0.021144) (4,0.029279)
(5,0.05432) (6,0.091672) (8,0.18045) (10,0.27218) (12,0.36107)
(16,0.50495) (20,0.60147) (24,0.66737) (32,0.7503) (48,0.83345)
(64,0.87506) (96,0.91669) (128,0.93751)
};
% near-exact approximation Qhat (four-way dispatch); no point where Qhat = 0
\addplot[forget plot, faOIblue, solid, thin, mark=*, mark size=0.9pt,
    mark options={fill=faOIblue, draw=faOIblue}] coordinates {
(1.25,0.003125) (1.5,0.0052083) (1.75,0.0044643) (2,0.0058594) (2.25,0.0069444)
(2.5,0.0078125) (3,0.011719) (3.5,0.016741) (4,0.024414) (5,0.04375)
(6,0.07487) (8,0.16113) (10,0.26172) (12,0.35677) (16,0.50415)
(20,0.60117) (24,0.66732) (32,0.75) (48,0.83333) (64,0.875)
(96,0.91667) (128,0.9375)
};
% exact (1-delta)-quantile from the conditional-binomial DP, over the
% sub-range where that DP is tractable at n = 256 -- ground truth for Qhat
\addplot[forget plot, faOIgreen, solid, thin, mark=square*, mark size=0.5pt,
    mark options={fill=faOIgreen, draw=faOIgreen}] coordinates {
(1.25,0.003125) (1.5,0.0026042) (1.75,0.0044643) (2,0.0058594) (2.25,0.0069444)
(2.5,0.0078125) (3,0.011719) (3.5,0.015625) (4,0.022461) (5,0.042969)
(6,0.074219) (8,0.16064)
};
% Monte Carlo (1-delta)-quantile: the 1025th largest of 2^20 runs,
% drawn last, and omitted where it is 0
\addplot[forget plot, black, only marks, mark=o, mark size=1.5pt] coordinates {
(1.25,0.003125) (1.5,0.0026042) (1.75,0.0044643) (2,0.0058594) (2.25,0.0069444)
(2.5,0.0078125) (3,0.011719) (3.5,0.015625) (4,0.022461) (5,0.042969)
(6,0.074219) (8,0.16064) (10,0.26133) (12,0.35645) (16,0.50391)
(20,0.60098) (24,0.66699) (32,0.75) (48,0.83333) (64,0.875)
(96,0.91667) (128,0.9375)
};
\addlegendimage{black, thin, double}
\addlegendentry{$B$ (proven envelope)}
\addlegendimage{faOIblue, solid, thin, mark=*, mark size=0.7pt}
\addlegendentry{$\widehat{Q}$ (approximation)}
\addlegendimage{faOIgreen, solid, thin, mark=square*, mark size=0.6pt}
\addlegendentry{exact $(1-\delta)$-quantile}
\addlegendimage{black, only marks, mark=o, mark size=1.5pt}
\addlegendentry{MC quantile, $2^{20}$ runs}
% --------------------------------------------------------- right panel
\nextgroupplot[
    ymin=0.4, ymax=2.8,
    ytick={0.5,0.7,1,1.5,2},
    yticklabels={0.5,0.7,1,1.5,2},
    minor ytick={0.6,0.8,0.9,1.2,2.5},
    ylabel={ratio to $\widehat{Q}$},
    ylabel style={yshift=-0.1cm},
    legend style={font=\scriptsize, at={(0.97,0.97)}, anchor=north east,
        draw=none, fill=none},
]
% unit reference line
\addplot[forget plot, faRefGray, very thin, densely dashed, no marks,
    samples=2, domain=0.9:145] {1};
% B / Qhat -- the slack the approximation removes
\addplot[forget plot, black, thin, double, mark=*, mark size=1.1pt,
    mark options={fill=black, fill opacity=0.4, draw opacity=0.4}] coordinates {
(1.25,1.8357) (1.5,1.7402) (1.75,2.0741) (2,1.6684) (2.25,1.5294)
(2.5,1.5134) (3,1.3205) (3.5,1.263) (4,1.1993) (5,1.2416)
(6,1.2244) (8,1.1199) (10,1.04) (12,1.012) (16,1.0016)
(20,1.0005) (24,1.0001) (32,1.0004) (48,1.0001) (64,1.0001)
(96,1) (128,1)
};
% MC / Qhat -- fidelity of the approximation
\addplot[forget plot, black, only marks, mark=o, mark size=1.5pt] coordinates {
(1.25,1) (1.5,0.5) (1.75,1) (2,1) (2.25,1)
(2.5,1) (3,1) (3.5,0.93333) (4,0.92) (5,0.98214)
(6,0.9913) (8,0.99697) (10,0.99851) (12,0.99909) (16,0.99952)
(20,0.99968) (24,0.99951) (32,1) (48,1) (64,1)
(96,1) (128,1)
};
\addlegendimage{black, thin, double, mark=*, mark size=1.1pt, mark options={fill=black, fill opacity=0.4, draw opacity=0.4}}
\addlegendentry{$B/\widehat{Q}$}
\addlegendimage{black, only marks, mark=o, mark size=1.5pt}
\addlegendentry{MC$/\widehat{Q}$}
\addlegendimage{faRefGray, very thin, densely dashed}
\addlegendentry{ratio $=1$}
\end{groupplot}
\end{tikzpicture}
\caption{The proven envelope $B$ of \eqref{eq:Bdef}, the
approximation $\widehat{Q}$, the exact $(1-\delta)$-quantile where
tractable ($\lambda\le8$), and the empirical $(1-\delta)$-quantile
of $2^{20}$ simulated runs (circles), all normalized per key, versus
the mean load $\lambda=q/n$ at $s=8$, $n=256$, $\delta=2^{-10}$
(left); $B$ and the simulated quantile as ratios to $\widehat{Q}$
(right). The exact and empirical quantiles coincide at every
tractable point, and $\widehat{Q}$ is within $2$ keys of both. $B$
exceeds $2\,\widehat{Q}$ at $\lambda=1.75$, runs $12$ to $32$
percent above it on $3\le\lambda\le8$, and agrees within $0.2\%$
from $\lambda=16$ on. The quantile curves start just above the onset
$\lambda\approx1.18$ of $\PR[O\ge1]>\delta$, below which only the
bound is positive.}
\label{fig:approx}
\end{figure}

\begin{figure}[t]
\centering
% baseline=north top-aligns the two panels side by side
\begin{tikzpicture}[baseline=(current bounding box.north)]
\begin{axis}[
    width=0.48\textwidth,
    height=4.8cm,
    xmin=0.48, xmax=0.98,
    xlabel={target load factor $\alpha$},
    ylabel={expected positive-lookup probes},
    ylabel style={yshift=-0.15cm},
    legend style={font=\footnotesize, at={(0.02,0.98)}, anchor=north west, draw=none, fill=none},
    tick label style={font=\footnotesize},
    label style={font=\footnotesize},
    axis lines=left,
]
\addplot+[mark=*, mark size=1.2pt, blue, thick] coordinates {
(0.50,1.01196) (0.55,1.01902) (0.60,1.02909) (0.65,1.04296) (0.70,1.06204) (0.75,1.08854) (0.80,1.12516) (0.85,1.17929) (0.90,1.26395) (0.92,1.31605) (0.94,1.39024) (0.95,1.44358) (0.96,1.49785)
};
\addlegendentry{bounded}
\addplot+[mark=square*, mark size=1.2pt, red!70!black, thick] coordinates {
(0.50,1.01153) (0.55,1.01835) (0.60,1.02795) (0.65,1.04123) (0.70,1.05935) (0.75,1.08418) (0.80,1.11893) (0.85,1.17024) (0.90,1.25305) (0.92,1.30397) (0.94,1.37784) (0.95,1.43032) (0.96,1.50510)
};
\addlegendentry{near-exact approximation}
\end{axis}
\end{tikzpicture}\hfill
\begin{tikzpicture}[baseline=(current bounding box.north)]
\begin{axis}[
    width=0.48\textwidth,
    height=4.8cm,
    xmin=0.48, xmax=0.98,
    ymin=2, ymax=10,
    ytick={2,3,4,5,6,7,8,9,10},
    xlabel={target load factor $\alpha$},
    ylabel={levels $L$ (negative-lookup probes)},
    ylabel style={yshift=-0.15cm},
    legend style={font=\footnotesize, at={(0.02,0.98)}, anchor=north west, draw=none, fill=none},
    tick label style={font=\footnotesize},
    label style={font=\footnotesize},
    axis lines=left,
]
\addplot+[mark=*, mark size=1.2pt, blue, thick] coordinates {
(0.50,3) (0.55,3) (0.60,4) (0.65,4) (0.70,4) (0.75,4) (0.80,5) (0.85,6) (0.90,7) (0.92,7) (0.94,8) (0.95,8) (0.96,9)
};
\addlegendentry{bounded}
\addplot+[mark=square*, mark size=1.2pt, red!70!black, thick] coordinates {
(0.50,3) (0.55,3) (0.60,4) (0.65,4) (0.70,4) (0.75,4) (0.80,4) (0.85,5) (0.90,5) (0.92,5) (0.94,6) (0.95,6) (0.96,6)
};
\addlegendentry{near-exact approximation}
\end{axis}
\end{tikzpicture}
\caption{Expected positive-lookup probes (left) and levels $L$, the
negative-lookup probe count (right), versus the target load factor
$\alpha$ at $q=10^5$, $s=8$, $\delta=2^{-10}$: sizing by the
full bound \eqref{eq:Bdef} with the
truncation test against sizing by the near-exact approximation
$\widehat{Q}$, both under the overfill policy with $r=1.025$.}
\label{fig:lookup}
\end{figure}

\section{Resizing and cache eviction}\label{sec:resizing}

\subsection{Resizing}\label{sec:grow}

The constructions of Section~\ref{sec:construction} assume the
capacity is known up front, but growth needs no rehash, because
nothing ties a level to the levels built before it: a level only
sees the keys that cascade into it, hashed by its own function, so
the table can resize without touching data at rest. Shortly before it reaches capacity we \emph{prepend} a fresh level
in front of level $1$, sized to keep the load factor above a chosen
floor, e.g., $0.8$. The trigger should precede the first terminal overflow, since by then the tail is full and the overflow of the new level would run straight into it; a simple one is a deep level, such as level $L-1$, reaching the key count the sizing planned for it. New keys hash into the prepended level first,
only its overflow trickles into the old structure, and no existing
key moves or is rehashed. When it fills in turn, we can either
prepend another (cheapest, though each adds a cost to lookups) 
or grow the prepended level in place and remap only its
own elements: a rehash of one small level, never of the whole table. 
The exact strategy will depend on the target load factor to ensure that new layer overflow doesn't overflow the table, e.g., when it is high we can redistribute entries from the small tail levels to the first one, clearing space for expected overflows at a marginal cost.
Prepending as the tail fills is also one of the three ways to
absorb deletion churn (Section~\ref{sec:churn}): under churn, level
$1$ turns away more inserts than the one-shot sizing planned for,
that flow runs into a tail sized for less, and a fresh level in
front takes it without moving a stored key.

A fixed-capacity table computes its levels largest-first
(Definitions~\ref{def:perfect} and~\ref{def:efficient}); a grown
table runs the picture in reverse, each prepended level larger
than everything behind it, for the cost of allocation alone. Hence
capacity moves in increments we choose, not in factors of two
($101$\,MB of payload can be held within a few percent of that, not
at the next power of two), and when deletions bring the load down
the table may shed its smallest levels, re-inserting their few keys.

\subsection{Cache tables}\label{sec:cache}

A cache needs its oldest entries evicted in bulk, and the level
structure gives us that almost for free: we build the table as a
ring of equal-sized levels (with an optional tail). Insertion
cascades from the logical first level, so long-lived entries sit
high while recent entries, finding the top levels full, land
deeper: on average, deeper levels hold newer data. When the table
crosses its load ceiling or an insertion terminally overflows, we
erase the logically first level (the oldest data) and advance a
wrapping counter: level $2$ becomes the new first level and the
erased level re-enters the ring empty at the back, catching the
deepest cascades. Eviction is one bulk clear, with no scans,
rehashing or timestamps, and to keep an entry alive we re-insert
it.

\section{Implementation}\label{sec:implementation}

We provide a Rust reference implementation \cite{manifoldGit}
with configurable bucket size, key and value sizes, and load factor,
in two variants: a \emph{plain} variant with almost no metadata, and a
\emph{filtered} variant that trades one byte per slot for fewer key
comparisons.

\paragraph{Memory layout.}
The plain variant stores, besides the buckets themselves, only the
per-level bucket counts and offsets and one length byte per bucket,
so that an all-zero key can never false-positive against an empty
slot; one byte per \emph{bucket}, against SwissTable's control byte
per \emph{slot}, is why the physical load factor tracks the slot
load factor so closely. Within a bucket all keys are stored
contiguously, then all values, and the key block is scanned linearly
with SIMD equality checks while the CPU fetches the adjacent value
block.

\paragraph{Bucket indexing.}
Levels have arbitrary bucket counts $n_i$, not powers of two, so a
hash is mapped to $[0,n_i)$ by Lemire's multiply-shift
method~\cite{lemire2019fast} without a division and, saving a
branch, without its rejection step, which skews bucket probabilities
by at most $2^{-32}$ and matters only at cryptographically low
failure budgets.

\paragraph{Byte filters.}
The filtered variant adds a one-byte filter per slot, checked with
SIMD before any key is compared, in the spirit of SwissTable's
control bytes~\cite{abseil2018swisstables,hashbrown2026crate}, with
one difference: where SwissTable spends a bit on occupancy and keeps
$7$ bits of hash fragment, we use all $8$ bits as fragment and
reserve two byte values instead, $0$ for empty and $1$ for a deleted
last slot, raising key fragments $0$ and $1$ to $2$. A mismatched
key then passes a slot's filter with probability
$262/2^{16}\approx1/250$, roughly half the $1/128$ of a $7$-bit
fragment, hence half the useless key comparisons. Because the filter
also identifies empty slots, it replaces the length byte, so at full
occupancy the filtered multitable stores $\ell/(\ell+1)$ payload bytes
per allocated byte for entries of $\ell$ bytes, $8/9\approx0.89$ for
$4$-byte keys and values against SwissTable's $0.777$: the same byte
per slot, but our slot load reaches one where theirs stops at $7/8$.
A delete clears the slot's filter byte to $0$ except in the last
slot, where it writes $1$, so a bucket's last byte is nonzero
exactly when the bucket has ever been full, the only way it can have
overflowed, and the insert probe uses that byte to end its duplicate
check early. The benchmark of Section~\ref{sec:performance} runs the
filtered multitable at $s=16$ and the plain one, which scans keys, at
$s=8$.

\paragraph{One hash per lookup.}
An operation computes a single $64$-bit hash, and each level past
the first reads its own $32$-bit window of it, rotated by an amount
fixed by the level index so that the window slides one byte per
level, and remaps the window as above. Nothing is stored per level,
and the cascade runs as deep as needed, $11$ levels in the plain
arms of Figure~\ref{fig:bench-lfsweep}. The reuse is a heuristic
outside the independence model of Section~\ref{sec:model}, and we
compensate by running the construction with a smaller per-level
$\delta$ than the failure rate we need: by \eqref{eq:B1} the
deviation term grows only like $\sqrt{\ln(1/\delta)}$, so tightening
$\delta$ by orders of magnitude costs a few percent of capacity at
$\alpha<1$ and, for the perfect table, slightly deepens the cascade
instead. The measured terminal overflow stays within the compensated
budget.

\paragraph{Choosing the failure budget.}
The bounds behind our guarantees are conservative: a level that
stays under plan starves the levels below, and the slack compounds.
The sizing recipe of Section~\ref{sec:quantiles} turns $10^4$ keys
at $s=8$, $\alpha=0.9$, $r=1.025$, $\delta=2^{-10}$ into
$(n_i)=(1035,285,57,15,1)$, terminal overflow bounded by about
$(L-1)\,\delta\approx0.004$, and in $10^7$ simulated builds it never
occurred. We hence pick $\delta$ so that the bound is merely
reasonable, e.g., $2^{-10}$, and let the compounding slack deliver
the reliability. See the implementation for remaining details \cite{manifoldGit}.

\subsection{Performance}\label{sec:performance}

The construction's promise is keys per byte, so we compare at equal
\emph{physical} memory and ask what that density costs in speed.
The baseline is hashbrown~\cite{hashbrown2026crate}, the strongest
we know of, on an Apple M2 Pro: multitable (filtered and plain) and
hashbrown are sized to the same footprint in bytes at every point,
matched within $0.5\%$ at every anchor (except the smallest $4{+}4$
table, where the two bucket sizes quantize $0.93\%$ apart),
multitable overfill factor $r=1.01$. We use FxHasher for both, since it's very fast and allows to minimize the role of hash function in evaluation; moreover we use the stable toolchain. The reference load is
hashbrown's own (for fair comparison): every size anchor holds $99\%$ of its capacity at a
power-of-two bucket count, leaving it just under its $7/8$
resize threshold, at slot load $0.866$, as full as it gets (we also benchmarked at 75\% of capacity with similar results, except the smaller multitable advantage on misses).
	Its one
control byte per slot turns that into a physical load
factor of $0.866\times8/9\approx0.770$ for the $4{+}4$ map's
$8$-byte elements; the same arithmetic gives $0.840$, $0.693$ and
$0.815$ for the other three shapes, and the multitable realizes
the same figures.

Four workloads: insert, positive and negative lookups, and a
half-hit half-miss mixed stream. Seven sizes per shape from
$3.5\cdot10^3$ to $2.3\cdot10^8$ keys step the working set from
cache into RAM (the $16{+}16$ shape's ladder tops out at
$1.16\cdot10^8$), and four shapes cover maps and sets: $4{+}4$ and
$16{+}16$ byte maps (Figures~\ref{fig:bench-kv4},
\ref{fig:bench-kv4-ratio} and~\ref{fig:bench-kv16}) and $4$- and
$16$-byte sets (Figures~\ref{fig:bench-set4}
and~\ref{fig:bench-set16}), the last three in
Appendix~\ref{app:benchmarks} with the load-factor sweep and the
sizing variants. The filtered multitable runs at $s=16$ and the plain one
at $s=8$, discussed below.

The multitable always hashes raw bytes, so the like-for-like arm
keys hashbrown on the same byte arrays (\emph{byte-keyed});
hashbrown's best case is a native integer key, so we also key it on
\texttt{u32}/\texttt{u128} for $4$- and $16$-byte keys (the
\emph{integer-keyed} arm, the conservative comparison), against which the byte-keyed
arm is about $11\%$ slower overall and $18\%$ on hits. All figures
report throughput, Criterion center estimates in Mops/s ($10^6$
operations per second); aggregates are arithmetic means over cells.
The full benchmark
methodology accompanies the implementation's repository
\cite{manifoldGit}.

\begin{figure}[t]
\centering
\definecolor{fkOIblue}{HTML}{0072B2}
\definecolor{fkOIverm}{HTML}{D55E00}
\definecolor{fkOIgreen}{HTML}{009E73}
\definecolor{fkOIpink}{HTML}{CC79A7}
\begin{tikzpicture}[baseline=(current bounding box.north)]
\begin{groupplot}[
    group style={group size=2 by 2,
        horizontal sep=0.95cm, vertical sep=0.9cm,
        xlabels at=edge bottom, xticklabels at=edge bottom,
        ylabels at=edge left},
    width=0.5\textwidth,
    height=4.4cm,
    xmode=log,
    xmin=2500, xmax=3.5e8,
    ymin=0,
    xtick={1e4,1e5,1e6,1e7,1e8},
    xlabel={$q$ (key-value pairs)},
    ylabel={Mops/s},
    title style={font=\footnotesize, yshift=-2pt},
    tick label style={font=\footnotesize},
    label style={font=\footnotesize},
    axis lines=left,
]
% ------------------------------------------------------------- insert
\nextgroupplot[title={insert}, ymax=460]
\addplot[fkOIblue, thick, mark=*, mark size=1pt,
    mark options={fill=fkOIblue, draw=fkOIblue}] coordinates {
(3549,411.133) (14193,439.580) (56771,343.218) (227083,259.518)
(908330,241.196) (3633317,217.278) (232532215,54.465)
};
\addplot[fkOIgreen, thick, mark=triangle*, mark size=1.4pt,
    mark options={fill=fkOIgreen, draw=fkOIgreen}] coordinates {
(3549,427.259) (14193,360.231) (56771,290.571) (227083,259.734)
(908330,225.332) (3633317,121.804) (232532215,60.729)
};
\addplot[fkOIverm, thick, mark=square*, mark size=1pt,
    mark options={fill=fkOIverm, draw=fkOIverm}] coordinates {
(3549,273.928) (14193,255.043) (56771,203.231) (227083,180.901)
(908330,173.286) (3633317,155.448) (232532215,42.721)
};
\addplot[fkOIpink, thick, densely dashed, mark=square, mark size=1.2pt,
    mark options={draw=fkOIpink, fill=none, solid}] coordinates {
(3549,266.823) (14193,236.167) (56771,185.756) (227083,166.570)
(908330,160.182) (3633317,143.111) (232532215,40.873)
};
% ---------------------------------------------------- positive lookup
\nextgroupplot[title={positive lookup}, ymax=860]
\addplot[fkOIblue, thick, mark=*, mark size=1pt,
    mark options={fill=fkOIblue, draw=fkOIblue}] coordinates {
(3549,803.729) (14193,790.576) (56771,401.865) (227083,315.110)
(908330,300.788) (3633317,179.617) (232532215,87.593)
};
\addplot[fkOIgreen, thick, mark=triangle*, mark size=1.4pt,
    mark options={fill=fkOIgreen, draw=fkOIgreen}] coordinates {
(3549,325.563) (14193,303.914) (56771,265.993) (227083,255.382)
(908330,240.506) (3633317,85.397) (232532215,46.442)
};
\addplot[fkOIverm, thick, mark=square*, mark size=1pt,
    mark options={fill=fkOIverm, draw=fkOIverm}] coordinates {
(3549,561.230) (14193,452.899) (56771,315.259) (227083,263.116)
(908330,244.726) (3633317,126.852) (232532215,54.492)
};
\addplot[fkOIpink, thick, densely dashed, mark=square, mark size=1.2pt,
    mark options={draw=fkOIpink, fill=none, solid}] coordinates {
(3549,438.078) (14193,380.329) (56771,269.644) (227083,226.994)
(908330,210.962) (3633317,99.089) (232532215,44.820)
};
% ---------------------------------------------------- negative lookup
\nextgroupplot[title={negative lookup}, ymax=330]
\addplot[fkOIblue, thick, mark=*, mark size=1pt,
    mark options={fill=fkOIblue, draw=fkOIblue}] coordinates {
(3549,283.310) (14193,294.994) (56771,282.159) (227083,216.953)
(908330,250.784) (3633317,180.914) (232532215,54.561)
};
\addplot[fkOIgreen, thick, mark=triangle*, mark size=1.4pt,
    mark options={fill=fkOIgreen, draw=fkOIgreen}] coordinates {
(3549,303.721) (14193,305.064) (56771,306.786) (227083,249.085)
(908330,248.466) (3633317,65.815) (232532215,38.462)
};
\addplot[fkOIverm, thick, mark=square*, mark size=1pt,
    mark options={fill=fkOIverm, draw=fkOIverm}] coordinates {
(3549,188.076) (14193,88.397) (56771,77.228) (227083,67.530)
(908330,63.493) (3633317,61.039) (232532215,21.290)
};
\addplot[fkOIpink, thick, densely dashed, mark=square, mark size=1.2pt,
    mark options={draw=fkOIpink, fill=none, solid}] coordinates {
(3549,222.370) (14193,81.922) (56771,70.095) (227083,61.133)
(908330,57.955) (3633317,56.267) (232532215,19.795)
};
% ------------------------------------------------------- mixed lookup
\nextgroupplot[
    title={mixed lookup (50\% hits)},
    ymax=630,
    legend style={font=\scriptsize, at={(0,-0.35)}, anchor=north,
        draw=none, fill=none,
        /tikz/every even column/.append style={column sep=0.5cm}},
    legend columns=2,
    legend cell align=left,
]
\addplot[fkOIblue, thick, mark=*, mark size=1pt,
    mark options={fill=fkOIblue, draw=fkOIblue}] coordinates {
(3549,603.464) (14193,176.146) (56771,126.497) (227083,108.512)
(908330,109.083) (3633317,91.278) (232532215,52.577)
};
\addlegendentry{multitable filtered, $s{=}16$ (byte keys)}
\addplot[fkOIgreen, thick, mark=triangle*, mark size=1.4pt,
    mark options={fill=fkOIgreen, draw=fkOIgreen}] coordinates {
(3549,203.289) (14193,117.980) (56771,98.436) (227083,90.098)
(908330,86.450) (3633317,51.987) (232532215,30.221)
};
\addlegendentry{multitable plain, $s{=}8$ (byte keys)}
\addplot[fkOIverm, thick, mark=square*, mark size=1pt,
    mark options={fill=fkOIverm, draw=fkOIverm}] coordinates {
(3549,469.462) (14193,117.726) (56771,87.355) (227083,79.928)
(908330,76.145) (3633317,68.334) (232532215,30.658)
};
\addlegendentry{hashbrown (\texttt{u32} keys)}
\addplot[fkOIpink, thick, densely dashed, mark=square, mark size=1.2pt,
    mark options={draw=fkOIpink, fill=none, solid}] coordinates {
(3549,441.034) (14193,105.635) (56771,78.996) (227083,71.119)
(908330,67.740) (3633317,61.525) (232532215,27.815)
};
\addlegendentry{hashbrown (byte keys)}
\end{groupplot}
\end{tikzpicture}
\caption{Throughput on an Apple M2 Pro for the $4{+}4$-byte map shape
at overfill factor \hyperref[par:overfill]{$r=1.01$}: the filtered
multitable at $s=16$ and the plain one at $s=8$ against hashbrown's
\texttt{HashMap} in both of its keyings, the multitable's own byte
arrays and native \texttt{u32}, at equal physical memory per size,
versus stored key-value pairs $q$. Every series realizes physical
load factor $0.770$, and hashbrown's two keyings are one table. The
key axis is logarithmic; throughput is linear, each panel on its own
range, higher is better.}
\label{fig:bench-kv4}
\end{figure}

\begin{figure}[t]
\centering
\definecolor{fkrOIblue}{HTML}{0072B2}
\definecolor{fkrOIgreen}{HTML}{009E73}
\definecolor{fkrOIverm}{HTML}{D55E00}
\definecolor{fkrOIpurp}{HTML}{CC79A7}
\definecolor{fkrRefGray}{HTML}{808080}
\begin{tikzpicture}[baseline=(current bounding box.north)]
\begin{groupplot}[
    group style={group size=2 by 1,
        horizontal sep=0.9cm,
        ylabels at=edge left, yticklabels at=edge left},
    width=0.47\textwidth,
    height=5cm,
    xmode=log,
    xmin=2500, xmax=3.5e8,
    ymin=0.8, ymax=4.6,
    xtick={1e4,1e5,1e6,1e7,1e8},
    ytick={1,2,3,4},
    xlabel={$q$ (key-value pairs)},
    ylabel={speedup ($\times$)},
    title style={font=\footnotesize, yshift=-2pt},
    tick label style={font=\footnotesize},
    label style={font=\footnotesize},
    axis lines=left,
]
% -------------------------------------------- vs. hashbrown, byte keys
\nextgroupplot[title={filtered $s{=}16$ (bytes) vs.\ hashbrown, bytes}]
\addplot[fkrRefGray, densely dashed, thin, forget plot] coordinates {
(2500,1) (3.5e8,1)
};
\addplot[fkrOIblue, thick, mark=*, mark size=1pt,
    mark options={fill=fkrOIblue, draw=fkrOIblue}] coordinates {
(3549,1.5408) (14193,1.8613) (56771,1.8477) (227083,1.5580)
(908330,1.5058) (3633317,1.5183) (232532215,1.3326)
};
\addplot[fkrOIgreen, thick, mark=triangle*, mark size=1.4pt,
    mark options={fill=fkrOIgreen, draw=fkrOIgreen}] coordinates {
(3549,1.8347) (14193,2.0787) (56771,1.4904) (227083,1.3882)
(908330,1.4258) (3633317,1.8127) (232532215,1.9543)
};
\addplot[fkrOIverm, thick, mark=square*, mark size=1pt,
    mark options={fill=fkrOIverm, draw=fkrOIverm}] coordinates {
(3549,1.2740) (14193,3.6009) (56771,4.0254) (227083,3.5489)
(908330,4.3272) (3633317,3.2153) (232532215,2.7563)
};
\addplot[fkrOIpurp, thick, mark=diamond*, mark size=1.4pt,
    mark options={fill=fkrOIpurp, draw=fkrOIpurp}] coordinates {
(3549,1.3683) (14193,1.6675) (56771,1.6013) (227083,1.5258)
(908330,1.6103) (3633317,1.4836) (232532215,1.8903)
};
% ---------------------------------------------- vs. hashbrown, u32 keys
\nextgroupplot[
    title={filtered $s{=}16$ (bytes) vs.\ hashbrown, \texttt{u32}},
    legend style={font=\scriptsize, at={(0,-0.3)}, anchor=north,
        draw=none, fill=none, legend columns=5,
        /tikz/every even column/.append style={column sep=0.3cm}},
    legend cell align=left,
]
\addplot[fkrRefGray, densely dashed, thin] coordinates {
(2500,1) (3.5e8,1)
};
\addlegendentry{parity}
\addplot[fkrOIblue, thick, mark=*, mark size=1pt,
    mark options={fill=fkrOIblue, draw=fkrOIblue}] coordinates {
(3549,1.5009) (14193,1.7235) (56771,1.6888) (227083,1.4346)
(908330,1.3919) (3633317,1.3977) (232532215,1.2749)
};
\addlegendentry{insert}
\addplot[fkrOIgreen, thick, mark=triangle*, mark size=1.4pt,
    mark options={fill=fkrOIgreen, draw=fkrOIgreen}] coordinates {
(3549,1.4321) (14193,1.7456) (56771,1.2747) (227083,1.1976)
(908330,1.2291) (3633317,1.4160) (232532215,1.6074)
};
\addlegendentry{positive lookup}
\addplot[fkrOIverm, thick, mark=square*, mark size=1pt,
    mark options={fill=fkrOIverm, draw=fkrOIverm}] coordinates {
(3549,1.5064) (14193,3.3371) (56771,3.6536) (227083,3.2127)
(908330,3.9498) (3633317,2.9639) (232532215,2.5627)
};
\addlegendentry{negative lookup}
\addplot[fkrOIpurp, thick, mark=diamond*, mark size=1.4pt,
    mark options={fill=fkrOIpurp, draw=fkrOIpurp}] coordinates {
(3549,1.2854) (14193,1.4962) (56771,1.4481) (227083,1.3576)
(908330,1.4326) (3633317,1.3358) (232532215,1.7149)
};
\addlegendentry{mixed lookup}
\end{groupplot}
\end{tikzpicture}
\caption{Per-operation speedup of the filtered multitable ($s=16$)
over hashbrown's \texttt{HashMap} on the $4{+}4$-byte shape, at equal
physical memory on an Apple M2 Pro at
\hyperref[par:overfill]{$r=1.01$}, versus the number of stored
key-value pairs $q$: left, hashbrown keyed on the same raw bytes;
right, on native \texttt{u32}. Ratios above one favor the
multitable.}
\label{fig:bench-kv4-ratio}
\end{figure}

Figure~\ref{fig:bench-kv4-ratio} replots the $4{+}4$ shape as
per-operation ratios, and all $56$ cells of the two panels sit above
parity. Against the byte-keyed arm the ratios run from $1.27\times$
to $4.33\times$, mean $2.04\times$ over the shape's
$28$ cells; against the integer-keyed arm
they run from $1.20\times$ to $3.95\times$, mean $1.84\times$.
Figure~\ref{fig:bench-kv4} shows where the ratios come from:
integer-keyed hashbrown's negative-lookup throughput falls from
$188$ to $88$ Mops/s after the smallest size while the filtered
multitable holds between $217$ and $295$ through $9\times10^5$ keys, and
every series loses $40$ to $64\%$ of its positive-lookup throughput
between $9\times10^5$ and $3.6\times10^6$ keys, where the $4{+}4$
map leaves cache.

Over the $84$ equal-memory configurations ($4$ shapes, $7$ sizes,
insert, hit and miss) the filtered multitable is ahead in every cell,
by $1.23$-$4.47\times$ against the byte-keyed arm and
$1.08$-$3.95\times$ against the integer-keyed one, and the mixed
stream, over its own $28$ cells, runs $1.56\times$ and $1.43\times$.
No size aggregate goes to hashbrown: the weakest anchor, the
smallest tables, still averages $1.52\times$ and $1.35\times$ over
its $16$ cells of four shapes and four operations. Split by size,
the five smallest anchors per shape (under $10$\,MB for the
$4$-byte shapes, $18$\,MB for the $16$-byte set and already $35$\,MB for the $16{+}16$ map) against
the two largest, the arithmetic means are those of Table~\ref{tab:perf}:
\begin{center}
\footnotesize
\begin{tabular}{lcccccccc}
 & \multicolumn{4}{c}{vs.\ byte-keyed} &
   \multicolumn{4}{c}{vs.\ integer-keyed}\\
arithmetic mean & insert & positive & negative & combined
               & insert & positive & negative & combined\\
\hline
all $84$ cells & $1.57\times$ & $1.57\times$ & $3.24\times$ & $2.13\times$
               & $1.47\times$ & $1.34\times$ & $2.94\times$ & $1.92\times$\\
five smallest anchors ($60$) & $1.63\times$ & $1.57\times$ & $3.36\times$ & $2.19\times$
               & $1.52\times$ & $1.35\times$ & $3.03\times$ & $1.97\times$\\
two largest anchors ($24$) & $1.43\times$ & $1.58\times$ & $2.93\times$ & $1.98\times$
               & $1.36\times$ & $1.30\times$ & $2.69\times$ & $1.78\times$\\
\end{tabular}
\captionof{table}{Throughput ratio of the filtered multitable ($s=16$) over hashbrown's \texttt{HashMap} at equal physical memory on an Apple M2 Pro: arithmetic means over the $84$ equal-memory cells ($4$ shapes, $7$ sizes, insert, positive and negative lookup) and over the five smallest and two largest size anchors per shape; \emph{combined} averages the three operations; ratios above one favor the multitable.}\label{tab:perf}
\end{center}
The miss advantage peaks at the smaller sizes, where the byte
filters reject most misses without a key comparison, and narrows at
the two largest without closing; the geometric means of the same $84$
ratios are $1.96\times$ and $1.75\times$ (byte and integer keys).

The filtered multitable runs at $s=16$ for the hit-miss trade: against
$s=8$, the wider bucket cuts positive-lookup time to $0.88\times$ and
adds $5\%$ to negative-lookup time. The plain multitable runs at $s=8$,
being slower at $s=16$ on every shape's aggregate, by $1.19\times$
overall.

The filter byte pays off most where entries are wide: at the shared
$s=8$ the plain multitable's per-operation time runs $1.13\times$ the
filtered one's on the $4$-byte set to $2.35\times$ on the
$16{+}16$ map (geometric means over insert, hit, and miss), so with
$16$ bytes to compare, one filter byte per slot is cheap insurance
against failing comparisons. With $4$-byte keys, though, the plain
multitable beats hashbrown outright in most cells on insert, negative and
mixed lookups, above all on the $4$-byte set
(Appendix~\ref{app:benchmarks}); with $16$-byte keys it trails, and positive
lookups are its weak operation, $8$ of $28$ cells against the
byte-keyed arm and $3$ against the integer-keyed one.

The other three shapes, charted in Appendix~\ref{app:benchmarks},
hold the same ordering, and the load-factor sweep of
Figure~\ref{fig:bench-lfsweep} holds throughput fixed instead of
memory: the filtered multitable matches hashbrown's mixed-lookup
throughput with up to $12\%$ fewer bytes, and in cache the plain
multitable with $18\%$ fewer, $22\%$ more keys in the same memory.
The plain multitable's throughput has no cliff as the load approaches
$1$, giving up $6$ to $25\%$ from realized physical load $0.86$ to
$0.977$ (Appendix~\ref{app:benchmarks}); the filtered multitable,
whose slot load on this shape is $9/8$ of its physical load, stops
near $0.88$. Rounding level counts to
powers of two, the sizing variant of
Section~\ref{sec:implementation} tested in
Appendix~\ref{app:benchmarks}, is a wash where the first level
already sits at a power of two and costs about $14\%$ elsewhere at
$s=16$ (Figure~\ref{fig:bench-variants}). The campaign's scope is one
machine and one baseline.

\section{Variations}\label{sec:variations}

\paragraph{Large keys.}
When keys are large and need not be iterated over, the table can store
a fixed-size hash of each key instead of the key itself, with the
usual false-positive trade-off (everything else is unchanged).

\paragraph{Framed table.}
The basic construction jumps between levels that may be far apart in
memory. When a lookup should stay inside one region (one SSD page,
one DRAM row or huge page, one disk sector, one server), locality
can be enforced with one extra indirection:
\begin{enumerate}[leftmargin=*,nosep]
\item Fix the bucket size $s$, the frame capacity $S$ (the region's
byte size over the entry size), and a budget $\delta$.
\item Choose the number of frames $N$ as large as the target load
factor $\alpha = q/(NS)$ allows, subject to some $q_{\max}$ with
$s<q_{\max}<S-s$ capping every frame's key count with probability $1-\delta$
($q_{\max}$ bounded via Lemma~\ref{lem:counts} and Markov, as
before).
\item If no $N$ meets the target load, enlarge the frame.
\item Set $\alpha_{\max}:=q_{\max}/(S-s)$ and build every frame as a
standalone multitable for $q_{\max}$ keys at load factor
$\alpha_{\max}$: by Theorem~\ref{thm:efficient}(a) the frame's
capacity is at most $q_{\max}/\alpha_{\max}+s=S$, so each frame fits
its region. The guarantees compose by a union bound: with probability
at least $1-\delta$ no frame receives more than $q_{\max}$ keys, and
each of the $N$ frames then builds with probability at least
$1-(L-1)\delta$ (Theorem~\ref{thm:efficient}(b), whose proof bounds
every level at any feed up to its plan), so the whole table succeeds
with probability at least $1-\delta-N(L-1)\delta$, $L$ being the
depth of one frame.
\end{enumerate}
A lookup maps the key's hash to a frame, then runs the ordinary
multitable lookup inside it: positive or negative, it touches a single
region, which is what a paged or sharded medium needs, and frames
parallelize trivially. The same arithmetic with single-bucket frames
yields a one-level table. In both roles a small concession helps:
instead of demanding that nothing overflow, we tolerate an overflow
of up to $s$ keys and catch it in one spare bucket, small enough to
pin in fast memory.

\begin{figure}[t]
\centering
\begin{tikzpicture}[baseline=(current bounding box.north)]
\begin{axis}[
    width=0.55\textwidth,
    height=5cm,
    xmode=log,
    log basis x=2,
    xmin=2, xmax=2048,
    enlarge x limits=0.04,
    xtick={2,8,32,128,512,2048},
    xlabel={$s$ (bucket size)},
    ylabel={max load factor},
    ylabel style={yshift=-0.15cm},
    % legend to the right of the panel, vertically centered, to save vertical space
    legend style={font=\footnotesize, at={(1.03,0.5)}, anchor=west, draw=none, fill=none, legend columns=1},
    tick label style={font=\footnotesize},
    label style={font=\footnotesize},
    axis lines=left,
    ymin=0, ymax=1.05,
]
\addplot+[mark=*, mark size=1.2pt, blue, thick] coordinates {
(2,0.000004) (4,0.001447) (8,0.026509) (16,0.115672) (32,0.258942) (64,0.416233) (128,0.559050) (256,0.675676) (512,0.766284) (1024,0.833333) (2048,0.877193)
};
\addlegendentry{no spare bucket}
\addplot+[mark=square*, mark size=1.2pt, red!70!black, thick] coordinates {
(2,0.000506) (4,0.027231) (8,0.147194) (16,0.336931) (32,0.528663) (64,0.686106) (128,0.801603) (256,0.883002) (512,0.934579) (1024,0.970874) (2048,0.980392)
};
\addlegendentry{+1 spare bucket}
\end{axis}
\end{tikzpicture}
\caption{Maximum load factor of a single level versus $s$,
$q=102400$, $\delta=2^{-20}$: the smallest $n$ with
$\PR[O\ge1]\le\delta$ (zero overflow) versus the smallest $n$ with
$\PR[O\ge s{+}1]\le\delta$ plus one spare bucket catching the at
most $s$ overflowing keys, the overflow quantiles computed by the
dispatch of Section~\ref{sec:quantiles}.}
\label{fig:framed}
\end{figure}
Figure~\ref{fig:framed} quantifies the concession: the spare
bucket lifts the maximum load factor at every $s$ shown, and the load factor
climbs steeply with $s$, reaching $0.98$ at $s=2048$ for
$q\approx10^5$, which is why framed configurations favor large
buckets.

\section{Deletion churn}\label{sec:churn}

The guarantees of Section~\ref{sec:construction} are one-shot. The
simplest way to admit deletions is
SwissTable's~\cite{abseil2018swisstables,hashbrown2026crate}: a
delete leaves a tombstone, so its slot stays counted against
capacity, and when the spare capacity runs out the table is rebuilt.
Hashbrown, the Rust port, rehashes in place, with no extra memory,
only if the live keys fill at most half its capacity; otherwise it
allocates a table of twice the size and moves every key. The
multitable restores itself more cheaply due to its clear-cut structure. A \emph{hoist} pass
re-inserts the keys of levels $2$ to $L$ into level $1$, which gives
a one-shot placement again (Theorem~\ref{thm:efficient} applies).
The pass runs at any load, so nothing forces a doubling above load
$0.5$; it touches only the keys beyond level $1$, a minority ($43\%$
in the $10^5$-key example), where a rehash touches every key; and it
needs no additional memory.

How often does the pass run? Under \emph{continuous churn} (worst case) the
table holds $m\le q$ keys, every step deletes a uniformly random
stored key and inserts a fresh one, and a \emph{turnover} is $m$
steps. A slot freed at level $i$ is refilled only by a new key that
reaches that level and hashes to its bucket, so holes accumulate in
the top levels, yet level $1$ still turns away about twice the share
of inserts it overflowed during the fill, and that flow runs into
deep levels sized for the one-shot residue. Each bucket is then full
only part of the time, hence no scheme that leaves stored keys in
place sustains load factor one (Appendix~\ref{app:churn}). We propose three
options tailored to different use-cases.

\begin{enumerate}[leftmargin=*,nosep]
\item \emph{Hoist on first overflow.} The in-place restore above,
usable alone with no resizing. The pass recurs and moves entries,
so it suits light or batch churn.
\item \emph{Grow as the tail fills.} Prepend a level before the first overflow
(Section~\ref{sec:grow}); no stored key moves. Combines with the 1st option.
\item \emph{Load-factor overload.} Size for $q(1+x)$ keys at a
raised target $\alpha'>\alpha$ (the target of
Definition~\ref{def:efficient}; $x=\alpha'/\alpha-1$)
and hold only up to $q$ entries: the same slots within a percent, more of them deep; at the floor $\alpha'_{\min}$, a shape scored safe survives
  continuous churn.
\end{enumerate}

Table~\ref{tab:churn} compares them; Appendix~\ref{app:churn} has
the model, the measurements and the choice among the three.

\begin{center}
\footnotesize
\begin{tabular}{lllll}
option & stability & extra space & extra work & suits\\
\hline
hoist on overflow & lost during a pass & none & one pass per overflow & light or batch churn\\
grow as the tail fills & kept & one level per growth & allocation & medium to high churn\\
load-factor overload & kept & same slots within $1\%$ & $\approx 7\%$ more hit probes & continuous churn\\
\end{tabular}
\captionof{table}{The three churn options.}\label{tab:churn}
\end{center}

\section{Conclusion}\label{sec:conclusion}

Multitable makes the multicollision distribution a sizing tool rather
than an obstacle: binomial tails set each level's bucket count, and
the cascade absorbs the variance that forces other designs to
overprovision. The result is a table whose slot load factor is a
parameter reaching one, with metadata costing a few parts in ten
thousand, whose capacity is chosen freely rather than rounded to a
power of two, and whose resizing and eviction are local operations.
The same tail arithmetic yields a multicollision bound that stays
informative where the classical subset bound is vacuous. Against
hashbrown, the Rust port of SwissTable behind Rust's standard
\texttt{HashMap} and the strongest baseline we know of, the filtered
multitable delivers on average about twice the throughput at equal
memory, $2.1\times$ and $1.9\times$ in arithmetic mean over $84$
configurations depending on how hashbrown is keyed, about three times
on negative lookups and $1.3$ to $1.6\times$ on inserts and positive
lookups. Hash tables are among the most studied and most tuned data
structures, so a factor of two over the table behind Rust's standard
library is a large improvement, measured on one machine against one
baseline. One seam
remains: the analysis assumes a random oracle, with independent hash
values at every level, while the implementation derives all levels
from a single hash per lookup, a shortcut that is compensated and
verified empirically rather than proved. A formal treatment of that
reuse is the natural next step.

\clearpage

\bibliographystyle{abbrv}
\bibliography{refs}

\appendix

\section{Benchmark charts}\label{app:benchmarks}

This appendix collects the remaining charts of the Apple M2 Pro
campaign of Section~\ref{sec:performance}: the $16{+}16$-byte map
and the two set shapes, then the load-factor sweep and the sizing
variants. All of them run at overfill factor
\hyperref[par:overfill]{$r=1.01$}. The three shape figures follow
the equal-memory protocol of Figure~\ref{fig:bench-kv4}, the
multitable and hashbrown sized to the same footprint in bytes at
every point; the sweep varies the multitable's footprint against a
fixed hashbrown reference, and the variants compare multitable
configurations at equal key counts. Each caption states the
configuration its chart needs. Speedups quoted in the text below, such as $1.87\times$ and $1.73\times$, are arithmetic means of hashbrown time over multitable time, against the byte-keyed and the integer-keyed arm in that order.

\begin{figure}[htbp]
\centering
\definecolor{fkOIblue}{HTML}{0072B2}
\definecolor{fkOIverm}{HTML}{D55E00}
\definecolor{fkOIgreen}{HTML}{009E73}
\definecolor{fkOIpink}{HTML}{CC79A7}
\begin{tikzpicture}[baseline=(current bounding box.north)]
\begin{groupplot}[
    group style={group size=2 by 2,
        horizontal sep=0.95cm, vertical sep=0.9cm,
        xlabels at=edge bottom, xticklabels at=edge bottom,
        ylabels at=edge left},
    width=0.5\textwidth,
    height=4.4cm,
    xmode=log,
    xmin=2500, xmax=3.5e8,
    ymin=0,
    xtick={1e4,1e5,1e6,1e7,1e8},
    xlabel={$q$ (key-value pairs)},
    ylabel={Mops/s},
    title style={font=\footnotesize, yshift=-2pt},
    tick label style={font=\footnotesize},
    label style={font=\footnotesize},
    axis lines=left,
]
% ------------------------------------------------------------- insert
\nextgroupplot[title={insert}, ymax=415]
\addplot[fkOIblue, thick, mark=*, mark size=1.1pt,
    mark options={fill=fkOIblue, draw=fkOIblue}] coordinates {
(3548,379.550) (14193,387.597) (56771,307.465) (227082,237.648)
(908329,185.123) (3633316,117.295) (116266107,53.304)
};
\addplot[fkOIgreen, semithick, mark=triangle*, mark size=1.4pt,
    mark options={fill=fkOIgreen, draw=fkOIgreen}] coordinates {
(3548,276.778) (14193,211.104) (56771,173.807) (227082,145.127)
(908329,81.728) (3633316,39.681) (116266107,28.322)
};
\addplot[fkOIverm, semithick, mark=square*, mark size=1pt,
    mark options={fill=fkOIverm, draw=fkOIverm}] coordinates {
(3548,261.650) (14193,240.050) (56771,190.883) (227082,168.722)
(908329,105.126) (3633316,79.748) (116266107,38.785)
};
\addplot[fkOIpink, semithick, densely dashed, mark=square, mark size=1.2pt,
    mark options={draw=fkOIpink, fill=none, solid}] coordinates {
(3548,249.963) (14193,228.071) (56771,176.473) (227082,159.144)
(908329,105.930) (3633316,78.790) (116266107,37.690)
};
% ---------------------------------------------------- positive lookup
\nextgroupplot[title={positive lookup}, ymax=615]
\addplot[fkOIblue, thick, mark=*, mark size=1.1pt,
    mark options={fill=fkOIblue, draw=fkOIblue}] coordinates {
(3548,581.835) (14193,566.861) (56771,362.017) (227082,263.908)
(908329,153.607) (3633316,97.994) (116266107,58.390)
};
\addplot[fkOIgreen, semithick, mark=triangle*, mark size=1.4pt,
    mark options={fill=fkOIgreen, draw=fkOIgreen}] coordinates {
(3548,228.770) (14193,177.088) (56771,153.822) (227082,142.765)
(908329,59.798) (3633316,33.909) (116266107,24.919)
};
\addplot[fkOIverm, semithick, mark=square*, mark size=1pt,
    mark options={fill=fkOIverm, draw=fkOIverm}] coordinates {
(3548,464.339) (14193,372.287) (56771,269.251) (227082,230.012)
(908329,119.626) (3633316,87.259) (116266107,46.022)
};
\addplot[fkOIpink, semithick, densely dashed, mark=square, mark size=1.2pt,
    mark options={draw=fkOIpink, fill=none, solid}] coordinates {
(3548,400.336) (14193,339.489) (56771,238.914) (227082,203.314)
(908329,102.708) (3633316,74.717) (116266107,40.557)
};
% ---------------------------------------------------- negative lookup
\nextgroupplot[
    title={negative lookup},
    ymax=295,
    legend style={font=\scriptsize, at={(1,-0.35)}, anchor=north,
        draw=none, fill=none,
        /tikz/every even column/.append style={column sep=0.5cm}},
    legend columns=2,
    legend cell align=left,
]
\addplot[fkOIblue, thick, mark=*, mark size=1.1pt,
    mark options={fill=fkOIblue, draw=fkOIblue}] coordinates {
(3548,277.339) (14193,259.767) (56771,253.575) (227082,236.395)
(908329,188.182) (3633316,119.262) (116266107,48.137)
};
\addlegendentry{multitable filtered, $s{=}16$ (byte keys)}
\addplot[fkOIgreen, semithick, mark=triangle*, mark size=1.4pt,
    mark options={fill=fkOIgreen, draw=fkOIgreen}] coordinates {
(3548,141.927) (14193,134.519) (56771,120.011) (227082,103.700)
(908329,38.984) (3633316,20.296) (116266107,10.694)
};
\addlegendentry{multitable plain, $s{=}8$ (byte keys)}
\addplot[fkOIverm, semithick, mark=square*, mark size=1pt,
    mark options={fill=fkOIverm, draw=fkOIverm}] coordinates {
(3548,230.017) (14193,82.898) (56771,73.060) (227082,63.855)
(908329,60.051) (3633316,58.304) (116266107,21.477)
};
\addlegendentry{hashbrown (\texttt{u128} keys)}
\addplot[fkOIpink, semithick, densely dashed, mark=square, mark size=1.2pt,
    mark options={draw=fkOIpink, fill=none, solid}] coordinates {
(3548,147.427) (14193,79.428) (56771,67.843) (227082,59.744)
(908329,55.739) (3633316,54.313) (116266107,19.807)
};
\addlegendentry{hashbrown (byte keys)}
% ------------------------------------------------------- mixed lookup
\nextgroupplot[title={mixed lookup (50\% hits)}, ymax=560]
\addplot[fkOIblue, thick, mark=*, mark size=1.1pt,
    mark options={fill=fkOIblue, draw=fkOIblue}] coordinates {
(3548,528.067) (14193,169.898) (56771,114.672) (227082,101.693)
(908329,87.583) (3633316,69.914) (116266107,45.857)
};
\addplot[fkOIgreen, semithick, mark=triangle*, mark size=1.4pt,
    mark options={fill=fkOIgreen, draw=fkOIgreen}] coordinates {
(3548,118.603) (14193,71.465) (56771,65.022) (227082,59.943)
(908329,36.148) (3633316,21.188) (116266107,12.156)
};
\addplot[fkOIverm, semithick, mark=square*, mark size=1pt,
    mark options={fill=fkOIverm, draw=fkOIverm}] coordinates {
(3548,435.711) (14193,110.374) (56771,81.655) (227082,72.200)
(908329,63.298) (3633316,56.386) (116266107,28.454)
};
\addplot[fkOIpink, semithick, densely dashed, mark=square, mark size=1.2pt,
    mark options={draw=fkOIpink, fill=none, solid}] coordinates {
(3548,412.082) (14193,99.405) (56771,76.165) (227082,68.565)
(908329,62.364) (3633316,55.994) (116266107,26.662)
};
\end{groupplot}
\end{tikzpicture}
\caption{The $16{+}16$-byte map shape, as in
Figure~\ref{fig:bench-kv4}: the filtered multitable at $s=16$ and the
plain one at $s=8$ (\hyperref[par:overfill]{$r=1.01$}) against hashbrown on the same raw
$16$-byte arrays and on native
\texttt{u128} integers, at physical load factor
$\approx0.840$. This shape's ladder tops out at $1.16\cdot10^{8}$
keys.}
\label{fig:bench-kv16}
\end{figure}

Figure~\ref{fig:bench-kv16} widens key and value to $16$ bytes each.
The plain multitable's only wins over hashbrown sit at the smallest table
on insert and across a middle stretch of negative lookups, from
about $1.4\times10^4$ to $2.3\times10^5$ keys; every larger size, and
every other operation, favors hashbrown instead. The filtered
multitable's margin narrows to $1.87\times$ against the byte-keyed arm and
$1.73\times$ against the integer-keyed one over the shape's $28$
cells, mixed stream included, and this is the shape on which the
plain multitable suffers most: with $16$ bytes to compare per slot, it
takes only $4$ of the $28$ cells from either hashbrown arm.

\begin{figure}[htbp]
\centering
\definecolor{fsOIblue}{HTML}{0072B2}
\definecolor{fsOIverm}{HTML}{D55E00}
\definecolor{fsOIgreen}{HTML}{009E73}
\definecolor{fsOIpink}{HTML}{CC79A7}
\begin{tikzpicture}[baseline=(current bounding box.north)]
\begin{groupplot}[
    group style={group size=2 by 2,
        horizontal sep=0.95cm, vertical sep=0.9cm,
        xlabels at=edge bottom, xticklabels at=edge bottom,
        ylabels at=edge left},
    width=0.5\textwidth,
    height=4.4cm,
    xmode=log,
    xmin=2500, xmax=3.5e8,
    ymin=0,
    xtick={1e4,1e5,1e6,1e7,1e8},
    xlabel={$q$ (keys)},
    ylabel={Mops/s},
    title style={font=\footnotesize, yshift=-2pt},
    tick label style={font=\footnotesize},
    label style={font=\footnotesize},
    axis lines=left,
]
% ------------------------------------------------------------- insert
\nextgroupplot[title={insert}, ymax=515]
\addplot[fsOIblue, line width=1pt, mark=*, mark size=1.1pt,
    mark options={fill=fsOIblue, draw=fsOIblue}] coordinates {
(3550,421.638) (14194,426.730) (56772,322.674) (227084,266.837)
(908330,248.899) (3633317,232.396) (232532215,55.531)
};
\addplot[fsOIgreen, thick, mark=triangle*, mark size=1.4pt,
    mark options={fill=fsOIgreen, draw=fsOIgreen}] coordinates {
(3550,485.460) (14194,382.234) (56772,308.966) (227084,284.511)
(908330,247.231) (3633317,203.219) (232532215,63.281)
};
\addplot[fsOIverm, thick, mark=square*, mark size=1pt,
    mark options={fill=fsOIverm, draw=fsOIverm}] coordinates {
(3550,272.576) (14194,255.376) (56772,210.358) (227084,186.317)
(908330,178.897) (3633317,168.890) (232532215,42.657)
};
\addplot[fsOIpink, thick, densely dashed, mark=square, mark size=1.2pt,
    mark options={draw=fsOIpink, fill=none, solid}] coordinates {
(3550,266.078) (14194,219.727) (56772,187.758) (227084,170.509)
(908330,163.905) (3633317,156.392) (232532215,41.263)
};
% ---------------------------------------------------- positive lookup
\nextgroupplot[title={positive lookup}, ymax=920]
\addplot[fsOIblue, line width=1pt, mark=*, mark size=1.1pt,
    mark options={fill=fsOIblue, draw=fsOIblue}] coordinates {
(3550,880.669) (14194,883.861) (56772,523.917) (227084,337.564)
(908330,320.842) (3633317,273.493) (232532215,95.152)
};
\addplot[fsOIgreen, thick, mark=triangle*, mark size=1.4pt,
    mark options={fill=fsOIgreen, draw=fsOIgreen}] coordinates {
(3550,397.519) (14194,375.728) (56772,349.369) (227084,328.872)
(908330,314.130) (3633317,179.308) (232532215,53.364)
};
\addplot[fsOIverm, thick, mark=square*, mark size=1pt,
    mark options={fill=fsOIverm, draw=fsOIverm}] coordinates {
(3550,659.370) (14194,466.483) (56772,343.430) (227084,286.229)
(908330,268.637) (3633317,217.851) (232532215,66.531)
};
\addplot[fsOIpink, thick, densely dashed, mark=square, mark size=1.2pt,
    mark options={draw=fsOIpink, fill=none, solid}] coordinates {
(3550,487.045) (14194,380.909) (56772,285.600) (227084,244.194)
(908330,227.754) (3633317,176.410) (232532215,51.637)
};
% ---------------------------------------------------- negative lookup
\nextgroupplot[
    title={negative lookup},
    ymax=330,
    legend style={font=\scriptsize, at={(1,-0.35)}, anchor=north,
        draw=none, fill=none,
        /tikz/every even column/.append style={column sep=0.5cm}},
    legend columns=2,
    legend cell align=left,
]
\addplot[fsOIblue, line width=1pt, mark=*, mark size=1.1pt,
    mark options={fill=fsOIblue, draw=fsOIblue}] coordinates {
(3550,310.800) (14194,300.300) (56772,279.376) (227084,225.774)
(908330,252.099) (3633317,237.626) (232532215,74.699)
};
\addlegendentry{multitable filtered, $s{=}16$ (byte keys)}
\addplot[fsOIgreen, thick, mark=triangle*, mark size=1.4pt,
    mark options={fill=fsOIgreen, draw=fsOIgreen}] coordinates {
(3550,304.878) (14194,304.832) (56772,305.288) (227084,306.909)
(908330,248.583) (3633317,134.882) (232532215,48.190)
};
\addlegendentry{multitable plain, $s{=}8$ (byte keys)}
\addplot[fsOIverm, thick, mark=square*, mark size=1pt,
    mark options={fill=fsOIverm, draw=fsOIverm}] coordinates {
(3550,245.592) (14194,92.917) (56772,79.003) (227084,69.330)
(908330,64.498) (3633317,62.940) (232532215,21.718)
};
\addlegendentry{hashbrown (\texttt{u32} keys)}
\addplot[fsOIpink, thick, densely dashed, mark=square, mark size=1.2pt,
    mark options={draw=fsOIpink, fill=none, solid}] coordinates {
(3550,199.764) (14194,81.641) (56772,68.961) (227084,59.828)
(908330,56.415) (3633317,55.299) (232532215,19.049)
};
\addlegendentry{hashbrown (byte keys)}
% ------------------------------------------------------- mixed lookup
\nextgroupplot[title={mixed lookup (50\% hits)}, ymax=675]
\addplot[fsOIblue, line width=1pt, mark=*, mark size=1.1pt,
    mark options={fill=fsOIblue, draw=fsOIblue}] coordinates {
(3550,640.533) (14194,202.163) (56772,134.124) (227084,112.507)
(908330,114.772) (3633317,107.077) (232532215,60.880)
};
\addplot[fsOIgreen, thick, mark=triangle*, mark size=1.4pt,
    mark options={fill=fsOIgreen, draw=fsOIgreen}] coordinates {
(3550,249.700) (14194,121.161) (56772,102.993) (227084,97.482)
(908330,90.544) (3633317,74.440) (232532215,35.092)
};
\addplot[fsOIverm, thick, mark=square*, mark size=1pt,
    mark options={fill=fsOIverm, draw=fsOIverm}] coordinates {
(3550,477.418) (14194,115.389) (56772,88.812) (227084,81.197)
(908330,77.016) (3633317,72.265) (232532215,31.734)
};
\addplot[fsOIpink, thick, densely dashed, mark=square, mark size=1.2pt,
    mark options={draw=fsOIpink, fill=none, solid}] coordinates {
(3550,449.802) (14194,101.068) (56772,80.003) (227084,72.283)
(908330,68.703) (3633317,65.528) (232532215,28.396)
};
\end{groupplot}
\end{tikzpicture}
\caption{Set workload, $4$-byte keys and no values, as in
Figure~\ref{fig:bench-kv4}: the filtered ($s=16$) and plain ($s=8$)
multitables against hashbrown on the same raw key bytes and on the native \texttt{u32}, at physical load factor
$\approx0.693$,
\hyperref[par:overfill]{$r=1.01$}.}
\label{fig:bench-set4}
\end{figure}

Figure~\ref{fig:bench-set4} shows the multitable's best shape. On
misses the plain multitable tracks the filtered one closely up to about
$9\times10^5$ keys, ahead of it at three of those five sizes,
by under $10\%$ at two and by $36\%$ at $2.3\times10^5$ keys, before
falling well behind once the table passes
$3\times10^6$ keys. With no
values to fetch, the filtered multitable averages $2.18\times$ and
$1.91\times$ over its $28$ cells, and the plain multitable beats
hashbrown in $25$ of $28$ cells against the
byte-keyed arm and $23$ against the integer-keyed one: $1.60\times$
and $1.48\times$ on insert, $3.46\times$ and $3.00\times$ on negative
lookups, $1.15\times$ and $1.03\times$ on the mixed stream, and
$1.11\times$ and $0.91\times$ on positive lookups, the one operation
on which it falls behind the integer-keyed arm.

\begin{figure}[htbp]
\centering
\definecolor{fsOIblue}{HTML}{0072B2}
\definecolor{fsOIverm}{HTML}{D55E00}
\definecolor{fsOIgreen}{HTML}{009E73}
\definecolor{fsOIpink}{HTML}{CC79A7}
\begin{tikzpicture}[baseline=(current bounding box.north)]
\begin{groupplot}[
    group style={group size=2 by 2,
        horizontal sep=0.95cm, vertical sep=0.9cm,
        xlabels at=edge bottom, xticklabels at=edge bottom,
        ylabels at=edge left},
    width=0.5\textwidth,
    height=4.4cm,
    xmode=log,
    xmin=2500, xmax=3.5e8,
    ymin=0,
    xtick={1e4,1e5,1e6,1e7,1e8},
    xlabel={$q$ (keys)},
    ylabel={Mops/s},
    title style={font=\footnotesize, yshift=-2pt},
    tick label style={font=\footnotesize},
    label style={font=\footnotesize},
    axis lines=left,
]
% ------------------------------------------------------------- insert
\nextgroupplot[title={insert}, ymax=435]
\addplot[fsOIblue, thick, mark=*, mark size=1.2pt,
    mark options={fill=fsOIblue, draw=fsOIblue}] coordinates {
(3549,392.265) (14193,403.730) (56771,294.672) (227083,248.398)
(908329,230.579) (3633316,182.113) (232532214,53.421)
};
\addplot[fsOIgreen, semithick, mark=triangle*, mark size=1.4pt,
    mark options={fill=fsOIgreen, draw=fsOIgreen}] coordinates {
(3549,303.961) (14193,253.646) (56771,199.573) (227083,178.021)
(908329,139.128) (3633316,53.310) (232532214,29.398)
};
\addplot[fsOIverm, semithick, mark=square*, mark size=1pt,
    mark options={fill=fsOIverm, draw=fsOIverm}] coordinates {
(3549,271.334) (14193,259.041) (56771,198.051) (227083,177.952)
(908329,171.019) (3633316,127.943) (232532214,41.934)
};
\addplot[fsOIpink, semithick, densely dashed, mark=square, mark size=1.2pt,
    mark options={draw=fsOIpink, fill=none, solid}] coordinates {
(3549,265.343) (14193,223.894) (56771,184.509) (227083,167.367)
(908329,160.870) (3633316,121.564) (232532214,40.578)
};
% ---------------------------------------------------- positive lookup
\nextgroupplot[title={positive lookup}, ymax=695]
\addplot[fsOIblue, thick, mark=*, mark size=1.2pt,
    mark options={fill=fsOIblue, draw=fsOIblue}] coordinates {
(3549,671.682) (14193,648.466) (56771,338.949) (227083,277.955)
(908329,243.002) (3633316,118.828) (232532214,65.537)
};
\addplot[fsOIgreen, semithick, mark=triangle*, mark size=1.4pt,
    mark options={fill=fsOIgreen, draw=fsOIgreen}] coordinates {
(3549,289.385) (14193,213.858) (56771,174.712) (227083,164.804)
(908329,119.811) (3633316,40.390) (232532214,26.727)
};
\addplot[fsOIverm, semithick, mark=square*, mark size=1pt,
    mark options={fill=fsOIverm, draw=fsOIverm}] coordinates {
(3549,494.633) (14193,383.289) (56771,290.655) (227083,251.370)
(908329,211.180) (3633316,110.395) (232532214,54.868)
};
\addplot[fsOIpink, semithick, densely dashed, mark=square, mark size=1.2pt,
    mark options={draw=fsOIpink, fill=none, solid}] coordinates {
(3549,477.281) (14193,358.603) (56771,264.494) (227083,226.270)
(908329,186.261) (3633316,91.496) (232532214,46.294)
};
% ---------------------------------------------------- negative lookup
\nextgroupplot[
    title={negative lookup},
    ymax=305,
    legend style={font=\scriptsize, at={(1,-0.35)}, anchor=north,
        draw=none, fill=none,
        /tikz/every even column/.append style={column sep=0.5cm},
        /pgfplots/legend image code/.code={
            \draw[mark repeat=2, mark phase=2, ##1] plot coordinates {
                (0cm,0cm) (0.16cm,0cm) (0.32cm,0cm)};}},
    legend columns=2,
    legend cell align=left,
]
\addplot[fsOIblue, thick, mark=*, mark size=1.2pt,
    mark options={fill=fsOIblue, draw=fsOIblue}] coordinates {
(3549,281.357) (14193,267.659) (56771,256.674) (227083,239.653)
(908329,224.200) (3633316,135.133) (232532214,45.541)
};
\addlegendentry{multitable filtered, $s{=}16$ (byte keys)}
\addplot[fsOIgreen, semithick, mark=triangle*, mark size=1.4pt,
    mark options={fill=fsOIgreen, draw=fsOIgreen}] coordinates {
(3549,180.226) (14193,142.869) (56771,132.405) (227083,120.106)
(908329,75.681) (3633316,24.661) (232532214,11.653)
};
\addlegendentry{multitable plain, $s{=}8$ (byte keys)}
\addplot[fsOIverm, semithick, mark=square*, mark size=1pt,
    mark options={fill=fsOIverm, draw=fsOIverm}] coordinates {
(3549,232.845) (14193,88.485) (56771,73.584) (227083,63.964)
(908329,60.306) (3633316,60.059) (232532214,20.105)
};
\addlegendentry{hashbrown (\texttt{u128} keys)}
\addplot[fsOIpink, semithick, densely dashed, mark=square, mark size=1.2pt,
    mark options={draw=fsOIpink, fill=none, solid}] coordinates {
(3549,173.199) (14193,80.323) (56771,68.511) (227083,59.980)
(908329,56.608) (3633316,55.716) (232532214,20.516)
};
\addlegendentry{hashbrown (byte keys)}
% ------------------------------------------------------- mixed lookup
\nextgroupplot[title={mixed lookup (50\% hits)}, ymax=595]
\addplot[fsOIblue, thick, mark=*, mark size=1.2pt,
    mark options={fill=fsOIblue, draw=fsOIblue}] coordinates {
(3549,566.476) (14193,145.677) (56771,110.557) (227083,102.442)
(908329,97.984) (3633316,75.950) (232532214,46.734)
};
\addplot[fsOIgreen, semithick, mark=triangle*, mark size=1.4pt,
    mark options={fill=fsOIgreen, draw=fsOIgreen}] coordinates {
(3549,170.198) (14193,75.956) (56771,66.943) (227083,64.266)
(908329,54.391) (3633316,26.874) (232532214,13.294)
};
\addplot[fsOIverm, semithick, mark=square*, mark size=1pt,
    mark options={fill=fsOIverm, draw=fsOIverm}] coordinates {
(3549,474.518) (14193,113.390) (56771,83.711) (227083,75.062)
(908329,70.190) (3633316,62.585) (232532214,29.375)
};
\addplot[fsOIpink, semithick, densely dashed, mark=square, mark size=1.2pt,
    mark options={draw=fsOIpink, fill=none, solid}] coordinates {
(3549,448.873) (14193,103.948) (56771,75.670) (227083,68.922)
(908329,65.745) (3633316,57.319) (232532214,27.323)
};
\end{groupplot}
\end{tikzpicture}
\caption{Set workload, $16$-byte keys, otherwise as in
Figure~\ref{fig:bench-set4}: the filtered ($s=16$) and plain ($s=8$)
multitables against hashbrown on the same raw key bytes and on the native
\texttt{u128}, at physical load factor $\approx0.815$.}
\label{fig:bench-set16}
\end{figure}

The $16$-byte set (Figure~\ref{fig:bench-set16}) keeps the keys
without values and lands beside the $16{+}16$ map. On insert the
plain multitable leads the byte-keyed arm at the four smallest sizes and
is level with the integer-keyed one at three of them, $12\%$ ahead
at the smallest, then falls behind both from $9\times10^5$ keys on.
The filtered
multitable leads by $1.85\times$ and $1.71\times$ over its $28$ cells,
while the plain multitable takes $9$ and $6$ of them (a seventh insert
cell against the integer-keyed arm is tied within $0.1\%$), a few
more than on the map but still a small minority, so the filter byte pays for
itself at $16$-byte keys whether or not values are stored.

\begin{figure}[htbp]
\centering
\input{figs/fig-bench-lfsweep}
\caption{Speed-versus-size dial for the $4{+}4$-byte map on an Apple
M2 Pro at \hyperref[par:overfill]{$r=1.01$}: mixed-lookup throughput
against realized physical load factor, the three panels sharing one
throughput scale. The shaded band marks loads only the plain multitable
reaches, and the two families overlap near $0.86$. In the right
panel the plain $s=8$ line ends at requested $0.97$.}
\label{fig:bench-lfsweep}
\end{figure}

Figure~\ref{fig:bench-lfsweep} turns load factor into a dial:
mixed-lookup throughput against \emph{realized} physical load
factor at bucket widths $s=8,16,32$ against integer-keyed hashbrown's
sizing. At hashbrown's memory (load $0.77$) the filtered multitable runs
$1.36$-$1.84\times$ its throughput at $s=8,16$ and $1.02$-$1.35\times$
at $s=32$, ahead in every cell; at hashbrown's speed
those arms cross the reference at realized $0.85$-$0.87$,
$9.1$-$11.6\%$ fewer bytes for the same throughput, $s=32$ crossing
earlier at the middle size, at realized $0.83$. Only plain $s=8$ in
cache, with no filter bytes at all, holds the reference past that,
through realized $0.937$ at $18\%$ fewer bytes; the rest trail by tier,
$0.76$-$0.90$ of hashbrown's throughput in cache and $0.49$-$0.75$ in
RAM at realized $0.97$ ($21\%$ fewer bytes for the same keys). That
stretch stays cheap, $6.1$-$25.2\%$ of throughput given up from
realized $0.86$ to $0.977$ with no cliff as load approaches $1$: a
plain arm enters it $3$-$6$ levels deep and leaves at $5$-$11$, the
$s=8$ arms deepening most and, in cache and at the middle size,
losing most.

\begin{figure}[htbp]
\centering
\definecolor{fvOIblue}{HTML}{0072B2}
\definecolor{fvOIorange}{HTML}{E69F00}
\definecolor{fvOIpurp}{HTML}{CC79A7}
\begin{tikzpicture}[baseline=(current bounding box.north)]
\begin{groupplot}[
    group style={group size=2 by 1,
        horizontal sep=0.7cm,
        ylabels at=edge left, yticklabels at=edge left},
    width=0.5\textwidth,
    height=5.0cm,
    xmode=log,
    ymin=0, ymax=200,
    xtick={1e4,1e5,1e6,1e7},
    ytick={0,50,100,150,200},
    xlabel={$q$ (key-value pairs)},
    ylabel={Mops/s},
    title style={font=\footnotesize, yshift=-2pt},
    tick label style={font=\footnotesize},
    label style={font=\footnotesize},
    axis lines=left,
]
% -------------------------------------------------- favorable ladder
\nextgroupplot[
    title={favorable ladder ($n_1 = 2^k$)},
    xmin=11960, xmax=10500000,
    legend style={font=\scriptsize, at={(0.04,0.06)}, anchor=south west,
        draw=none, fill=none},
    legend cell align=left,
]
\addplot[fvOIblue, thick, mark=*, mark size=1pt,
    mark options={fill=fvOIblue, draw=fvOIblue}] coordinates {
(15758,170.470) (31376,140.270) (62571,126.680) (124907,112.420)
(249494,112.900) (498556,108.410) (996519,105.960) (1992220,101.850)
(3983297,90.853) (7964997,79.710)
};
\addlegendentry{default sizing, $s{=}16$}
\addplot[fvOIorange, thick, mark=square*, mark size=1pt,
    mark options={fill=fvOIorange, draw=fvOIorange}] coordinates {
(15758,158.420) (31376,136.080) (62571,129.350) (124907,121.150)
(249494,111.550) (498556,115.100) (996519,107.500) (1992220,103.210)
(3983297,84.561) (7964997,79.525)
};
\addlegendentry{powers-of-2, $s{=}16$}
\addplot[fvOIpurp, thick, mark=triangle*, mark size=1.3pt,
    mark options={fill=fvOIpurp, draw=fvOIpurp}] coordinates {
(15758,161.540) (31376,137.510) (62571,124.010) (124907,111.990)
(249494,112.140) (498556,106.850) (996519,104.750) (1992220,102.410)
(3983297,89.840) (7964997,76.931)
};
\addlegendentry{first-power, $s{=}16$}
% ------------------------------------------------ adversarial ladder
\nextgroupplot[
    title={adversarial ladder ($n_1 \approx 1.9 \cdot 2^k$)},
    xmin=22000, xmax=10000000,
]
\addplot[fvOIblue, thick, mark=*, mark size=1pt,
    mark options={fill=fvOIblue, draw=fvOIblue}] coordinates {
(29836,144.970) (59482,129.500) (118729,120.150) (237162,111.170)
(473902,113.540) (947227,111.730) (1893654,108.250) (3786192,92.365)
(7570826,79.791)
};
\addplot[fvOIorange, thick, mark=square*, mark size=1pt,
    mark options={fill=fvOIorange, draw=fvOIorange}] coordinates {
(29836,134.330) (59482,118.900) (118729,109.250) (237162,104.310)
(473902,97.265) (947227,95.912) (1893654,91.913) (3786192,79.173)
(7570826,67.023)
};
\addplot[fvOIpurp, thick, mark=triangle*, mark size=1.3pt,
    mark options={fill=fvOIpurp, draw=fvOIpurp}] coordinates {
(29836,128.710) (59482,108.370) (118729,101.940) (237162,97.603)
(473902,95.755) (947227,93.907) (1893654,93.468) (3786192,79.338)
(7570826,66.535)
};
\end{groupplot}
\end{tikzpicture}
\caption{Sizing variants (Section~\ref{sec:implementation}) on an
Apple M2 Pro: mixed-lookup throughput of the filtered multitable at bucket
size $s=16$ on the $4{+}4$-byte map shape, at equal key counts and
physical load factor $\approx0.77$, \hyperref[par:overfill]{$r=1.01$}. Rounding to powers
of two leaves the first level untouched on the favorable ladder,
rounding only the deeper levels, and halves the first level on the
adversarial one. Multitable-only, with no hashbrown arm.}
\label{fig:bench-variants}
\end{figure}

Figure~\ref{fig:bench-variants} measures the addressing cost of
sizing variants (Section~\ref{sec:implementation}): default sizing
with Lemire's remap against rounding every level's bucket count to a
power of two, or only the first level. On a
favorable ladder (first level exactly $2^k$, so rounding touches
only the deeper levels) full rounding is a wash, $+0.6\%$ median
mixed-lookup throughput, positive at five of the ten sizes; on an
adversarial one, where rounding halves the first level and balloons
the second, it cedes $14.2\%$ median, negative at all nine.
First-power is no better, $-1.3\%$ favorable and $-15.2\%$
adversarial. Neither variant makes a case for avoiding the
multiply-shift at $s=16$: where the rounded counts are near what the
sizing would choose anyway the multiplication costs nothing
measurable, and where they are not, rounding costs $14\%$ of the
throughput. The default-sizing arm here is the filtered
configuration of Section~\ref{sec:performance}, the one that leads
both hashbrown arms in all $84$ cells, so the two charts share a
bucket size.

%\clearpage

\section{Asymptotics}\label{app:asymptotics}

For a fixed configuration the cascade's depth grows only
logarithmically in the number of keys, and a positive lookup
costs a constant expected number of probes. This appendix
states and proves both results.

\begin{proposition}[Logarithmic depth]\label{prop:depth}
Fix an integer $s\ge1$, $\alpha\in(0,1)$ and $\delta\in(0,1)$, and set
\begin{equation}\label{eq:depth-constants}
\beta:=\min\Bigl\{\tfrac12,\
  \tfrac{s}{8\bigl(2+\ln\frac{1}{1-\alpha}\bigr)}\Bigr\},
\qquad
T:=\max\Bigl\{\tfrac{2s}{\beta},\
   \tfrac{2s\ln(1/\delta)}{(1-\alpha)\,\beta}\Bigr\},
\qquad
c:=1-\tfrac{\alpha\beta}{2}\in(0,1).
\end{equation}
Then the construction of Definition~\ref{def:efficient} satisfies
$o_{i+1}\le c\,o_i$ whenever $o_i\ge T$, and hence, for
$q\ge T$,
\begin{equation}\label{eq:depth-bound}
L\;\le\;\frac{\ln(q/T)}{\ln(1/c)}+\frac{T}{s}+2
\;=\;O(\log q),
\end{equation}
with the constants depending only on $(s,\alpha,\delta)$.
\end{proposition}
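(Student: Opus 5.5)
The plan is to prove the contraction $o_{i+1}\le c\,o_i$ by exhibiting one explicit feasible bucket count, and then to count levels in two phases.

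\textbf{Reduction to a feasible candidate.} Fix a level with $o:=o_i\ge T$. Since $n_i$ is the maximum in \eqref{eq:efficient-ni}, it satisfies $o-B(o,n_i,\delta)\ge\alpha s n_i$. Hence
\[
o_{i+1}\le B(o,n_i,\delta)\le o-\alpha s n_i .
\]
It therefore suffices to show $n_i\ge \beta o/(2s)$, because then $o_{i+1}\le(1-\alpha\beta/2)\,o=c\,o$. For this I will show that the candidate $n^\ast:=\lfloor \beta o/s\rfloor$ satisfies the constraint; maximality then gives $n_i\ge n^\ast$. Because $o\ge T\ge 2s/\beta$, we have $\beta o/s\ge2$, so $n^\ast\ge \beta o/(2s)$. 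Also $n^\ast\le\beta o/s$, so the mean load obeys $\lambda=o/n^\ast\ge s/\beta\ge 2s$.

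\textbf{Checking the constraint via $B_4$.} I bound $B\le B_4=o-sn^\ast+w^\ast(n^\ast)$ from \eqref{eq:B34}. The constraint then reduces to
\[
w^\ast(n^\ast)\le(1-\alpha)\,s\,n^\ast .
\]
In the infimum defining $w^\ast$ I take $\theta=1/s$. Since $\pos{s-X}\le s$, we have $\E[e^{\theta\pos{s-X}}]\le 1+e\,p$ with $p:=\PR[X<s]$, and $\ln(1+e p)\le ep$. This yields
\[
w^\ast(n^\ast)\le e\,s\,n^\ast p+s\ln(1/\delta).
\]

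\textbf{The two terms.} For the first term, I use the multiplicative Chernoff lower tail for $X\sim\Bin(o,1/n^\ast)$. Since $s\le\lambda/2$,
\[
p\le\exp\bigl(-\lambda(1-s/\lambda)^2/2\bigr)\le e^{-\lambda/8}\le e^{-s/(8\beta)}\le e^{-2}(1-\alpha),
\]
where the last step uses the definition of $\beta$: $s/(8\beta)\ge 2+\ln\frac1{1-\alpha}$. Hence $e s n^\ast p\le e^{-1}(1-\alpha)s n^\ast$. For the second term, I need $s\ln(1/\delta)\le(1-e^{-1})(1-\alpha)sn^\ast$. This follows from $n^\ast\ge\beta o/(2s)\ge\beta T/(2s)\ge \ln(1/\delta)/(1-\alpha)$, which is exactly what the second entry of $T$ provides, with slack to absorb the constant $1-e^{-1}$. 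That constant is where I expect the friction: the factor $2$ in $T$ must be enough. If it is not, I would tune $\theta$ slightly below $1/s$, or combine the first term with the bound $p\le e^{-2}(1-\alpha)$ more tightly, to recover the stated constants.

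\textbf{Counting levels.} While $o_i\ge T$, the residue contracts by the factor $c$. So after at most $\ln(q/T)/\ln(1/c)+1$ levels the residue drops below $T$. From then on each non-final level reduces the residue by at least $s$, because $o_{i+1}\le o_i-s$ by the cap in \eqref{eq:Bdef}. This contributes at most $T/s$ further levels, plus the final single bucket, which gives \eqref{eq:depth-bound}.

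I expect the main obstacle to be the constant bookkeeping in the $B_4$ step. In particular, the bound must use only $s\ln(1/\delta)$ from the tilt, rather than the $\sqrt{o}$ deviation that $B_1$ would give, because $T$ is only linear in $\ln(1/\delta)$.
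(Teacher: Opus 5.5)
Your skeleton matches the paper's proof: the candidate $n^\ast=\lfloor\beta o/s\rfloor$, the reduction through $B\le B_4$ to $w^\ast(n^\ast)\le(1-\alpha)sn^\ast$, the Chernoff estimate $p\le e^{-2}(1-\alpha)$, maximality of $n_i$ for the contraction, and the two-phase level count. The step that fails is the tilt $\theta=1/s$. It gives $w^\ast(n^\ast)\le e\,s\,n^\ast p+s\ln(1/\delta)$. The only control $T$ provides on the second term is the chain you quote, $n^\ast\ge\beta o/(2s)\ge\ln(1/\delta)/(1-\alpha)$, that is, $s\ln(1/\delta)\le(1-\alpha)sn^\ast$. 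That term alone uses up the whole allowance $(1-\alpha)sn^\ast$. There is no slack to absorb the factor $1-e^{-1}$, so the implication you assert is false, and the first term, which you can only bound by $e^{-1}(1-\alpha)sn^\ast$, has no room left. Your fallback of moving $\theta$ slightly below $1/s$ goes the wrong way: the $\delta$-part of $w^\ast$ is $\ln(1/\delta)/\theta$, which grows as $\theta$ shrinks. Tightening the first term alone does not help either, since under the available estimate the second term already meets the allowance with equality.

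The repair is to raise the tilt to $\theta=2/s$, as the paper does. The pointwise bound $e^{\theta\pos{s-X}}\le1+(e^{2}-1)\ind{X\le s-1}$ and $\ln(1+x)\le x$ give $w^\ast(n^\ast)\le\bigl(n^\ast(e^{2}-1)p+\ln(1/\delta)\bigr)\tfrac{s}{2}$. Then $(e^{2}-1)p\le(1-e^{-2})(1-\alpha)\le1-\alpha$ and $\ln(1/\delta)\le(1-\alpha)n^\ast$, so each term is at most half of $(1-\alpha)sn^\ast$. This is where the $2$ in $2+\ln\frac1{1-\alpha}$ comes from: the Chernoff step must yield $e^{-2}$ to cancel $e^{\theta s}=e^{2}$. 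The factor $2$ in $T$, by contrast, only pays for the floor loss $\lfloor x\rfloor\ge x/2$. So the halving of the $\ln(1/\delta)$ term has to come from $\theta s=2$, not from $T$. For the record, $\theta=1/s$ can be salvaged, but only with two refinements your proposal lacks. You would need $1+(e-1)p$ in place of $1+ep$, together with the sharper estimate $\lfloor x\rfloor>2x/3$ for $x\ge2$. That gives $\ln(1/\delta)<\tfrac34(1-\alpha)n^\ast$ against the required $(1-e^{-1}+e^{-2})(1-\alpha)n^\ast\approx0.767\,(1-\alpha)n^\ast$, a margin below $0.02$.
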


The proof exhibits, for every level with
input $o\ge T$, a feasible bucket count $n^*=\lfloor\beta o/s\rfloor$;
maximality of $n_i$ then forces the overflow to contract by the factor
$c$, and levels with input below $T$ lose at least $s$ keys each.
Feasibility runs through the slack tilt $B_4$ in place of the
$\sqrt{o\ln(1/\delta)}$ deviation of $B_1$: evaluated at the fixed
tilt $\theta=2/s$, the $\ln(1/\delta)$ cost of $w^*$ is additive.
With $B_1$ the same argument would need that deviation,
$\sqrt{(o/2)\ln(1/\delta)}$, to fit in half the slack
$(1-\alpha)sn^*\ge(1-\alpha)\beta o/2$, that is,
$o\ge T_1:=8\ln(1/\delta)/\bigl((1-\alpha)^2\beta^2\bigr)$.
The dependence on $1-\alpha$ and on $\beta$ thus drops from
quadratic to linear, $T_1/T\approx22$ at
$(s,\alpha,\delta)=(8,0.9,2^{-10})$,
in exchange for a factor of $s$ that is a loss only for
$s>8/(1-\alpha)$.

\phantomsection\label{pf:depth}
\begin{proof}[Proof of Proposition~\ref{prop:depth}]
Let $o:=o_i\ge T$ and put $n^*:=\lfloor\beta o/s\rfloor$. Since
$o\ge2s/\beta$ we have $\beta o/s\ge2$, so $n^*\ge2$ and, as
$\lfloor x\rfloor\ge x/2$ for $x\ge2$, $n^*\ge\beta o/(2s)$.
Also $n^*\le\beta o/s$, so the mean
$\mu:=o/n^*$ satisfies $\mu\ge s/\beta\ge2s$.

\emph{Step 1: $n^*$ is feasible in \eqref{eq:efficient-ni}.}
Write $X\sim\Bin(o,1/n^*)$, so $\E X=\mu$, and set
$F:=F(s-1;o,1/n^*)=\PR[X\le s-1]$. With $\varepsilon:=1-s/\mu$ we have
$\varepsilon\ge1-\beta\ge\tfrac12$ and $\varepsilon<1$, so the
multiplicative Chernoff lower-tail bound
\cite[Thm.~4.5]{mitzenmacher2005probability} applies:
\begin{equation}\label{eq:depth-chernoff}
F\le\PR\bigl[X\le(1-\varepsilon)\mu\bigr]
\le e^{-\varepsilon^2\mu/2}
\le e^{-\mu/8}
\le e^{-s/(8\beta)}
\le (1-\alpha)e^{-2},
\end{equation}
where the last step is the definition of $\beta$. Next we evaluate
the slack quantile of \eqref{eq:chernoff-cert} at the tilt
$\theta:=2/s$. Pointwise
$e^{\theta\pos{s-X}}\le1+(e^{\theta s}-1)\ind{X\le s-1}$: on
$\{X\ge s\}$ both sides equal $1$, and on $\{X\le s-1\}$ the left
side is at most $e^{\theta s}$. Taking expectations and using
$\ln(1+x)\le x$,
\begin{equation}\label{eq:depth-wstar}
w^*(n^*)\;\le\;\frac{n^*\ln\bigl(1+(e^{2}-1)F\bigr)+\ln(1/\delta)}{2/s}
\;\le\;\Bigl(n^*(e^{2}-1)F+\ln(1/\delta)\Bigr)\frac{s}{2}.
\end{equation}
After the factor $s/2$, each parenthesized term contributes at most
half of the slack allowance $(1-\alpha)sn^*$. By
\eqref{eq:depth-chernoff},
$(e^{2}-1)F\le(1-e^{-2})(1-\alpha)\le1-\alpha$. By
$o\ge2s\ln(1/\delta)/((1-\alpha)\beta)$ and $n^*\ge\beta o/(2s)$,
\begin{equation}\label{eq:depth-dev}
n^*\;\ge\;\frac{\beta o}{2s}\;\ge\;\frac{\ln(1/\delta)}{1-\alpha},
\end{equation}
so $\ln(1/\delta)\le(1-\alpha)n^*$. Hence
$w^*(n^*)\le(1-\alpha)sn^*$, and with \eqref{eq:Bdef} and
\eqref{eq:B34},
\begin{equation}\label{eq:depth-feasible}
B(o,n^*,\delta)\;\le\;B_4(o,n^*,\delta)\;=\;o-s\,n^*+w^*(n^*)
\;\le\;o-\alpha\,s\,n^*,
\end{equation}
so $n^*$ satisfies the constraint in \eqref{eq:efficient-ni}.

\emph{Step 2: contraction.} Since $n_i$ is the maximal feasible
choice, $n_i\ge n^*$, and feasibility of $n_i$ gives
\begin{equation}\label{eq:depth-contract}
o_{i+1}\le B(o_i,n_i,\delta)\le o_i-\alpha s\,n_i
\le o_i-\alpha s\,n^*\le o_i-\tfrac{\alpha\beta}{2}\,o_i=c\,o_i .
\end{equation}

\emph{Step 3: counting.} The $o_i$ are nonincreasing, so the levels
with $o_i\ge T$ form a prefix; by \eqref{eq:depth-contract} its length
is at most $\ln(q/T)/\ln(1/c)+1$. Every level with $s<o_i<T$ has
$o_{i+1}\le o_i-s$ (the cap in \eqref{eq:Bdef}), so there are at most
$T/s$ of them, and one final level remains.
\end{proof}

\begin{remark}[Honest reading of the constants]\label{rem:depth-honest}
The bound \eqref{eq:depth-bound} is asymptotic in $q$ for \emph{fixed}
$(s,\alpha,\delta)$: the constants degrade as $\alpha\to1$, and the
proposition says nothing about the perfect table's $\alpha=1$ regime.
$T$ is very conservative: at $(8,0.9,2^{-10})$ it evaluates to about
$4772$, the bound \eqref{eq:depth-bound} at $q=10^5$ to $L\le626$,
and the cascade actually built at that configuration
(Remark~\ref{rem:free-tail}) has $14$ levels, the
inputs contracting by $0.19$--$0.27$ per level until
$o_i\approx10^3$. The single-bucket tail that follows is precisely
what Corollary~\ref{cor:shortcircuit} truncates.
\end{remark}

\begin{corollary}[Constant positive-lookup cost]\label{cor:probes}
Under the hypotheses of Proposition~\ref{prop:depth} with $q\ge T$,
condition on the event of Theorem~\ref{thm:efficient}(b). Then a key
chosen uniformly among the $q$ stored keys is found in $O(1)$ expected
probes, the constant depending only on $(s,\alpha,\delta)$.
\end{corollary}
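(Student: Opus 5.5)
On the event of Theorem~\ref{thm:efficient}(b) every key is stored, and a stored key residing at level $i$ is found after exactly $i$ probes. I will therefore write the expected cost for a uniform stored key as
\[
\frac1q\sum_{i=1}^{L} i\,S_i ,
\]
where $S_i$ is the number of keys stored at level $i$. Two facts drive the bound. First, $S_i\le s n_i$ deterministically. Second, every key stored at level $i\ge2$ overflowed level $i-1$. On the event of (b) the overflow of level $i-1$ is at most $o_i$, so $\sum_{j\ge i}S_j\le o_i$ for every $i\ge 2$.

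By the Abel summation identity $\sum_i iS_i=\sum_{i\ge1}\sum_{j\ge i}S_j$, the cost is at most
\[
\frac1q\Bigl(q+\sum_{i=2}^{L}o_i\Bigr)=1+\frac1q\sum_{i=2}^{L}o_i .
\]
The whole task is thus to bound the planned residues $o_i$ deterministically. This needs only the construction and Proposition~\ref{prop:depth}, not the randomness.

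Split the levels at the threshold $T$ of \eqref{eq:depth-constants}.

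\textbf{Head.} Proposition~\ref{prop:depth} gives $o_{i+1}\le c\,o_i$ while $o_i\ge T$, and $o_1=q$. Hence the residues in the head satisfy $o_i\le c^{i-1}q$, and their sum is at most $q/(1-c)=2q/(\alpha\beta)$.

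\textbf{Tail.} For levels with $o_i<T$, each residue is below $T$, and by the proposition there are at most $T/s+2$ such levels. Their contribution is at most $T(T/s+2)$.

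\textbf{Total.} Dividing by $q$ and using $q\ge T$ gives an expected cost of at most
\[
1+\frac{2}{\alpha\beta}+\frac{T}{s}+2 ,
\]
which depends only on $(s,\alpha,\delta)$.

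The step that needs most care is the crude tail term. Dividing $T\cdot(T/s+2)$ by $q$ requires $q\ge T$, which the hypothesis supplies, so it contributes at most $T/s+2$. Alternatively, one can note that the tail residues decrease by at least $s$ per level, so their sum is at most $T^2/(2s)+T$, which gives the same kind of constant.

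I should also double-check that the first residue is handled correctly. The bound $\sum_{j\ge2}S_j\le o_2$ uses the overflow of level $1$, which is at most $o_2$ on the event of (b); by the same event, the bound at each later level uses the overflow of the preceding level. The truncated variant of Corollary~\ref{cor:shortcircuit} is not covered by the statement, so I ignore it.
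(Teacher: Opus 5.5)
Your proposal is correct and is essentially the paper's proof. The paper likewise uses Abel summation to rewrite the cost as $\frac1q\sum_i a_i$, with $a_i$ the keys reaching level $i$ (your $\sum_{j\ge i}S_j$), and bounds $a_i\le o_i$ on the event of Theorem~\ref{thm:efficient}(b). It then sums a geometric head to $1/(1-c)=2/(\alpha\beta)$ and charges at most $T/s+1$ tail levels, each below $T$, using $q\ge T$.

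Your constant is slightly looser than the paper's $\frac{1}{1-c}+\frac{T}{s}+1$, which is harmless. The extra $1$ comes from splitting off $i=1$ while also counting $o_1$ in the head sum. You also count $T/s+2$ tail levels instead of $T/s+1$, since the levels with $s<o_i<T$ number at most $T/s$, plus the final one.
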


\begin{proof}
Let $a_i$ be the number of keys reaching level $i$, so $a_1=q$,
$a_{L+1}=0$, and level $i$ stores $a_i-a_{i+1}$ keys at $i$ probes
each; by Abel summation the expected cost is
$\frac1q\sum_{i=1}^{L}i\,(a_i-a_{i+1})=\frac1q\sum_{i=1}^{L}a_i$. On
the conditioning event every level's overflow is at most the planned
$o_{i+1}$, so $a_i\le o_i$ for all $i$ by induction. As in the proof
of Proposition~\ref{prop:depth}, the levels with $o_i\ge T$ form a
prefix on which $o_i\le c^{\,i-1}q$, and at most $T/s+1$ levels
follow, each with $o_i<T$, so
\begin{equation}\label{eq:probes-bound}
\frac1q\sum_{i=1}^{L}a_i
\;\le\;\frac1q\sum_{i=1}^{L}o_i
\;\le\;\frac{1}{1-c}+\Bigl(\frac{T}{s}+1\Bigr)\frac{T}{q}
\;\le\;\frac{1}{1-c}+\frac{T}{s}+1.
\end{equation}
The last inequality uses $q\ge T$, and the resulting constant depends
only on $(s,\alpha,\delta)$ through \eqref{eq:depth-constants}.
\end{proof}

\section{An improved multicollision bound}\label{app:multicollision}

Let $E_s$ be the event that some bucket receives at least $s$ of $q$
keys hashed uniformly into $n$ buckets. Suzuki, Tonien, Kurosawa and
Toyota~\cite{suzuki2008birthday} disprove the
folklore belief that $q=n^{(s-1)/s}$ samples produce an $s$-collision
with constant probability: at that $q$ they show $\PR[E_s]\le1/s!$,
and a nearly matching lower bound $1/s!-1/(2(s!)^2)$ when $s$ is
small enough to be negligible against $n^{(s-1)/s}$. Hence roughly
$(s!)^{1/s}\,n^{(s-1)/s}$ samples are needed for even odds. The engine
behind bounds of this shape is the union bound over all $s$-subsets of
keys, $\PR[E_s]\le\binom{q}{s}n^{1-s}$. Our
Lemma~\ref{lem:counts} yields a bound that is never
worse and strictly dominates in the loaded regime.

\begin{proposition}[Improved multicollision bound]\label{prop:subset}
For all $1\le s\le q$ and $n\ge1$,
\begin{equation}\label{eq:subset-chain}
\PR[E_s]\;\le\; n\,\PR\!\left[\Bin\!\left(q,\tfrac1n\right)\ge s\right]
\;\le\;\binom{q}{s}\,n^{1-s}.
\end{equation}
\end{proposition}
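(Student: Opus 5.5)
The plan is to prove the two inequalities of \eqref{eq:subset-chain} separately.

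\emph{First inequality.} Write $E_s$ as the event that $\sum_b\ind{X_b\ge s}\ge1$. This count is a nonnegative integer, so Markov's inequality gives
\[
\PR[E_s]\le\E\Bigl[\sum_b\ind{X_b\ge s}\Bigr].
\]
By \eqref{eq:count-ge} of Lemma~\ref{lem:counts}, the right side equals $n\,\PR[\Bin(q,\tfrac1n)\ge s]$. Equivalently, this is a union bound over the $n$ buckets, each of which has marginal law $\Bin(q,\tfrac1n)$ by Proposition~\ref{prop:occupancy}.

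\emph{Second inequality.} It suffices to show the per-bucket estimate
\[
\PR\bigl[\Bin(q,p)\ge s\bigr]\le\binom{q}{s}p^{s},\qquad p=\tfrac1n,
\]
and then multiply by $n$ to get $\binom{q}{s}n^{1-s}$. Take a fixed bucket $b$. If $X_b\ge s$, then there exists at least one $s$-subset of the $q$ keys all of whose members land in $b$. A union bound over the $\binom{q}{s}$ subsets bounds this by $\binom{q}{s}p^s$, because each fixed subset lands entirely in $b$ with probability $p^s$ by independence. An alternative route gives the same conclusion: the number of such subsets is $\binom{X_b}{s}$, which is at least $1$ whenever $X_b\ge s$, and $\E\binom{X_b}{s}=\binom{q}{s}p^s$ by linearity of expectation.

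\emph{Edge cases and regimes.} Both steps hold for all $1\le s\le q$ and $n\ge1$. For $n=1$ every quantity equals $1$, so the chain is tight. The plan also confirms the claim that the bound is ``never worse'': the chain itself shows the middle term never exceeds the subset bound. The ``strictly dominates in the loaded regime'' statement is a qualitative remark rather than part of the displayed inequality.

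There is no real obstacle here. The only point needing care is the second inequality, and the choice is to argue it pathwise per bucket via the existence of an $s$-subset, rather than through a direct comparison of binomial tail sums.
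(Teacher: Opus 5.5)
Your proof is correct and matches the paper's: Markov plus Lemma~\ref{lem:counts} for the first inequality, and for the second the pointwise bound $\ind{X\ge s}\le\binom{X}{s}$ with $\E\binom{X}{s}=\binom{q}{s}n^{-s}$ (the paper evaluates this expectation via $\binom{x}{s}\binom{q}{x}=\binom{q}{s}\binom{q-s}{x-s}$ and the binomial theorem, then reads it as exactly your subset count). One slip in your edge-case remark: at $n=1$ the right-hand side is $\binom{q}{s}$, not $1$, so only the first inequality is tight there (the second is tight only when $s=q$).
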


\phantomsection\label{pf:subset}
\begin{proof}[Proof of Proposition~\ref{prop:subset}]
Let $N_s:=\sum_b\ind{X_b\ge s}$, with $X_b$ the bucket loads of
Proposition~\ref{prop:occupancy}, so $E_s=\{N_s\ge1\}$. Markov's inequality
and Lemma~\ref{lem:counts} give the first inequality:
$\PR[N_s\ge1]\le\E[N_s]=n\,\PR[\Bin(q,\tfrac1n)\ge s]$.
For the second, note the pointwise bound
$\ind{x\ge s}\le\binom{x}{s}$ for integers $x\ge0$ (the right side is
$0$ for $x<s$ and at least $1$ otherwise). Hence, with
$X\sim\Bin(q,\tfrac1n)$,
\begin{equation}\label{eq:factorial-moment}
\PR[X\ge s]\;\le\;\E\binom{X}{s}
=\sum_{x=s}^{q}\binom{x}{s}\binom{q}{x}n^{-x}\bigl(1-\tfrac1n\bigr)^{q-x}
=\binom{q}{s}n^{-s}\sum_{y=0}^{q-s}\binom{q-s}{y}n^{-y}\bigl(1-\tfrac1n\bigr)^{q-s-y}
=\binom{q}{s}n^{-s},
\end{equation}
using $\binom{x}{s}\binom{q}{x}=\binom{q}{s}\binom{q-s}{x-s}$ and the
binomial theorem. In words: $\binom{X}{s}$ counts the $s$-subsets of
keys that landed together in the bucket, each of the $\binom{q}{s}$
candidate subsets does so with probability $n^{-s}$, and
\eqref{eq:factorial-moment} is that count in expectation.
Multiplying by $n$ gives
$n\,\PR[X\ge s]\le\binom{q}{s}n^{1-s}$, the second inequality of
\eqref{eq:subset-chain}.
\end{proof}

\begin{remark}[When the improvement matters]\label{rem:subset-regimes}
The proof shows the subset bound is the
$s$-th factorial-moment relaxation \eqref{eq:factorial-moment} of our
binomial tail: the two agree to leading order as $n\to\infty$ with
$q,s$ fixed, so nothing is gained while $q\ll n$, and the gap opens
when the table is loaded. At $n=10$, $q=100$ the subset bound gives
$54$ for $s=20$ (vacuous) against our $0.020$, and $0.80$ against
$4.0\cdot10^{-4}$ at $s=24$, a factor of $2000$.
Figure~\ref{fig:suzuki} traces the gap across $q$ at $n=10$,
$s=20$. Both bounds
in \eqref{eq:subset-chain} are also expected counts of overloaded
buckets, so the comparison transfers to every use of
Lemma~\ref{lem:counts} in this paper.
\end{remark}

\begin{remark}[Poisson form]\label{rem:poisson-form}
For huge $q$ the binomial tail in \eqref{eq:subset-chain} is awkward
to evaluate directly. It can be replaced by the tail of
$\Pois(q/n)$ as an approximation: the two laws differ in total
variation by at most $\min(q/n^{2},\,1/n)$
\cite{lecam1960poisson,barbour1992poisson}.
\end{remark}

\begin{figure}[t]
\centering
\begin{tikzpicture}[baseline=(current bounding box.north)]
\begin{axis}[
    width=0.48\textwidth,
    height=4.6cm,
    xmode=log,
    ymode=log,
    % the q-range spans less than a decade, so default log ticks come out
    % as fractional powers (10^{1.6}, ...); fix readable integer ticks
    xtick={30,50,100,150},
    xticklabels={$30$,$50$,$100$,$150$},
    % explicit xmax: keeps the 150 tick inside the range and leaves room
    % so the last marker (q=148, blue) is not clipped at the edge
    xmax=160,
    % explicit ymin keeps the 1e-12 tick
    ymin=1e-12,
    ytick={1e0,1e-4,1e-8,1e-12},
    xlabel={$q$ (keys)},
    ylabel={$P(s\text{-collision})$ upper bound},
    legend style={font=\footnotesize, at={(0.5,-0.42)}, anchor=north, draw=none, fill=none, legend columns=1},
    tick label style={font=\footnotesize},
    label style={font=\footnotesize},
    axis lines=left,
]
\addplot+[mark=*, mark size=1pt, blue, thick] coordinates {
(30,1.105654e-12) (32,6.805577e-12) (35,7.257980e-11) (38,5.564498e-10) (41,3.308337e-09) (45,2.617650e-08) (48,1.025819e-07) (52,5.192593e-07) (57,3.039111e-06) (61,1.054273e-05) (67,5.418009e-05) (72,1.782092e-04) (78,6.269465e-04) (85,2.237807e-03) (92,6.715136e-03) (100,1.978561e-02) (107,4.479298e-02) (116,1.105294e-01) (126,2.562895e-01) (136,5.156355e-01) (148,1.000000e+00)
};
\addlegendentry{ours: $n\,P(\mathrm{Bin}(q,1/n)\geq s)$}
\addplot+[mark=square*, mark size=1pt, red!70!black, thick] coordinates {
(30,3.004501e-12) (32,2.257928e-11) (35,3.247943e-10) (38,3.357800e-09) (41,2.691289e-08) (45,3.169871e-07) (48,1.673568e-06) (52,1.259946e-05) (57,1.210270e-04) (61,6.236647e-04) (67,5.796380e-03) (72,3.120491e-02) (78,1.980225e-01) (85,1.000000e+00)
};
\addlegendentry{subset bound: $\binom{q}{s}/n^{s-1}$}
\end{axis}
\end{tikzpicture}
\caption{Our bound versus the $s$-subset bound, $s=20$, $n=10$,
log-log: the subset bound is vacuous by $q=85$, ours informative
through $q=147$.}
\label{fig:suzuki}
\end{figure}

\section{Deferred proofs}\label{app:proofs}

\phantomsection\label{pf:truncated}
\begin{proof}[Proof of Lemma~\ref{lem:truncated}]
Write $X=\sum_{j=1}^{q}I_j$ with $I_j$ the indicator that key $j$
lands in the bucket. By symmetry,
$\E[X\ind{X\ge s+1}]
=q\,\E[I_1\ind{X\ge s+1}]
=q\,\PR[I_1=1]\,\PR[X\ge s+1\mid I_1=1]$.
Given $I_1=1$ the occupancy is $1+\Bin(q-1,\tfrac1n)$ by
independence of the keys, so the conditional probability is
$\PR[\Bin(q-1,\tfrac1n)\ge s]$, and $\PR[I_1=1]=1/n$.
\end{proof}

\phantomsection\label{pf:meanoverflow}
\begin{proof}[Proof of Theorem~\ref{thm:meanoverflow}]
By linearity and Proposition~\ref{prop:occupancy},
$\E[O]=n\,\E\pos{X-s}$ with $X\sim\Bin(q,1/n)$. Since
$\pos{X-s}=X\ind{X\ge s+1}-s\,\ind{X\ge s+1}$,
Lemma~\ref{lem:truncated} gives
\begin{equation}\label{eq:mean-overflow-split}
\E\pos{X-s}
= \E\bigl[X\ind{X\ge s+1}\bigr] - s\,\PR[X\ge s+1]
= \tfrac{q}{n}\,\PR\!\left[\Bin(q-1,\tfrac1n)\ge s\right]
  - s\,\PR[X\ge s+1].
\end{equation}
Multiply by $n$.
\end{proof}

\phantomsection\label{pf:capmono}
\begin{proof}[Proof of Lemma~\ref{lem:capmono}]
(i) For $a,b\ge0$ one checks case by case that
$\pos{a-s}+\pos{b-s}\le\pos{a+b-s}$; by induction over buckets,
$O=\sum_b\pos{X_b-s}\le\pos{\sum_b X_b - s}=\pos{q-s}$. For $n=1$ the
single bucket holds $\min(q,s)$ keys and $O=\pos{q-s}$ exactly.
(ii) Couple the two experiments by placing the same $q$ balls and
revealing only the first $q'$ of them: adding one ball increases one
coordinate $X_b$ by $1$, and $O$ is nondecreasing in each coordinate,
so $O^{(q',n)}\le O^{(q'+1,n)}\le\dots\le O^{(q,n)}$ pointwise under
this coupling.
\end{proof}

\phantomsection\label{pf:na}
\begin{proof}[Proof of Lemma~\ref{lem:na}]
The multinomial occupancy vector $(X_1,\dots,X_n)$ is negatively
associated~\cite{joagdev1983negative}; balls-into-bins is the standard
example. Monotone functions acting on disjoint sets of coordinates
preserve negative association~\cite{dubhashi2009concentration}. For
\eqref{eq:stored-keys}: each bucket stores $\min(X_b,s)$
of the $q$ keys and the remaining $\pos{X_b-s}$ overflow, so
$O=\sum_b\pos{X_b-s}=q-\sum_b\min(X_b,s)$.
\end{proof}

\phantomsection\label{pf:tilt}
\begin{proof}[Proof of Lemma~\ref{lem:tilt}]
By Lemma~\ref{lem:na} the occupancy vector is negatively associated
and monotone functions of disjoint coordinates preserve this, so for
every $\theta>0$ the increasing summands $\pos{X_b-s}$ of $O$ satisfy
$\E[e^{\theta O}]\le\prod_b\E\bigl[e^{\theta\pos{X_b-s}}\bigr]
=\E\bigl[e^{\theta\pos{X-s}}\bigr]^{\,n}$.
Markov's inequality applied to $e^{\theta O}$ then gives
$\PR[O>o]\le e^{-\theta o}\,\E[e^{\theta O}]\le\delta$ whenever $o$
is at least the first quotient in \eqref{eq:chernoff-cert}. Hence
$\PR[O>o^*(n)+\varepsilon]\le\delta$ for every $\varepsilon>0$, and
letting $\varepsilon\downarrow0$ gives the claim. The bound for $W$
is identical with the nonincreasing summands $\pos{s-X_b}$.
\end{proof}

\phantomsection\label{pf:levelbound}
\begin{proof}[Proof of Proposition~\ref{prop:levelbound}]
$B$ is a deterministic quantity equal to one of the five candidates,
so it suffices that each candidate $b$ satisfies $\PR[O>b]\le\delta$.
For $B_1$ this is Lemma~\ref{lem:mcdiarmid} with
$t=\sqrt{(q/2)\ln(1/\delta)}$. For $B_2$ this is Markov's
inequality applied to $O\ge0$. For $B_3$ and $B_4$ this is
Lemma~\ref{lem:tilt}, the latter because $O>B_4$ is the event
$W>w^*(n)$ under the identity $O=q-sn+W$ of
\eqref{eq:stored-keys}. For $\pos{q-s}$ the
probability is $0$ by Lemma~\ref{lem:capmono}(i). Finally, for $n=1$
the occupancy is $X=q$ almost surely, so the infima in
\eqref{eq:chernoff-cert} evaluate to $o^*(1)=\pos{q-s}$ and
$w^*(1)=\pos{s-q}$, giving $B_3=B_4=\pos{q-s}$. By
Lemma~\ref{lem:capmono}(i) the overflow of a single bucket equals
$\pos{q-s}$, so $\E[O]=\pos{q-s}$ and the two remaining candidates,
$B_1=\pos{q-s}+\sqrt{(q/2)\ln(1/\delta)}$ and
$B_2=\pos{q-s}/\delta$, are at least $\pos{q-s}$ as well. The
minimum is therefore the cap, which the overflow attains.
\end{proof}

\phantomsection\label{pf:efficient}
\begin{proof}[Proof of Theorem~\ref{thm:efficient}]
(a) For $i<L$ the constraint in \eqref{eq:efficient-ni} and
$o_{i+1}\le B(o_i,n_i,\delta)$ give
$s\,n_i\le(o_i-o_{i+1})/\alpha$; summing the telescoping differences,
$s\sum_{i<L}n_i\le(o_1-o_L)/\alpha=(q-o_L)/\alpha$, and the final
level adds $s$.

(b) Let $O_i$ be the overflow of level $i$ and let
$G_i:=\{O_i\le o_{i+1}\}$ for $i<L$. Condition on any realization of
the randomness of levels $1,\dots,i-1$ lying in
$G_1\cap\dots\cap G_{i-1}$; under it, the number $o'$ of keys reaching
level $i$ is a fixed integer with $o'\le o_i$, and the level-$i$
placements are fresh and uniform. By Lemma~\ref{lem:capmono}(ii)
(stochastic monotonicity in the number of keys) and
Proposition~\ref{prop:levelbound},
\begin{equation}\label{eq:efficient-onelevel}
\PR\bigl[O_i>o_{i+1}\,\big|\,\cdots\bigr]
\;\le\;\PR\bigl[O^{(o_i,n_i)}>o_{i+1}\bigr]
\;=\;\PR\bigl[O^{(o_i,n_i)}>B(o_i,n_i,\delta)\bigr]
\;\le\;\delta,
\end{equation}
where $O^{(o_i,n_i)}$ denotes the overflow of level $i$ fed with exactly
$o_i$ keys, and the middle equality uses that $O^{(o_i,n_i)}$ is an
integer, so exceeding $\lfloor B\rfloor$ is the same as exceeding $B$.
Multiplying the conditional bounds exactly as in
\eqref{eq:perfect-union}, $\PR[\bigcap_{i<L}G_i]\ge(1-\delta)^{L-1}$. On
$\bigcap_{i<L}G_i$ the final level receives at most $o_L\le s$ keys
into a bucket with $s$ slots, so nothing cascades past it and every
key is stored.

(c) Divide $q$ by \eqref{eq:capacity-bound}, drop $o_L\ge0$, and use
$1/(1+x)\ge1-x$ together with $\alpha\le 1$.
\end{proof}

\section{Deletion churn: model and measurements}\label{app:churn}

Numbers here and in Section~\ref{sec:churn} come from our simulation
of the cascades of Section~\ref{sec:construction} and from the
implementation's churn tests: fill a table with $N$ random keys,
then run $20$ turnovers; a run halts at the first terminal overflow.
Unless stated otherwise, every figure in this appendix assumes the
worst case of a delete followed by an insert at every step at constant $m$; batched or
mixed workloads are milder for the reason given under
\emph{Choosing a strategy}.

\paragraph{The churn model.}
Under churn at $m$ keys, a bucket at level $i$ receives its share of
the flow reaching its level, each key staying a random time of mean
$m$ steps, and turns keys away while full: a loss system with $s$
servers. At offered load $a$ (arrivals per mean lifetime) the
occupancy is Poisson truncated at $s$ and the fraction of arrivals
turned away, the blocking probability of an $s$-server loss system,
is
\begin{equation}\label{eq:loss}
P_s(a) \;=\; \frac{a^s/s!}{\sum_{k=0}^{s} a^k/k!}\,.
\end{equation}
Two facts about churn are exact. Write $f_i$ for the fraction of
full buckets at level $i$ at a given step. First, because the
bucket indices at the $L$ levels are independent, the insert of
that step reaches level $i$ with probability $\rho_i=\prod_{j<i}f_j$
and overflows the last level with probability $\prod_{j\le L}f_j$.
Second, once the
history of the $f_j$ is fixed, each stored key sits at level $i$ or
deeper independently, with the $\rho_i$ in force at its arrival; the
count there is $\Bin(m,\rho_i)$ under a constant flow, and any
excess variance at a deep level is a fluctuation of the flow feeding
it, not of its buckets. The mean field sets these fluctuations to
zero: $\rho_i$ becomes its stationary mean, the fraction of inserts
that reach level $i$, and $y_i=m\rho_i/n_i$ is the load offered to
one of the $n_i$ buckets of level $i$, in arrivals per mean
lifetime. Each bucket is then the loss system above, blocking the
fraction \eqref{eq:loss}, so
\begin{equation}\label{eq:churn-mf}
\rho_{i+1}=\rho_iP_s(y_i),\qquad \rho_1=1.
\end{equation}
We quantify failure with the \emph{failure rate} $D$, the probability
per step that the insert overflows the last level, which is a
terminal overflow. By
the first fact it is the average of $\prod_{j\le L}f_j$ over steps,
and in the model it is the flow out of level $L$, $\rho_{L+1}$ in
the mean field. The values of $D$ quoted in this appendix, including
those behind Table~\ref{tab:churn-floor}, come from
\eqref{eq:churn-mf} with the fluctuation correction described under
\emph{Fluctuations and validation}, which matters only for shapes
with thin tails (few buckets in the last levels). We call a cascade safe when
$D<10^{-12}$ per step. At any finite load a bucket is full only part
of the time, so a table with no slot to spare cannot be sustained.
The perfect table
of Section~\ref{sec:perfect} with every slot occupied is the
extreme case: it accepts the first replacement insert only if it
hashes into the freed slot's bucket. The freed slot sits at level
$j$ with probability $s\,n_j/q$, and the new key, blocked by the
full buckets above, reaches that bucket with probability $1/n_j$;
summing over the $L$ levels, the insert succeeds with probability
$Ls/q$ and fails with probability $D=1-Ls/q$, $0.975$ for the $q=7680$, $s=16$, $L=12$ cascade of
Remark~\ref{rem:perfect-example} ($0.964$ in $500$ trials). The
$2^{20}$ shape $(68907,6399,355,1)$ sized directly for slot load
$0.866$ at $s=16$ fails after about $6.7$ thousand steps, $0.006$
of a turnover.

\paragraph{Choosing a strategy.}
Which option of Section~\ref{sec:churn} to use depends on how many
deletes and inserts are expected relative to the table size, and on
whether they come in batches or continuously. Batch churn, a block
of deletes followed by a block of inserts, is the milder regime:
the inserts meet a level $1$ that the deletes have partly emptied,
which blocks fewer of them, so less flow reaches the deep levels
$2$ to $L$, the tail in the sense of Section~\ref{sec:depth} (level
$1$ being the head). So we
take the failure rate $D$ of the model above as an upper bound for
batch churn,
not an estimate; in our batch scans (one seed per shape) no shape
predicted safe failed, while several predicted unsafe survived.
Light or batch churn is well served by hoisting alone, with no resizing and no extra memory; heavy continuous churn calls for load-factor overload, whose shape absorbs the churn with no pass at all; growth suits medium to high churn or an unknown volume, since no stored key ever moves.

\paragraph{Growing and hoisting.}
Growing can also rehash into a larger table by an increment of our
choosing, as SwissTable does at its load threshold but later on
average, since the rehash waits for the tail to fill; that variant
loses stability during the rehash. The hoist pass re-inserts the keys of
levels $2$ to $L$ into level $1$ in an order independent of their
hashes, giving a one-shot placement of the current key set, and
retries the failed insert. In the $10^5$-key example (the cascade
$(7334,4962,1401,65,1)$ of Section~\ref{sec:quantiles}, $s=8$,
realized load $0.908$) it re-places
the $43\%$ of keys below level $1$ at about $2.1$ probes each and
moves a quarter of them; at $m=q$ the next overflow follows after
about $0.025q$ steps ($36$ probes per step amortized), at $0.95q$
and $0.9q$ after $0.16q$ and about $0.5q$ steps (under $5$ and under
$1$ probe per step). It recurs because a freshly packed level $1$
has about $47\%$ of its buckets full and blocks that share of
newcomers, against $16\%$ at the churn equilibrium: repacking
triples the inflow into the tail.

\paragraph{Load-factor overload.}
Why does a table sized for its load fail under churn, and why does
sizing it for a higher load help? The sizing recipe of
Section~\ref{sec:quantiles}, which the implementation follows,
sizes level $1$ for the target load and hands down the residue that
target leaves; each deeper level hands down its own overflow
estimate. A table sized directly for $\alpha$ therefore puts nearly
every bucket in level $1$ and closes with a two- or three-level
tail. Under churn that tail receives far more
flow than it was sized for.
In the $2^{20}$ shape $(68907,6399,355,1)$ sized directly for its
load, level $1$ blocks $P_{16}(15.2)=15.1\%$ of inserts under
churn where the one-shot residue it was sized for is $7.9\%$; each
level below is overloaded in turn, the last bucket is offered a
load of $5.9\times10^{4}$, and the model predicts
$D=5.6\times10^{-2}$ per step. The $21$ shapes sized directly for
their targets, $0.84$ down to $0.67$, all failed within $0.03$
turnovers at $2^{20}$ keys and $0.15$ at $20$ thousand.

The remedy keeps the slot count and moves buckets down. We run the
recipe for $q(1+x)$ keys at a target $\alpha'$ above the intended
load $\alpha$ and hold $q$ keys. Here $\alpha'$ is the slot-load
target of Definition~\ref{def:efficient}, which the recipe realizes
to within $0.3\%$, so the table gets $q(1+x)/\alpha'$ slots, and
holding it to the $q/\alpha$ slots of a direct build fixes
$1+x=\alpha'/\alpha$. Hence $x$ is not a free margin but follows
from $\alpha'$ (at $\alpha=0.80$, $\alpha'=0.934$ gives $x=0.17$),
and $\alpha'$ alone sets how the slots are distributed, through the
residue mechanism above. At $2^{20}$ keys, $s=16$ and $x=0.18$, a
percent above the equal-slot value, it yields
$(n_i)=(75837,9389,1023,1)$ at $\alpha'=0.90$
but $(69511,11909,1311,98,1)$ at $\alpha'=0.934$, about $27\%$ more
buckets at levels $2$ and $3$ plus a further level; the latter has
$1325280$ slots against $1310720$ for exact load $0.80$, one percent
more, and a realized load of $0.791$. The mean offered loads explain
why the second shape survives: they decay geometrically down the
cascade, $y_i=15.1,13.0,10.0,3.0$, so each level absorbs most of
what reaches it and almost nothing arrives at the last bucket (the
model predicts $D=3.3\times10^{-23}$). The margin is a knife edge: with
exactly $q/\alpha$ slots the target $0.93$ is unsafe (every shape in
the window has a predicted failure rate of $2\times10^{-9}$ or more), and the
lowest target at which every shape with slots in
$[q/\alpha,\,1.01\,q/\alpha]$ is safe is $\alpha'=0.94$, giving
$(67866,12434,1485,134,1)$ with exactly $1310720$ slots. Below a
floor $\alpha'_{\min}$ the deep levels get too few buckets, the load
on the last level explodes as in the directly sized shape, and the
table fails within a fraction of a turnover.

Table~\ref{tab:churn-floor} gives this floor per $\alpha$ and $s$
for $q=2^{20}$ keys, found as the $0.94$ above, so it already allows for the knife edge.
For each pair we
step $\alpha'$ through $0.70,0.71,\dots,0.99$. At each value we
sweep $x$: starting from the smallest key count $q(1+x)$ whose shape
reaches $q/\alpha$ slots, we raise it in steps of $0.0003\,q$ until
the slot count passes $1.01\,q/\alpha$, which yields $26$ to $50$
distinct shapes, and we score each with the model at $m=q$ keys.
The floor $\alpha'_{\min}$ is the lowest grid value at which every
one of these shapes is safe, and at $2^{20}$ keys every grid value
above it is safe as well. To use the
table, a builder takes $\alpha'=\alpha'_{\min}$ (for $\alpha$
between columns, the column above), runs the recipe for $q(1+x)$
keys with $x=\alpha'/\alpha-1$, or up to a percent more so that the
shape has at least $q/\alpha$ slots, and scores the resulting shape,
raising $\alpha'$ by $0.01$ while it scores unsafe; if no target up
to $0.99$ passes, or the cell reads ``none'', the builder lowers the
intended load or picks another option of Section~\ref{sec:churn}.
Scoring is needed for two reasons. Smaller
tables ($\le 2^{16}$ keys) move some floors by a step or two either
way or leave no safe target, and show pockets above the floor where
shapes sit within a factor $20$ of the safety line (at $s=32$ and
intended load $0.70$, at $0.85$ to $0.89$ and again at $0.97$). And safety is
not monotone in $\alpha'$ in general, since the construction can
leave one bucket behind a large level, so a raised target needs
checking too.
\begin{center}
\footnotesize
\begin{tabular}{lcccccccccc}
intended load $\alpha$ & $0.90$ & $0.88$ & $0.86$ & $0.84$ & $0.82$ & $0.80$ & $0.77$ & $0.74$ & $0.70$ & $0.67$\\
\hline
$\alpha'_{\min}$, $s=8$ & none & none & none & $0.99$ & $0.98$ & $0.97$ & $0.95^\dagger$ & $0.92^\dagger$ & $0.89^\dagger$ & $0.86^\dagger$\\
$\alpha'_{\min}$, $s=16$ & none & $0.99$ & $0.98$ & $0.97^\dagger$ & $0.96$ & $0.94^\dagger$ & $0.92^\dagger$ & $0.89^\dagger$ & $0.85^\dagger$ & $0.81$\\
$\alpha'_{\min}$, $s=32$ & $0.99$ & $0.98$ & $0.97$ & $0.95$ & $0.93$ & $0.91$ & $0.88$ & $0.85$ & $0.81$ & $0.77$\\
\end{tabular}
\captionof{table}{Lowest churn-safe slot-load target
$\alpha'_{\min}$ per intended load $\alpha$ and bucket size $s$ at
$2^{20}$ keys: every shape with $q/\alpha$ to $1.01\,q/\alpha$
slots has $D<10^{-12}$; the overload is $x=\alpha'/\alpha-1$.
``None'' means no safe target up to $0.99$. The constructor takes a
physical load $\varphi'$: $\alpha'=9\varphi'/8$ in the filtered arm
with $4$-byte keys and values, $\alpha'\approx\varphi'$ in the plain
arm. Daggers mark measured cells; the others are
predictions.}\label{tab:churn-floor}
\end{center}
In the churn tests, the $17$ overload shapes behind the table's
daggers ($s=16$ at intended loads $0.84$, $0.80$, $0.77$, $0.74$,
$0.70$; $s=8$ at $0.77$, $0.74$, $0.70$, and at $0.67$ with $20$
thousand keys only) ran at an
$\alpha'$ within one step of the tabulated floor, below it in five
cells and at or above it in four, with about $1\%$ more slots than
$q/\alpha$. They saw no failure in $4000$ seeds at $20$ thousand keys
($1.6\times10^9$ steps per shape, bound $1.87\times10^{-9}$ per
step) or in $100$ seeds at $2^{20}$ ($2.1\times10^9$ steps, bound
$1.43\times10^{-9}$); mean levels were within $0.001$ of the model,
and the single-bucket last level never held more than $2$ keys. At
target $0.80$, $s=16$, $2^{20}$ keys, a live key sits at mean level
$1.160$ under churn, and a hit costs one probe per level, so
positive lookups average $1.16$ probes, $7\%$ more than the $1.08$
of the shape $(68907,6399,355,1)$ sized directly for slot load
$0.866$ after a one-shot fill (mean level $1.082$). Space is
unchanged: an overload shape holds the same slots as a direct build
for the same intended load, within about a percent.

\paragraph{Fluctuations and validation.}
The mean field \eqref{eq:churn-mf} reproduces the mean level of a
stored key to within $0.001$
but underestimates the failures of shapes with thin tails by as much as
thirteen orders of magnitude: at $m=20000$ the
test shape $(1335,191,29,6,1)$ is predicted to fail
$4.1\times10^{-20}$ times per step where $3.9\times10^{-7}$ were
measured. The gap arises because the load offered to a deep level
is itself random and $aP_s(a)$ is convex below
$s$: a fluctuating load overflows more than its mean would. Two
mechanisms feed the randomness. First, the overflow of a loss
system is burstier than its input: writing the \emph{peakedness} of
a count for its variance-to-mean ratio ($1$ for Poisson),
Wilkinson~\cite{wilkinson1956toll} gives the overflow of a
Poisson-fed $s$-server loss system at load $a$ the peakedness
$z^{W}(a)=1-M+a/(s+1-a+M)$ with $M=aP_s(a)$.
Second, a relative fluctuation of the offered load is amplified by
the elasticity $\eta_i=s-y_i(1-P_s(y_i))$ of the
blocking probability, about $3$ at level $1$ and about $15$ at a
nearly empty last level. We follow the squared coefficient of
variation $v_i$ of the lifetime-smoothed flow into level $i$ with
\begin{equation}\label{eq:churn-var}
v_{i+1}=\frac{z^{W}(y_i)-1}{m\rho_{i+1}}+\Bigl(1+\tfrac{\eta_i}{\sqrt2}\Bigr)^{2}v_i ,
\end{equation}
where the factor $1/\sqrt2$ records that a fluctuation already
smoothed over one key lifetime keeps half its variance under a
second smoothing. Level $1$ is pinned at $m$ keys, so the flow into
level $2$ is under-dispersed; a linear-response estimate gives its
peakedness $z_2=1/(1+b_1)$ with
$b_1=y_1P_s(y_1)\,\eta_1/\operatorname{Var}[\Pois(y_1)\text{ truncated at }s]$,
and the recursion starts at $v_2=(z_2-1)/(m\rho_2)$. Treating the
load of level $i$ as Gamma distributed with mean $y_i$ and squared
coefficient of variation $v_i$, and replacing
$\rho_iP_s(y_i)$ in \eqref{eq:churn-mf} by the
Gamma average of $aP_s(a)$, closes the system, and $D$ is the flow
out of level $L$ of the closed system. The two facts of
\emph{The churn model} and the loss formula \eqref{eq:loss} are exact; \eqref{eq:churn-var}, the
level-$1$ feedback and the Gamma closure are heuristics with the
approximations just stated. In the churn tests at $20$ thousand keys,
six test shapes with thin tails (one to seven buckets in the last
level) failed; the predicted $D$ is $2.1$ to $7.6$ times the measured
rate for each, and the predicted peakedness of the tail counts
tracks the measured rise from $0.4$ to $4.7$ down the cascade. All
$17$ overload shapes are predicted safe, the closest at
$1.6\times10^{-13}$. The mean field alone places $\alpha'_{\min}$
within $0.01$ of the full model, so the floor is a mean-field
effect and the fluctuation terms decide only the marginal shapes
with thin tails. The model is pessimistic by the factor just
quoted, it does not model the fill transient, and the Gamma tail is
an assumption that carries $D$ when the last level is nearly empty;
its combination with the depth reduction of Section~\ref{sec:depth}
is open.

\section{Comparison with funnel hashing}\label{app:funnel}

Funnel hashing is the second of two constructions in Farach-Colton,
Krapivin and Kuszmaul~\cite{farachcolton2025funnel}; the first,
elastic hashing, halves its arrays too but probes each one uniformly
and non-greedily, and is the more distant relative.
The scheme is stated for an array of $n$ slots receiving
$n-\lfloor\delta n\rfloor$ keys, so their $\delta$ is the vacancy
and not a failure probability; their failure probabilities are
$1/\mathrm{poly}(n)$ and $n^{-\omega(1)}$ throughout. We state the
comparison in our notation: their $n$ is our capacity
$C=s\sum_i n_i$, their $\delta$ is our $1-\alpha$, their bucket size
$\beta$ is our $s$, and their count of arrays before the special
array is our $L-1$. With that translation the skeletons coincide.
Their array is split into $L-1=\lceil4\log(1/(1-\alpha))+10\rceil$
levels of buckets of $s=\lceil2\log(1/(1-\alpha))\rceil$ slots,
with $n_{i+1}/n_i\approx\tfrac34$, and a key is inserted by hashing
to one bucket of the first level, scanning it for a free slot, and
on failure to one bucket of the second, and so on, a lookup
following the same path: this is the cascade of
Section~\ref{sec:model}. The similarity ends at the tail: keys that
fail in every level enter a special array of between
$\lceil(1-\alpha)C/2\rceil$ and $\lfloor3(1-\alpha)C/4\rfloor$
slots, itself split into a uniform-probing half that gives up after
$\log\log C$ probes and a two-choice half with buckets of
$2\log\log C$ slots, whereas the multitable ends in a single
bucket. Moreover, their scheme doesn't include deletions while multitable manages them, including while keeping stability.

\paragraph{Parameters and depth.}
Funnel hashing has two free parameters, $C$ and $\alpha$; bucket
size, level count and shrink ratio follow from the proof, with its
constants: at $1-\alpha=0.1$ the formulas give $s=7$ and $L-1=24$
levels before the special array, at $1-\alpha=2^{-5}$ they give
$s=10$ and $L-1=30$, and at $1-\alpha=2^{-10}$ they give $s=20$
and $L-1=50$ (logarithms to base $2$). The multitable takes
the bucket size $s$ from the hardware's access unit and the workload's
hit-miss trade (Sections~\ref{sec:intuition}
and~\ref{sec:implementation}), the load factor $\alpha$ and the
per-level budget $\delta$ from the application, and the overfill
factor $r$ from the workload mix (Section~\ref{sec:depth}), and
sizes every level from the tail of its own occupancy law, taking the
largest bucket count at which the $(1-\delta)$-quantile bound still
meets the target load \eqref{eq:efficient-ni}, in place of a fixed
ratio between consecutive levels. Progressive overfill, the
truncation of Corollary~\ref{cor:shortcircuit} and the
near-exact quantile of Section~\ref{sec:quantiles} shorten the
cascade further. Sized by the near-exact quantile for $q=10^5$ keys at
$\delta=2^{-10}$ and $r=1.1$, with the funnel's bucket size, the
multitable needs $5$ levels at $1-\alpha=0.1$ and $s=7$ (realized
load $0.906$), against the funnel's $24$ plus the special array,
and $6$ levels at $1-\alpha=2^{-5}$ and $s=10$ (realized load
$0.971$), against $30$ plus the special array. The perfect table of
Remark~\ref{rem:perfect-example} stores $64000$ keys at load exactly
$1$ in $9$ levels, a load outside the funnel theorem's range, which
needs $\alpha<1$. The depths scale differently. The funnel
level count depends on $1-\alpha$ alone, because the special array
absorbs whatever remains after a fixed number of levels, whereas the
multitable cascade, which must shrink to a single bucket, has
$O(\log q)$ levels for a fixed configuration
(Proposition~\ref{prop:depth}); the comparison above is one of
practical sizes, not of asymptotics.

\paragraph{Analysis.}
The analyses take different routes to similar quantities. Their
Lemma~5 bounds the slots left unfilled in a level after twice its
size in insertion attempts, by a Chernoff bound on the attempts each
bucket receives and McDiarmid's inequality on the count of unfilled
buckets, and their Lemma~6 deduces by counting that fewer than
$(1-\alpha)C/8$ keys reach the special array; the oversampling
factor $2$ and the unfilled fraction $(1-\alpha)/64$ are fixed by
the argument.
The multitable starts from the exact occupancy law $\Bin(q,1/n)$ of
a bucket, derives the exact mean overflow
(Theorem~\ref{thm:meanoverflow}), bounds its upper tail four ways
(Proposition~\ref{prop:levelbound}), and sizes each level at the
quantile bound so obtained. The resulting guarantee is finite and
explicit, $(1-\delta)^{L-1}$ per build
(Theorems~\ref{thm:perfect} and~\ref{thm:efficient}), where theirs
is asymptotic in $C$. Two products of this route have no counterpart
in funnel hashing: the near-exact quantile $\widehat{Q}$ of
Section~\ref{sec:quantiles}, within $2$ keys of the simulated
quantile where the proven envelope runs more than twice as high at
low load (Figure~\ref{fig:approx}), and the multicollision bound of
Appendix~\ref{app:multicollision}, which sharpens the classical
$s$-subset bound by orders of magnitude once keys outnumber buckets.

\paragraph{Purpose and results.}
Funnel hashing settles the worst-case expected probe complexity of
greedy open addressing without reordering at
$\Theta(\log^2(1/(1-\alpha)))$, with matching lower bounds, disproving
a conjecture of Yao, and reports no implementation or measurement.
The multitable's figure of merit is the physical load factor,
payload bytes over allocated bytes, which charges the metadata that
slot-load accounting ignores; the construction is implemented
(Section~\ref{sec:implementation}) and, at equal physical memory,
measured ahead of hashbrown in all $84$ benchmarked configurations
(Section~\ref{sec:performance}). The
framed variant of Section~\ref{sec:variations}, which keeps every
lookup inside one region of a paged or sharded medium, has no
analogue in the funnel construction either.

\end{document}